\definecolor{gray}{gray}{.75}
\definecolor{gray2}{gray}{.50}
\newcommand{\dom}{\mathcal{D}}
\newcommand{\N}{\mathbb{N}}
\newcommand{\R}{\mathbb{R}}
\newcommand{\C}{\mathbb{C}}
\newcommand{\A}{\mathrm{\alpha}}
\newcommand{\w}{\mathrm{\omega}}
\newcommand{\la}{\mathrm{\lambda}}
\newtheorem{thm}{Theorem}[section]
\newtheorem{cor}[thm]{Corollary}
\newtheorem{remark}[thm]{Remark}
\newtheorem{prop}[thm]{Proposition}
\newtheorem{lemma}[thm]{Lemma}
\newtheorem{defn}[thm]{Definition}
\numberwithin{equation}{section}
\begin{document}

\title[Functional determinants]
{Functional Determinants for Regular-Singular Laplace-type Operators}
\author{Boris Vertman}
\address{University of Bonn \\ Department of Mathematics \\
Beringstr. 6\\ 53115 Bonn\\ Germany}
\email{vertman@math.uni-bonn.de}

\thanks{2000 Mathematics Subject Classification.
Primary: 58J50. Secondary: 35P05.}

\begin{abstract}
{We discuss a specific class of regular-singular Laplace-type operators with matrix coefficients. Their zeta determinants were studied by K. Kirsten, P. Loya and J. Park on the basis of the Contour integral method, with general boundary conditions at the singularity and Dirichlet boundary conditions at the regular boundary. We complete the arguments of Kirsten, Loya and Park by explicitly verifying that the Contour integral method indeed applies in the regular singular setup. Further we extend the zeta determinant computations to generalized Neumann boundary conditions at the regular boundary and apply our results to Laplacians on a bounded generalized cone with relative ideal boundary conditions.}
\end{abstract}

\maketitle

\pagestyle{myheadings}
\markboth{\textsc{Functional Determinants}}{\textsc{Boris Vertman}} 

\section{Introduction}\
\\[-3mm] The regular-singular Laplace-type operators arise naturally in the context of Riemannian manifolds with conical singularities. The analysis and the geometry of spaces with conical singularities were developped in the classical works of J. Cheeger in [Ch1] and [Ch2]. This setup is modelled by a bounded generalized cone $M=(0,R]\times N, R>0$ over a closed Riemannian manifold $(N,g^N)$ with the Riemannian metric
$$g^M=dx^2 \oplus x^2g^N.$$
In Section \ref{model-section} we discuss closed extensions of regular-singular model operators over $C^{\infty}_0(0,R)$ and self-adjoint extensions of the associated regular-singular model Laplacians. The explicit identification of the relevant domains is used later in the computation of functional determinants. The arguments and results of this section are well-known, however we give a balanced overview and adapt the presentation to applications. 
\\[3mm] The presented calculations go back to J. Br\"{u}ning, R.T. Seeley in [BS] and J. Cheeger in [Ch1] and [Ch2]. For further reference see mainly [W], [BS] and [KLP1], but also [C] and [M].
\\[3mm] In Section \ref{funct-Det} we compute zeta-determinants of regular-singular model Laplacians with matrix coefficients. We consider the case of operators in the limit circle case at the singular boundary $x=0$ and pose there general boundary conditions. This setup has been addressed by K. Kirsten, P. Loya and J. Park in [KLP1] with general boundary conditions at the cone singularity but only with Dirichlet boundary conditions at the regular boundary $x=R$.
\\[3mm] The argumentation of Kirsten, Loya and Park in the articles [KLP1] and [KLP2] is based on the Contour integral method, which gives a specific integral representation of the zeta-function. A priori the Contour integral method need not to apply in the regular-singular setup and is only formally a consequence of the Argument Principle.
\\[3mm] The essential result of Section \ref{funct-Det} is the proof that Contour integral method indeed applies in the regular-singular setup. Our proof is the basis for the integral representation of the zeta-function. Otherwise the Contour integral method would only give information on the pole structure of the zeta-function, but no results on the zeta-determinants. 
\\[3mm] The proof is provided in the setup of generalized Neumann boundary conditions at the cone base, however for Dirichlet boundary conditions the arguments are similar and outlined by the author in the Appendix to [KLP2]. Furthermore we extend the computations of [KLP1] to the setup of generalized Neumann boundary conditions at the cone base, in view of geometric applications below. 
\\[3mm] In Section \ref{section-laplace} we study the relative (ideal) boundary conditions for Laplacians on differential forms, arising from the minimal closed extensions of the exterior derivative. At the cone base $\{R\}\times N$ they are given explicitly by a combination of Dirichlet and generalized Neumann boundary conditions.
\\[3mm] The study of the relative boundary conditions at the cone singularity is interesting on its own. In [BL2], among other issues, the relative extension of the Laplacian on differential forms is shown to coincide with the Friedrich's extension at the cone singularity outside of the "middle degrees". We discuss the relative boundary conditions for Laplace operators on differential forms in any degree and obtain explicit results, relevant for further computations.
\\[3mm] Different sources have analyzed zeta-determinants of Laplace operators over a bounded generalized cone. Under the condition that all self-adjoint extensions of the Laplace operator coincide at the cone singularity, calculations are provided in the joint work of J.S. Dowker and K. Kirsten in [DK]. 
\\[3mm] For the computation of zeta-regularized or so-called "functional" determinants of de Rham Laplacians on a bounded generalized cone it is necessary to note that the Laplacian admits a direct sum decomposition $$\triangle = L \oplus \widetilde{\triangle},$$ which is compatible with the relative boundary conditions, such that $\widetilde{\triangle}$ is the maximal direct sum component, which is essentially self-adjoint at the cone singularity.
\\[3mm] The direct sum component $\widetilde{\triangle}$ is discussed by K. Kirsten and J.S. Dowker in [DK] with general boundary conditions of Dirichlet and Neumann type at the cone base. The other component $L$ is precisely of the type considered in Section \ref{funct-Det}.
\\[3mm] In Section \ref{even} we obtain by a combination of our results with the results by [L], where determinants of a general class of regular-singular Sturm-Liouville operators have been discussed, and [KLP1] an explicit result for the functional determinant of $L$ with relative boundary conditions. In view of [DK] this provides a complete picture of the Laplace operator on a bounded generalized cone.
\\[3mm] {\bf Acknowledgements.} The results of this article were obtained during the author's Ph.D. studies at Bonn University, Germany. The author would like to thank his thesis advisor Prof. Matthias Lesch for his support and useful suggestions. The author is also grateful to Prof. Paul Loya for helpful discussions. The author was supported by the German Research Foundation as a scholar of the Graduiertenkolleg 1269 "Global Structures in Geometry and Analysis".

\section{Regular-Singular Model Operators}\label{model-section}

\subsection{Closed Extensions of Model Operators}\label{model-chapter-weidmann}

Let $D:C^{\infty}_0(0,R)\to C^{\infty}_0(0,R)$, $R>0$ be a differential operator acting on smooth $\C$-valued functions with compact support in $(0,R)$. The standard Hermitian scalar product on $\C$ and the standard measure $dx$ on $\R$ define the natural $L^2$-structure on $C^{\infty}_0(0,R)$:
$$\forall f,g \in C^{\infty}_0(0,R): \ \langle f, g\rangle_{L^2}:=\int_0^R\langle f(x),g(x) \rangle dx.$$
Denote the completion of $C^{\infty}_0(0,R)$ under the $L^2$-scalar product by $L^2(0,R)$. This defines a Hilbert space with the natural $L^2$-Hilbert structure.
\\[3mm] We define the maximal extension $D_{\max}$ of $D$ by $$\dom (D_{\max}):=\{f \in L^2(0,R)|Df \in L^2(0,R)\},\quad D_{\max}f:=Df,$$
where $Df\in L^2(0,R)$ is understood in the distributional sense. The minimal extension $D_{\min}$ of $D$ is defined as the graph-closure of $D$ in $L^2(0,R)$, more precisely:
\begin{align*}
\dom (D_{\min}):=\{f \in L^2(0,R)| \exists (f_n)\subset C^{\infty}_0(0,R)&: \\
f_n \xrightarrow{L^2} f, \quad Df_n \xrightarrow{L^2} Df\} \subseteq \dom (D_{\max})&, \quad D_{\min}f:=Df.
\end{align*}
Analogously we can form the minimal and the maximal extensions of the formal adjoint differential operator $D^t$. Since $C^{\infty}_0(0,R)$ is dense in $L^2(0,R)$, the maximal and the minimal extensions provide densely defined operators in $L^2(0,R)$. In particular we can form their adjoints. The next result provides a relation between the maximal and the minimal extensions of $D, D^t$:
\begin{thm}\label{max-min-theorem}\textup{[W, Section 3]} The maximal extensions $D_{\max},D^t_{\max}$ and the minimal extensions $D_{\min}, D^t_{\min}$ are closed densely defined operators in the Hilbert space $L^2(0,R)$ and are related as follows 
\begin{align}\label{max-min-def}
D_{\max}=(D^t_{\min})^*, \quad D^t_{\max}=(D_{\min})^*.
\end{align}
\end{thm}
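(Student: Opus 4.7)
The plan is to prove the three claims --- closedness, density of domains, and the adjoint relations --- in order, with the integration-by-parts identity $\langle D\varphi,\psi\rangle_{L^2}=\langle \varphi, D^t\psi\rangle_{L^2}$ for $\varphi,\psi\in C^{\infty}_0(0,R)$ serving as the one algebraic tool.

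First I would dispatch closedness and density. By its very construction as a graph-closure, $D_{\min}$ is closed, and the same holds for $D^t_{\min}$. For $D_{\max}$, suppose $f_n\in\dom(D_{\max})$ with $f_n\to f$ and $Df_n\to g$ in $L^2(0,R)$. Since $L^2$-convergence implies convergence in the space of distributions on $(0,R)$, and since the differential operator $D$ is continuous on distributions, $Df_n\to Df$ in the distributional sense, so $Df=g\in L^2(0,R)$. Therefore $f\in\dom(D_{\max})$ with $D_{\max}f=g$, and the same argument handles $D^t_{\max}$. Density of all four domains is immediate because each contains $C^{\infty}_0(0,R)$, which is dense in $L^2(0,R)$.

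For the adjoint identity $D_{\max}=(D^t_{\min})^*$, suppose first $g\in\dom(D_{\max})$. For any $\varphi\in C^{\infty}_0(0,R)$, the very meaning of the distributional equation $Dg=D_{\max}g\in L^2$ is $\langle D^t\varphi,g\rangle=\langle\varphi,D_{\max}g\rangle$. Since $C^{\infty}_0(0,R)$ is by construction a core for $D^t_{\min}$, a graph-norm approximation extends this identity to every $f\in\dom(D^t_{\min})$, namely $\langle D^t_{\min}f,g\rangle=\langle f,D_{\max}g\rangle$, which is precisely the statement that $g\in\dom((D^t_{\min})^*)$ with $(D^t_{\min})^*g=D_{\max}g$. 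Conversely, if $g\in\dom((D^t_{\min})^*)$ with $(D^t_{\min})^*g=h$, restricting the defining identity of the adjoint to test functions $\varphi\in C^{\infty}_0(0,R)\subset\dom(D^t_{\min})$ yields $\langle D^t\varphi,g\rangle=\langle\varphi,h\rangle$, which is precisely $Dg=h$ in the distributional sense, so $g\in\dom(D_{\max})$ and $D_{\max}g=h$. The companion identity $D^t_{\max}=(D_{\min})^*$ follows by the same argument with the roles of $D$ and $D^t$ interchanged, using $(D^t)^t=D$.

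The only delicate step is the graph-norm approximation in the forward direction: one must use that every $f\in\dom(D^t_{\min})$ is reached by a sequence $\varphi_n\in C^{\infty}_0(0,R)$ with $\varphi_n\to f$ and $D^t\varphi_n\to D^t_{\min}f$ in $L^2(0,R)$. This is however built directly into the definition of $D^t_{\min}$ as a graph-closure, so no additional analytic input is required --- which is precisely why this theorem is a standard piece of abstract unbounded operator theory and can simply be attributed to \textup{[W, Section 3]}.
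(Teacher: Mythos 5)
Your proof is correct: the paper itself gives no argument for this theorem, simply citing [W, Section 3] and remarking that Weidmann's symmetric-case discussion transfers to non-symmetric operators, and your write-up is exactly that standard argument (distributional closedness of $D_{\max}$, density via $C^{\infty}_0(0,R)$, and the identification $D_{\max}=(D^t_{\min})^*$ through the distributional definition of $D_{\max}$ plus graph-norm approximation from the core $C^{\infty}_0(0,R)$). Nothing is missing beyond the tacit closability of $D$, which is already settled by the same integration-by-parts identity you use, so your proposal matches the intended proof.
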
 \ \\
\\[-7mm] The actual discussion in [W, Section 3] is in fact peformed in the setup of symmetric operators. But the arguments there transfer analogously to not necessarily symmetric differential operators. 
\\[3mm] Moreover we introduce the following notation. Let $$C(L^2(0,R))$$ denote the set of all closed extensions $\widetilde{D}$ in $L^2(0,R)$ of differential operators $D$ acting on $C^{\infty}_0(0,R)$, such that $D_{\min} \subseteq \widetilde{D} \subseteq D_{\max}$.
\\[3mm] For the characterization of the von Neumann space $\dom (D_{\max}) / \dom (D_{\min})$ in the setup of symmetric differential operators with real coefficients, the following general concepts in the classical reference [W] become relevant:
\begin{defn} \label{l-p-c-l-c-c}
A symmetric differential operator $D:C^{\infty}_0(0,R)\to C^{\infty}_0(0,R)$ with real coefficients is said to be 
\begin{itemize}
\item in the limit point case $(l.p.c.)$ at $x=0$, if for any $\lambda \in \C$ there is at least one solution $u$ of $(D-\lambda)u=0$ with $u \notin L^2_{\textup{loc}}[0,R)$.
\item in the limit circle case $(l.c.c.)$ at $x=0$, if for any $\lambda \in \C$ all solutions $u$ of $(D-\lambda)u=0$ are such that $u \in L^2_{\textup{loc}}[0,R)$.
\end{itemize}
\end{defn}\ \\
\\[-7mm] Here, $L^2_{\textup{loc}}[0,R)$ denotes elements that are $L^2$-integrable over any closed intervall $I \subset [0,R)$, but not necessarily $L^2$-integrable over $[0,R]$. Furthermore the result [W, Theorem 5.3] implies that if the limit point or the limit circle case holds for one $\lambda \in \C$, then it automatically holds for any complex number. Hence it suffices to check l.p.c or l.c.c. at any fixed $\lambda \in \C$. Similar definition holds at the other boundary $x=R$.
\\[3mm] The central motivation for introducing the notions of limit point and limit circle cases is that it provides a characterization of the von Neumann space $\dom (D_{\max}) / \dom (D_{\min})$ and in particular criteria for uniqueness of closed extensions of $D$ in $C(L^2(0,R))$.
\begin{defn}\label{coincide}
We say that two closed extensions $D_1,D_2$ of a differential operator $D:C^{\infty}_0(0,R) \to C^{\infty}_0(0,R)$ \textup{"coincide" at} $x=0$, if for any cut-off function $\phi \in C^{\infty}[0,R]$ vanishing identically at $x=R$ and being identically one at $x=0$, the following relation holds
\begin{align*}
\phi \mathcal{D}(D_1) = \phi \mathcal{D}(D_2).
\end{align*}
In particular we say that a formally self-adjoint differential operator is \textup{"essentially self-adjoint" at} $x=0$ if all its self-adjoint extensions in $C(L^2(0,R))$ coincide at $x=0$. 
\end{defn} \ \\
\\[-7mm] Similar definition holds at $x=R$. These definitions hold similarly for closed operators in $L^2((0,R),H)$, where $H$ is any Hilbert space. With the introduced notation we obtain as a consequence of arguments behind [W, Theorem 5.4]
\begin{cor}\label{lpc-ess}
Let $D$ be a symmetric differential operator over $C^{\infty}_0(0,R)$ with real coefficients, in the limit point case at $x=0$. Then all closed extensions of $D$ in $C(L^2(0,R))$ coincide at $x=0$ and in particular $D$ is essentially self-adjoint at $x=0$.
\end{cor}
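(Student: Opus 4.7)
The plan is to reduce Corollary \ref{lpc-ess} to the single key inclusion
\[
\phi \cdot \mathcal{D}(D_{\max}) \subseteq \mathcal{D}(D_{\min}),
\]
valid for every cut-off $\phi \in C^\infty[0,R]$ with $\phi \equiv 1$ near $x=0$ and $\phi \equiv 0$ near $x=R$. Once this is in place, any two closed extensions $D_1, D_2 \in C(L^2(0,R))$ of $D$ satisfy
\[
\phi\, \mathcal{D}(D_i) \subseteq \phi \cdot \mathcal{D}(D_{\max}) \subseteq \mathcal{D}(D_{\min}) \subseteq \mathcal{D}(D_j), \qquad i,j \in \{1,2\},
\]
which is exactly the coincidence at $x=0$ demanded by Definition \ref{coincide}. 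Essential self-adjointness will then follow by specializing to the self-adjoint extensions of a formally self-adjoint $D$.

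The key tool for establishing the inclusion is the Lagrange boundary form $[u,v](x)$ associated with the symmetric operator $D$ with real coefficients, which yields the integration-by-parts identity
\[
\langle D_{\max} u, v \rangle_{L^2} - \langle u, D_{\max} v \rangle_{L^2} = [u,v](R) - [u,v](0), \qquad u,v \in \mathcal{D}(D_{\max}).
\]
Combining this identity with Theorem \ref{max-min-theorem}, which identifies $D_{\min}$ as the adjoint of $D_{\max}$ (via $D^t = D$), characterizes $\mathcal{D}(D_{\min})$ inside $\mathcal{D}(D_{\max})$ as the subspace on which the total expression $[\cdot,v](R) - [\cdot,v](0)$ vanishes for every $v \in \mathcal{D}(D_{\max})$. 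The limit point hypothesis at $x=0$ is equivalent, via the arguments underlying [W, Theorems 5.3--5.4], to the automatic vanishing $[u,v](0) = 0$ for all $u,v \in \mathcal{D}(D_{\max})$; membership in $\mathcal{D}(D_{\min})$ therefore reduces to $[u,v](R) = 0$ for every $v \in \mathcal{D}(D_{\max})$.

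With these ingredients, the verification for $\phi u$ with $u \in \mathcal{D}(D_{\max})$ is short: first, $\phi u \in \mathcal{D}(D_{\max})$ because $D(\phi u) = \phi Du + [D,\phi] u$ and the commutator $[D,\phi]$ is a lower-order differential operator whose coefficients are compactly supported in the interior of $(0,R)$; second, $[\phi u, v](R) = 0$ trivially since $\phi u$ vanishes on a neighbourhood of $R$. The LPC at $x=0$ disposes of the other boundary term, so $\phi u \in \mathcal{D}(D_{\min})$, completing the reduction.

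The step I expect to require the most care is the identification of the LPC at $x=0$ with the vanishing of the Lagrange form at $0$ on the maximal domain; this depends on the structural description of square-integrable solutions to $(D-\lambda)u=0$ near the singularity and is the technical heart of [W, Section 5]. Once that equivalence is granted, the remainder is formal: a commutator computation to see that multiplication by $\phi$ preserves the maximal domain, together with the adjoint characterization of $\mathcal{D}(D_{\min})$ furnished by Theorem \ref{max-min-theorem}.
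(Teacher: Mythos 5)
Your route is essentially the one the paper intends: the paper gives no written proof of Corollary \ref{lpc-ess}, deferring to the arguments behind [W, Theorem 5.4], and those arguments are exactly the two facts you isolate — the characterization of $\mathcal{D}(D_{\min})$ inside $\mathcal{D}(D_{\max})$ by the vanishing of the Lagrange form tested against all of $\mathcal{D}(D_{\max})$ (via $D_{\min}=(D_{\max})^*$, which follows from Theorem \ref{max-min-theorem} and $D^t=D$), and the equivalence of the limit point case at $x=0$ with the identical vanishing of the Lagrange form at $0$ on the maximal domain. The commutator step is sound, but you should state explicitly why $[D,\phi]u\in L^2(0,R)$: elements of $\mathcal{D}(D_{\max})$ lie in $H^{\mathrm{ord}(D)}_{\mathrm{loc}}(0,R)$ by interior regularity (compare the paper's use of [W, Theorem 3.2]), and the coefficients of $[D,\phi]$ are smooth and compactly supported in $(0,R)$, so the product is square integrable.

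One step is not quite finished as written. Definition \ref{coincide} demands the equality of sets $\phi\,\mathcal{D}(D_1)=\phi\,\mathcal{D}(D_2)$, whereas your displayed chain only yields $\phi\,\mathcal{D}(D_1)\subseteq\mathcal{D}(D_2)$ — i.e.\ membership of $\phi u$ in $\mathcal{D}(D_2)$, not its representation as $\phi$ times an element of $\mathcal{D}(D_2)$. The repair is one line: choose a second cut-off $\psi$ with $\psi\equiv 0$ near $x=R$ and $\psi\equiv 1$ on $\operatorname{supp}\phi$, so that $\psi\phi=\phi$; then for $u\in\mathcal{D}(D_1)\subseteq\mathcal{D}(D_{\max})$ your key inclusion applied to $\psi$ gives $\psi u\in\mathcal{D}(D_{\min})\subseteq\mathcal{D}(D_2)$, hence $\phi u=\phi(\psi u)\in\phi\,\mathcal{D}(D_2)$, and by symmetry the two sets coincide. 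With that addendum the argument is complete, and essential self-adjointness at $x=0$ follows, as you say, by specializing to the self-adjoint extensions in $C(L^2(0,R))$.
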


\subsection{First order Regular-Singular Model Operators}
We consider the following regular-singular model operator $$d_p:=\frac{d}{dx}+\frac{p}{x}:C^{\infty}_0(0,R)\to C^{\infty}_0(0,R), \quad p \in \R.$$
Any element of the maximal domain $\dom (d_{p,\max})$ is square-integrable with its weak derivative in $L^2_{loc}(0,R]$, due to regularity of the coefficients of $d_p$ at $x=R$. So we have (compare [W, Theorem 3.2])$$\dom(d_{p,\max})\subset H^1_{loc}(0,R].$$ Consequently elements of the maximal domain $\dom(d_{p,\max})$ are continuous at any $x \in (0,R]$. Further we derive by solving the inhomogeneous differential equation $d_pf=g\in L^2(0,R)$ via the variation of constants method (the solution to the homogeneous equation $d_pu=0$ is simply $u(x)=c\cdot x^{-p}$), that elements of the maximal domain $f \in \dom (d_{p,\max})$ are of the following form 
\begin{align}\label{max-form}
f(x)=c\cdot x^{-p}-x^{-p}\cdot \int_x^Ry^p(d_pf)(y)dy.
\end{align}
We now analyze the expression above in order to determine the asymptotic behaviour at $x=0$ of elements in the maximal domain of $d_p$ for different values of $p\in \R$.
\begin{prop}\label{d-p-max}
Let $O(\sqrt{x})$ and $O(\sqrt{x |\log (x)|})$ refer to the asymptotic behaviour as $x\to 0$. Then the maximal domain of $d_p$ is characterized explicitly as follows:
\begin{enumerate}
\item For $p < -1/2$ we have
$$\dom(d_{p,\max})=\{f \in H^1_{loc}(0,R]|f(x)=O(\sqrt{x}), d_pf \in L^2(0,R)\}.$$ 
\item For $p = -1/2$ we have
\begin{align*}
\dom(d_{p,\max})=\{f \in H^1_{loc}(0,R]|f(x)=O(\sqrt{x|\log x|}), d_pf \in L^2(0,R)\}.
\end{align*} 
\item For $p \in (-1/2;1/2)$ we have
\begin{align*}
\dom(d_{p,\max})=\{f \in H^1_{loc}(0,R]|f(x)=c_fx^{-p}+O(\sqrt{x}), d_pf \in L^2(0,R)\},
\end{align*}
where the constants $c_f$ depend only on $f$.
\item For $p \geq 1/2$ we have $$\dom(d_{p,\max})=\{f \in H^1_{loc}(0,R]|f(x)=O(\sqrt{x}), d_pf \in L^2(0,R)\}.$$
\end{enumerate}
\end{prop}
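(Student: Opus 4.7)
The starting point is the integral representation (\ref{max-form}), which writes any $f\in\dom(d_{p,\max})$ as $f(x)=x^{-p}(c-I(x))$ with $I(x):=\int_x^R y^p(d_pf)(y)\,dy$. Since $d_pf\in L^2(0,R)$, the plan is to control $I(x)$ near $x=0$ by Cauchy--Schwarz, then combine with the behavior of the homogeneous piece $cx^{-p}$, and finally use $f\in L^2$ to decide which constants are admissible in each regime of $p$.

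Applying Cauchy--Schwarz to $I(x)$ yields $|I(x)|^2\le\|d_pf\|_{L^2}^2\int_x^R y^{2p}\,dy$, and the remaining integral is $O(x^{2p+1})$ for $p<-1/2$, $O(|\log x|)$ for $p=-1/2$, and $O(1)$ for $p>-1/2$. For $p<-1/2$ this gives $x^{-p}|I(x)|=O(\sqrt{x})$, while $cx^{-p}=o(\sqrt{x})$ because $-p>1/2$, which produces case (i). For $p=-1/2$ the same computation yields $x^{-p}|I(x)|=O(\sqrt{x|\log x|})$ and $cx^{-p}=c\sqrt{x}$, giving case (ii).

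For $p>-1/2$ the naive bound $x^{-p}|I(x)|=O(x^{-p})$ is too weak, so I would rewrite
\[
I(x)=\int_0^R y^p(d_pf)(y)\,dy - J(x),\qquad J(x):=\int_0^x y^p(d_pf)(y)\,dy,
\]
which is legitimate since $y^{2p}$ is integrable near $0$ in this range. Absorbing the finite constant $\int_0^R y^p(d_pf)\,dy$ into $c$ produces a new constant $\widetilde c$, so that $f(x)=\widetilde c\,x^{-p}+x^{-p}J(x)$. A second application of Cauchy--Schwarz shows $|J(x)|=O(x^{p+1/2})$, hence $x^{-p}J(x)=O(\sqrt{x})$. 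For $p\in(-1/2,1/2)$ one has $x^{-p}\in L^2(0,R)$, so $\widetilde c$ is unconstrained; taking $c_f:=\widetilde c$ yields case (iii). For $p\ge 1/2$, however, $x^{-p}\notin L^2(0,R)$, so the requirement $f\in L^2$ forces $\widetilde c=0$ and hence $f(x)=O(\sqrt{x})$, which is case (iv).

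The converse inclusions are routine: given $f\in H^1_{\mathrm{loc}}(0,R]$ with $d_pf\in L^2(0,R)$ and the asymptotic stated in each case, square-integrability near $0$ is immediate (in case (iii) the term $c_fx^{-p}$ is in $L^2$ precisely because $p<1/2$), and $H^1_{\mathrm{loc}}$-regularity up to $x=R$ gives $L^2$-integrability on the rest of the interval. I expect the principal subtlety to be the rearrangement in the regime $p>-1/2$: the constant of integration appearing in (\ref{max-form}) is not the coefficient that features in the final asymptotic expansion, and the passage to the indefinite integral $J$ is what makes the $L^2$-selection mechanism behind case (iv) transparent.
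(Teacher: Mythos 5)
Your proposal is correct and follows essentially the same route as the paper: the variation-of-constants representation \eqref{max-form}, Cauchy--Schwarz control of the integral term in the three regimes of $\int_x^R y^{2p}dy$, and the $L^2$-membership of $x^{-p}$ deciding whether the homogeneous constant survives, with the converse inclusions reduced to square-integrability near $0$. The paper only writes out case (i) and declares the rest similar; your splitting of the integral at $0$ for $p>-1/2$ (using $y^p\,d_pf\in L^1$) is precisely the maneuver the paper itself performs later in \eqref{f-f-tilde}, so it fills in the omitted cases in the intended way.
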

\begin{proof} 
Due to similarity of arguments we prove the first statement only, in order to avoid repetition. Let $p < -1/2$ and consider any $f \in \dom(d_{p,\max})$. By \eqref{max-form} this element can be expressed by $$f(x)=c\cdot x^{-p}-x^{-p}\cdot \int_x^Ry^pg(y)dy,$$ where $g= d_pf$. By the Cauchy-Schwarz inequality we obtain for the second term in the expression 
\begin{align*}
\left| x^{-p}\int^R_xy^pg(y)dy\right|& \leq x^{-p}\sqrt{\int^R_xy^{2p}dy}\cdot \sqrt{\int^R_xg^2} \leq \\ \leq c \cdot x^{-p}&\sqrt{x^{2p+1}-R^{2p+1}}\|g\|_{L^2}=c \cdot \sqrt{x}\sqrt{1-R^{2p+1}x^{-2p-1}}\|g\|_{L^2},
\end{align*}
where $c=1/\sqrt{-2p-1}$. Since $(-2p-1) > 0$ we obtain for the asymptotics as $x \to 0$ $$ x^{-p}\int^R_xy^pg(y)dy = O(\sqrt{x}).$$ Observe further that for $p<-1/2$ we also have $x^{-p}=O(\sqrt{x})$. This shows the inclusion $\subseteq$ in the statement. To see the converse inclusion observe $$\{f \in H^1_{loc}(0,R]|f(x)=O(\sqrt{x}), \ \textup{as} \ x \to 0\} \subset L^2(0,R).$$ This proves the statement.
\end{proof}\ \\
\\[-7mm] In order to analyze the minimal closed extension $d_{p,\min}$ of $d_p$, we need to derive an identity relating $d_p$ to its formal adjoint $d^t_p$, the so-called Lagrange identity. With the notation of Proposition \ref{d-p-max} we obtain the following result.
\begin{lemma}\label{lagrange-identity}(Lagrange-Identity) For any $f \in \dom (d_{p,\max})$ and $g \in \dom (d^t_{p,\max})$ 
\begin{align*}
\left<d_{p}f,g\right>-\left<f, d^t_{p}g\right> &= f(R)\overline{g(R)} -c_f \overline{c_g}, \ \textup{for} \ |p|<1/2, \\
\left<d_{p}f,g\right>-\left<f, d^t_{p}g\right> &= f(R)\overline{g(R)}, \ \textup{for} \ |p|\geq 1/2.
\end{align*}
\end{lemma}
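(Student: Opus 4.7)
The plan is to reduce everything to a boundary-term computation coming from integration by parts, and then to read off the boundary asymptotics from Proposition \ref{d-p-max}. The first observation is that $d_p^t = -d/dx + p/x = -d_{-p}$, so $\dom(d^t_{p,\max}) = \dom(d_{-p,\max})$, and the asymptotic behaviour of any $g \in \dom(d^t_{p,\max})$ at $x=0$ is governed by Proposition \ref{d-p-max} applied with $-p$ in place of $p$. Also, since $\dom(d_{p,\max}),\dom(d^t_{p,\max}) \subset H^1_{loc}(0,R]$, both $f$ and $g$ are continuous on $(0,R]$ and classical integration by parts applies on any subinterval $[\epsilon,R]$.

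Starting from that, I would write
\begin{equation*}
\int_\epsilon^R (d_p f)(x)\overline{g(x)}\,dx - \int_\epsilon^R f(x)\overline{(d_p^t g)(x)}\,dx = f(R)\overline{g(R)} - f(\epsilon)\overline{g(\epsilon)},
\end{equation*}
which is just $[f\bar g]_\epsilon^R$ after the $p/x$ terms cancel. Letting $\epsilon \to 0^+$, the two integrals converge to $\langle d_p f, g\rangle$ and $\langle f, d_p^t g\rangle$ respectively by the $L^2$-integrability of $f,g,d_p f, d_p^t g$ and dominated convergence. Hence $\lim_{\epsilon \to 0^+} f(\epsilon)\overline{g(\epsilon)}$ exists, and the proof reduces to evaluating this limit in each of the four cases of Proposition \ref{d-p-max}.

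For $|p|<1/2$, case (iii) of the proposition gives $f(x) = c_f x^{-p} + O(\sqrt{x})$ and, since $|{-p}|<1/2$ as well, $g(x) = c_g x^{p} + O(\sqrt{x})$. Multiplying, the leading term is $c_f\overline{c_g} x^{-p} x^p = c_f\overline{c_g}$, and each cross term is of order $O(x^{1/2-p})$ or $O(x^{1/2+p})$, both of which tend to $0$ because $|p|<1/2$; the $O(\sqrt{x})\cdot O(\sqrt{x}) = O(x)$ term is harmless. This produces the stated $-c_f\overline{c_g}$ correction. For $|p| \geq 1/2$ one of $p, -p$ lies outside $(-1/2,1/2)$, so cases (i), (ii) or (iv) apply to $f$ and $g$: in every sub-case at least one of the two factors is $O(\sqrt{x})$ and the other is at worst $O(\sqrt{x|\log x|})$, so $f(\epsilon)\overline{g(\epsilon)} \to 0$.

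The only non-routine step is the handling of the borderline exponents $p=\pm 1/2$, where a logarithmic factor from case (ii) appears; one just has to verify that $\sqrt{x}\cdot \sqrt{x|\log x|}\to 0$, so nothing dramatic happens. The rest is bookkeeping of the four cases in Proposition \ref{d-p-max}, and the identity follows.
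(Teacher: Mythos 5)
Your proof is correct and follows essentially the same route as the paper: integration by parts on $[\epsilon,R]$ produces the boundary term $f(R)\overline{g(R)}-f(\epsilon)\overline{g(\epsilon)}$, and the limit at $x=0$ is evaluated case by case from Proposition \ref{d-p-max} applied to $f\in\dom(d_{p,\max})$ and $g\in\dom(d^t_{p,\max})=\dom(d_{-p,\max})$. Your treatment of the $\epsilon$-limit and of the logarithmic borderline case $p=\pm 1/2$ is merely a more explicit spelling-out of what the paper leaves implicit.
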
 
\begin{proof}
$$\left<d_{p}f,g\right>-\left<f, d^t_{p}g\right>= f(R)\overline{g(R)} -f(x)\cdot \overline{g(x)}|_{x\to 0}.$$
Applying Proposition \ref{d-p-max} to $f\in \dom (d_{p,\max})$ and $g\in \dom (d^t_{p,\max})=\dom (d_{-p,\max})$ we obtain:
\begin{align*}
f(x)\cdot \overline{g(x)}|_{x\to 0}&=c_f \overline{c_g}, \ \textup{for} \ |p|<1/2, \\
f(x)\cdot \overline{g(x)}|_{x\to 0}&=0, \ \textup{for} \ |p|\geq 1/2, 
\end{align*}
This proves the statement of the lemma.
\end{proof}
\begin{prop}\label{lagrange}
\begin{align*}
\dom(d_{p,\min})&=\{f \in \dom (d_{p,\max})| c_f=0, \, f(R)=0\}, \ \textup{for} \ |p|<1/2, \\
\dom(d_{p,\min})&=\{f \in \dom (d_{p,\max})| f(R)=0\}, \ \textup{for} \ |p|\geq1/2, 
\end{align*}
where the coefficient $c_f$ refers to the notation in Proposition \ref{d-p-max} (iii).
\end{prop}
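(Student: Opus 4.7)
The plan is to characterize $\dom(d_{p,\min})$ as the kernel of the boundary form appearing in the Lagrange identity, by exploiting the adjoint relation $d_{p,\min}=(d^t_{p,\max})^*$ from Theorem \ref{max-min-theorem}.

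The first step is to unpack that relation. By Theorem \ref{max-min-theorem}, $f\in\dom(d_{p,\min})$ precisely when the functional $g\mapsto\langle f,d^t_p g\rangle$ is $L^2$-continuous on $\dom(d^t_{p,\max})$, and in that case $d_{p,\min}f=d_p f$. Since $d_{p,\min}\subseteq d_{p,\max}$, I may assume at the outset that $f\in\dom(d_{p,\max})$, whereupon the membership condition reduces to the symmetric identity $\langle d_p f,g\rangle=\langle f,d^t_p g\rangle$ for every $g\in\dom(d^t_{p,\max})=\dom(d_{-p,\max})$. Substituting the expressions from Lemma \ref{lagrange-identity} transforms this into the vanishing of $f(R)\overline{g(R)}-c_f\overline{c_g}$ (when $|p|<1/2$) or of $f(R)\overline{g(R)}$ (when $|p|\geq 1/2$) for all such $g$.

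The hard part, and really the only content beyond bookkeeping, is a non-degeneracy argument: the boundary-data map $g\mapsto(g(R),c_g)$ must surject $\dom(d_{-p,\max})$ onto $\C^2$ when $|p|<1/2$, and $g\mapsto g(R)$ onto $\C$ when $|p|\geq 1/2$. Any smooth $g$ supported in a neighborhood of $R$ that omits $x=0$ lies in $\dom(d_{-p,\max})$, has $c_g=0$, and can realize an arbitrary value of $g(R)$. For the second coordinate in the regime $|p|<1/2$, the function $g(x)=\phi(x)x^p$, with $\phi$ a cut-off identically one near $x=0$ and vanishing near $x=R$, lies in $\dom(d_{-p,\max})$ by Proposition \ref{d-p-max}(iii) (a direct computation gives $d_{-p}g=\phi'(x)x^p\in L^2$), satisfies $g(R)=0$, and has $c_g=1$ by inspection of its asymptotics at $x=0$. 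These two families together span the boundary data, so the vanishing of the boundary form for all $g$ forces $f(R)=0$ together with $c_f=0$ in the limit circle regime, and simply $f(R)=0$ in the limit point regime.

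The converse inclusion is then immediate: if $f\in\dom(d_{p,\max})$ satisfies the listed boundary conditions, Lemma \ref{lagrange-identity} yields $\langle d_p f,g\rangle=\langle f,d^t_p g\rangle$ for all $g\in\dom(d^t_{p,\max})$, so $f\in\dom((d^t_{p,\max})^*)=\dom(d_{p,\min})$ by Theorem \ref{max-min-theorem}. I expect the only subtle check to be the verification that $\phi(x)x^p$ genuinely belongs to $\dom(d_{-p,\max})$ and has the claimed coefficient $c_g=1$, but this is a short calculation against Proposition \ref{d-p-max}(iii).
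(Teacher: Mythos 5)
Your proposal is correct and follows essentially the same route as the paper: characterize $\dom(d_{p,\min})$ via $d_{p,\min}=(d^t_{p,\max})^*$, feed in the Lagrange identity of Lemma \ref{lagrange-identity}, and use surjectivity of the boundary-data map $g\mapsto (g(R),c_g)$ on $\dom(d^t_{p,\max})=\dom(d_{-p,\max})$ to force $f(R)=0$ (and $c_f=0$ when $|p|<1/2$), with the converse inclusion immediate from the same identity. The only difference is that you explicitly exhibit the functions realizing arbitrary boundary data (the cut-off $\phi(x)x^{p}$ and functions supported near $x=R$), a point the paper asserts without construction; your computation $d_{-p}\bigl(\phi(x)x^{p}\bigr)=\phi'(x)x^{p}\in L^2$ and $c_g=1$ is correct.
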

\begin{proof}
Fix some $f \in \dom(d_{p,\min})$. Then for any $g \in \dom(d^t_{p,\max})$ we obtain using $d_{p,\min}=(d^t_{p,\max})^*$ (see Theorem \ref{max-min-theorem}) the following relation: $$\left<d_{p,\min}f,g\right>-\left<f, d^t_{p,\max}g\right>=0.$$ Together with the Lagrange identity, established in Lemma \ref{lagrange-identity} we find 
\begin{align}\label{eqn}
f(R)\overline{g(R)} -c_f \overline{c_g}&=0, \ \textup{for} \ |p|<1/2, \\
f(R)\overline{g(R)} &=0, \ \textup{for} \ |p|\geq 1/2.
\end{align}
Let now $|p|<1/2$. Then for any $c,b \in \C$ there exists $g \in \dom (d^t_{p,\max})$ such that $c_g=c$ and $g(R)=b$. By arbitrariness of $c,b \in \C$ we conclude from \eqref{eqn} $$c_f=0, \quad f(R)=0.$$ For $|p|\geq 1/2$ similar arguments hold, so we get $f(R)=0$. This proves the inclusion $\subseteq$ in the statements. For the converse inclusion consider some $f \in \dom (d_{p,\max})$ with $c_f=0$ (for $|p|<1/2$) and $f(R)=0$. Now for any $g \in \dom(d^t_{p,\max})$ we infer from Lemma \ref{lagrange-identity} 
$$\left<d_{p,\max}f,g\right>-\left<f, d^t_{p,\max}g\right>=0.$$ 
Thus $f$ is automatically an element of $\dom((d_{p,\max}^t)^*)=\dom(d_{p,\min})$. This proves the converse inclusion.
\end{proof} \ \\
\\[-7mm] Now by a direct comparison of the results in Propositions \ref{d-p-max} and \ref{lagrange} we obtain the following corollary.
\begin{cor}\label{ess-s.a.} \ \\[-4mm]
\begin{enumerate}
\item For $|p| \geq 1/2$ the closed extensions $d_{p,\min}$ and $d_{p,\max}$ coincide at $x=0$.
\item For $|p|< 1/2$ the asymptotics of elements in $\dom (d_{p,\max})$ differs from the asymptotics of elements in $\dom (d_{p,\min})$ by presence of $u(x):=c\cdot x^{-p}$, solving $d_pu=0$.
\end{enumerate}
\end{cor}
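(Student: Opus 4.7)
The claim is essentially a side-by-side reading of Propositions \ref{d-p-max} and \ref{lagrange}, but (i) requires checking that the definition of ``coincide at $x=0$'' (Definition \ref{coincide}) actually holds, not merely that the domains agree on the asymptotic level near zero. So the plan is to first extract the asymptotic comparison and then carry out the cutoff argument for (i).

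\textbf{Step 1: compare the domains.} For $|p|\geq 1/2$, Proposition \ref{d-p-max}(i), (ii), (iv) says that every $f\in\dom(d_{p,\max})$ lies in $H^1_{\mathrm{loc}}(0,R]$, satisfies $d_pf\in L^2(0,R)$, and decays at the singularity as $O(\sqrt x)$ (respectively $O(\sqrt{x|\log x|})$ when $p=-1/2$). Proposition \ref{lagrange} adds nothing at $x=0$: it only imposes $f(R)=0$. For $|p|<1/2$, Proposition \ref{d-p-max}(iii) describes the maximal domain as functions with the asymptotic expansion $f(x)=c_fx^{-p}+O(\sqrt x)$, and Proposition \ref{lagrange} characterizes the minimal domain inside it by the two conditions $c_f=0$ and $f(R)=0$. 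This already yields (ii) directly: the quotient is spanned by a representative with the asymptotic $c\cdot x^{-p}$, and $u(x)=c\cdot x^{-p}$ solves $d_pu=0$.

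\textbf{Step 2: the cutoff argument for (i).} Fix a cutoff $\phi\in C^\infty[0,R]$ which vanishes identically near $x=R$ and is identically one near $x=0$. Given any $f\in\dom(d_{p,\max})$ with $|p|\geq 1/2$, the product $\phi f$ lies in $L^2(0,R)$ with distributional derivative $d_p(\phi f)=\phi\, d_pf+\phi' f\in L^2(0,R)$, so $\phi f\in\dom(d_{p,\max})$. Moreover $\phi f$ vanishes identically near $x=R$, in particular $(\phi f)(R)=0$, and its asymptotic behavior at $x=0$ coincides with that of $f$ since $\phi\equiv 1$ there. By Proposition \ref{lagrange} this forces $\phi f\in\dom(d_{p,\min})$. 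Hence $\phi\dom(d_{p,\max})\subseteq\phi\dom(d_{p,\min})$, and the reverse inclusion is automatic from $d_{p,\min}\subseteq d_{p,\max}$. This is precisely the statement that the two extensions coincide at $x=0$ in the sense of Definition \ref{coincide}.

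\textbf{Main obstacle.} There is no conceptual obstacle; the only point requiring care is the verification that $\phi f\in\dom(d_{p,\max})$ (the Leibniz-rule check above) and that the asymptotic class at $x=0$ is preserved by multiplication with a smooth function which is identically one near the origin — both immediate once written out. The content of the corollary is entirely a transcription of the preceding two propositions, together with the short cutoff argument that converts the ``same asymptotics at $0$'' statement into the stronger domain-coincidence assertion of Definition \ref{coincide}.
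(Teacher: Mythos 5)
Your proposal is correct and follows essentially the paper's route: the paper disposes of this corollary by a direct comparison of Propositions \ref{d-p-max} and \ref{lagrange}, and your cutoff argument merely supplies the verification of Definition \ref{coincide} that the paper leaves implicit. One small repair in Step 2: from $\phi f\in\dom(d_{p,\min})$ alone you cannot conclude $\phi f\in\phi\,\dom(d_{p,\min})$ (since $\phi\cdot(\phi f)\neq \phi f$ in general); instead take a second cutoff $\psi\in C^{\infty}[0,R]$ with $\psi\equiv 1$ on $\operatorname{supp}\phi$ and $\psi(R)=0$, so that $g:=\psi f\in\dom(d_{p,\min})$ by your Leibniz argument and Proposition \ref{lagrange}, and $\phi g=\phi f$.
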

\begin{remark}
The calculations and results of this subsection are the one-dimensional analogue of the discussion in [BS]. In particular, the result of Corollary \ref{ess-s.a.} corresponds to [BS, Lemma 3.2].
\end{remark}

\subsection{Self-adjoint extensions of Model Laplacians}\label{model-operator}
Let the model Laplacian be the following differential operator 
$$\triangle:=-\frac{d^2}{dx^2}+\frac{\lambda}{x^2}:C^{\infty}_0(0,R)\to C^{\infty}_0(0,R),$$ where we assume $\lambda \geq -1/4$. Put $$p:=\sqrt{\lambda +\frac{1}{4}}-\frac{1}{2}\geq -\frac{1}{2}.$$
In this notation we find $$\triangle=d_p^td_p=:\triangle_p.$$
Recall that the maximal domain $\dom (\triangle_{p,\max})$ is defined as follows
$$\dom (\triangle_{p,\max})=\{f\in L^2(0,R)|\triangle_p f\in L^2(0,R)\}.$$
Hence any element of the maximal domain is square-integrable with its second and thus also its first weak-derivative in $L^2_{loc}(0,R]$. So we have (compare [W, Theorem 3.2]) 
\begin{align}\label{H-2-loc}
\dom (\triangle_{p,\max})\subset H^2_{loc}(0,R].
\end{align}
We determine the maximal domain $\dom (\triangle_{p,\max})$ explicitly, see also the classical calculations provided in [KLP1].
\begin{prop}\label{laplacian-maximal}
Let $O(x^{3/2})$ and $O(x^{1/2})$ refer to the asymptotic behaviour as $x\to 0$. Then the maximal domain $\dom (\triangle_{p,\max})$ of the Laplace operator $\triangle_p$ is characterized explicitly as follows:
\begin{align*}
\textup{(i)} \ &For \ p=-1/2 \ we \ have \\
&\dom (\triangle_{p,\max})=\{f \in H^2_{loc}(0,R]| f(x)=c_1(f) \cdot \sqrt{x} + c_2(f)\cdot \sqrt{x}\log (x) + \nonumber \\ & \qquad + \widetilde{f}(x), \
\widetilde{f}(x)=O(x^{3/2}), \ \widetilde{f}'(x)=O(x^{1/2}), \ \triangle_p\widetilde{f}(x)\in L^2(0,R)\}. \nonumber \\[3mm]
\textup{(ii)} \  & For \ |p|< 1/2 \ we \ have \\
&\dom (\triangle_{p,\max})= \{f \in H^2_{loc}(0,R]| , f(x)=c_1(f) \cdot x^{p+1} + c_2(f) \cdot x^{-p} + \nonumber \\ & \qquad + \widetilde{f}(x), \
\widetilde{f}(x)=O(x^{3/2}), \ \widetilde{f}'(x)=O(x^{1/2}), \ \triangle_p\widetilde{f}(x)\in L^2(0,R)\}. \nonumber \\[3mm]
\textup{(iii)} \ & For \ p \geq 1/2 \ we \ have \\
&\dom (\triangle_{p,\max})=\{f \in H^2_{loc}(0,R]| f(x)=O(x^{3/2}), \nonumber \\ &\qquad f'(x)=O(x^{1/2}), \ \triangle_pf(x)\in L^2(0,R)\}. \nonumber
\end{align*}
The constants $c_{1}(f),c_2(f)$ depend only on the function $f$.
\end{prop}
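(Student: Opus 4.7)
Since $\dom(\triangle_{p,\max})\subseteq H^2_{loc}(0,R]$ has already been noted in \eqref{H-2-loc}, the task reduces to characterizing the asymptotic behavior at $x=0$ of $f \in \dom(\triangle_{p,\max})$, together with the converse inclusion. Setting $g:=\triangle_p f \in L^2(0,R)$, my plan is to apply variation of constants to the Euler-type ODE $-f''+p(p+1)x^{-2}f=g$ using an explicit fundamental system of the homogeneous equation $\triangle_p u=0$: for $p\neq -1/2$ the two linearly independent solutions are $u_1(x)=x^{p+1}$ and $u_2(x)=x^{-p}$ with constant Wronskian $W=-(2p+1)$, while for $p=-1/2$ the indicial equation has a double root $1/2$ and reduction of order yields $u_1(x)=\sqrt{x}$, $u_2(x)=\sqrt{x}\log x$ with $W=1$.

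Variation of constants then yields the representation
\[
 f(x)=C_1u_1(x)+C_2u_2(x)+\frac{1}{W}\!\left(u_1(x)\!\int_x^R\! u_2(y)g(y)\,dy-u_2(x)\!\int_x^R\! u_1(y)g(y)\,dy\right)
\]
with $C_1,C_2\in\C$. Wherever the integrals $\int_0^R u_i g$ converge (a condition read off from Cauchy-Schwarz together with the $L^2$-integrability of $u_i$ near $0$), I would split $\int_x^R=\int_0^R-\int_0^x$ and absorb the convergent $\int_0^R$ piece into a new constant prefactor, producing the constants $c_i(f)$ of the proposition. The remaining tail integrals $\int_0^x u_i(y)g(y)\,dy$ are bounded by Cauchy-Schwarz in exactly the spirit of the proof of Proposition~\ref{d-p-max}, which yields the pointwise remainder bound $\widetilde{f}(x)=O(x^{3/2})$. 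The derivative estimate $\widetilde{f}'(x)=O(x^{1/2})$ follows by differentiating the same representation: the boundary contributions $g(x)$ arising from the upper limits cancel between the two integral terms thanks to the Wronskian identity, and the surviving tail integrals are again controlled by Cauchy-Schwarz.

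The three cases then separate according to the $L^2$-behavior of $u_1,u_2$ near $0$. In case~(ii) $|p|<1/2$, both $u_1,u_2\in L^2(0,R)$ and the two constants $c_1(f),c_2(f)$ appear freely. In case~(iii) $p\geq 1/2$, $u_2=x^{-p}\notin L^2(0,R)$, so the $u_2$-coefficient in the representation must vanish for $f\in L^2$, while $u_1(x)=x^{p+1}=O(x^{3/2})$ is absorbed into the remainder, leaving no distinguished constants. Case~(i) $p=-1/2$ uses the logarithmic basis $\sqrt{x},\sqrt{x}\log x$, both of which lie in $L^2(0,R)$, and so produces two free constants. For the converse inclusion, any $f$ of the prescribed form is in $L^2$ by inspection, and since $\triangle_p$ annihilates the leading singular basis by construction one has $\triangle_p f=\triangle_p \widetilde{f}\in L^2$ by hypothesis, placing $f\in\dom(\triangle_{p,\max})$. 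The main technical obstacle is the careful bookkeeping in the Cauchy-Schwarz estimates at the borderline exponents $p=\pm 1/2$, where $\int_x^R u_i^2$ carries a logarithmic divergence and the naive estimate on the remainder sits exactly on the boundary of the $O(x^{3/2})$ regime, much like the subtlety already flagged for first-order operators in Proposition~\ref{d-p-max}(ii).
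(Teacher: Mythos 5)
Your argument is correct in outline, but it takes a genuinely different route from the paper. The paper never writes down the second-order Green representation: it factorizes $\triangle_p=d_p^td_p=-d_{-p}d_p$ and runs the first-order analysis of Proposition \ref{d-p-max} twice, first obtaining $(d_pf)(x)=c\,x^{p}+A(x)$ with $A(x)=O(\sqrt{x})$ (resp.\ $O(\sqrt{x|\log x|})$ at $p=1/2$), and then integrating $d_pf$ once more by variation of constants, so that the leading terms $x^{p+1}$, $x^{-p}$, resp.\ $\sqrt{x}\log x$ arise from the exactly evaluated integral $x^{-p}\int_x^Ry^{2p}dy$, while Cauchy--Schwarz is only ever applied to pure power weights; the remainder is the single expression $\widetilde{f}(x)=x^{-p}\int_0^xy^pA(y)\,dy$, from which $\widetilde{f}=O(x^{3/2})$ and $\widetilde{f}'=O(x^{1/2})$ are immediate. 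Your one-step variation of constants with the fundamental system $u_1,u_2$ and constant Wronskian reaches the same decomposition more directly, and your bookkeeping of which integrals $\int_0^Ru_i g$ converge, the exclusion of the effective $x^{-p}$-coefficient for $p\geq 1/2$ via $x^{-p}\notin L^2$ together with $x^{p+1}=O(x^{3/2})$, the cancellation of the $g(x)$ boundary contributions when differentiating, and the converse inclusion all agree with the paper's logic. The trade-off is that the paper's iterated first-order route keeps every Cauchy--Schwarz estimate away from logarithmic weights, whereas your shorter derivation pushes the difficulty into the borderline exponents.

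That borderline is a real, though fixable, issue for the estimates as you describe them, and it should not be left as mere bookkeeping. At $p=-1/2$ the term-by-term bounds on $\sqrt{x}\int_0^x\sqrt{y}\,\log y\,g(y)\,dy$ and $\sqrt{x}\log x\int_0^x\sqrt{y}\,g(y)\,dy$ only give $O(x^{3/2}|\log x|)$; to recover the asserted $O(x^{3/2})$ you must first combine the two tails into the single kernel $u_1(x)u_2(y)-u_2(x)u_1(y)=\sqrt{xy}\,\log(x/y)$ and then estimate $\sqrt{x}\,\bigl(\int_0^x y\log^2(x/y)\,dy\bigr)^{1/2}\|g\|_{L^2(0,x)}=O(x^{3/2})$, using $\int_0^1 t\log^2 t\,dt<\infty$; the analogous combined kernel handles $\widetilde{f}'$. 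At $p=+1/2$, where $x^{-1/2}\notin L^2$ and $\int_x^Ry^{-1/2}g$ cannot be split, Cauchy--Schwarz only yields $O(x^{3/2}\sqrt{|\log x|})$, and here the paper's own proof is equally lossy (its $A=O(\sqrt{x|\log x|})$ propagates into $\widetilde{f}$); indeed for $g(y)=y^{-1/2}|\log y|^{-1}\in L^2$ the corresponding term behaves like $x^{3/2}\log|\log x|$, so at that single value of $p$ the clean $O(x^{3/2})$ remainder requires a logarithmic correction no matter which route one takes. You are therefore no worse off than the paper at $p=+1/2$, but the $p=-1/2$ case does require the kernel-cancellation observation above for your argument to deliver the stated bound.
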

\begin{proof}
Consider any $f \in \dom (\triangle_{p,\max}), p\geq -1/2$ and note that $\triangle_p=d_p^td_p=-d_{-p}d_p$. Hence we have the inhomogeneous differential equation $d_{-p}(d_pf)=-g$ with $g\equiv\triangle_p f \in L^2(0,R)$.
\\[3mm] Analogous situation has been considered in Proposition \ref{d-p-max}. Repeating the arguments there we obtain
\begin{align}\label{dpf}
(d_pf)(x)=c\cdot x^{p}+A(x),
\end{align} 
where $A(x)= O(\sqrt{x}), x\to 0$ for $p\neq 1/2$ and $A(x)=O(\sqrt{x |\log (x)|}), x\to 0$ for $p=1/2$. Note by \eqref{H-2-loc} that the functions $d_pf(x)$ and $A(x)$ are continuous at any $x\in (0,R]$. Applying the variation of constants method to the differential equation in \eqref{dpf} we obtain
\begin{align}
f(x)&=\textup{const}\cdot x^{-p}-x^{-p}\int_x^Ry^p(d_pf)(y)
dy = \nonumber \\ &= \textup{const}\cdot x^{-p}-\textup{const} \cdot x^{-p}\int_x^Ry^{2p}dy-x^{-p}\int_x^Ry^pA(y)dy = \nonumber \\
&= \textup{const}\cdot x^{-p}-\textup{const} \cdot x^{-p}\int_x^Ry^{2p}dy+x^{-p}\int_0^xy^pA(y)dy, \label{f-f-tilde}
\end{align}
where "const" denotes any constant depending only on $f$ and the last equality follows from the fact that $id^p\cdot A\in L^1(0,R)$ for $p\geq -1/2$, due to the asymptotics of $A(y)$ as $y\to 0$. Put
\begin{align*}
\widetilde{f}(x)= x^{-p}\int_0^xy^pA(y)dy.
\end{align*}
Using the asymptotic behaviour of $A(y)$ as $y\to 0$, we derive $\widetilde{f}(x)=O(x^{3/2})$ and $\widetilde{f}'(x)=O(x^{1/2})$ as $x\to 0$. Evaluating now explicitly the second integral in \eqref{f-f-tilde} for different values of $p\geq -1/2$ and noting for $p\geq 1/2$ the facts that $x^{p+1}=O(x^{3/2})$ and $id^{-p}\notin L^2(0,R)$, we prove the inclusion $\subseteq$ in the statement on the domain relations. 
\\[3mm] For the converse inclusion observe that any $f \in H^2_{loc}(0,R]$ with the asymptotic behaviour as $x\to 0$:
\begin{align*}
f(x)=c_1(f) \cdot \sqrt{x} + c_2(f)\cdot \sqrt{x}\log (x) + O(x^{3/2}), \ &\textup{for} \ p=-1/2, \\
f(x)=c_1(f)\cdot x^{p+1} + c_2(f) \cdot x^{-p} + O(x^{3/2}), \ &\textup{for} \ |p|<1/2, \\
f(x)=O(x^{3/2}), \ &\textup{for} \ p\geq 1/2,
\end{align*}
is square integrable, $f\in L^2(0,R)$. It remains to observe why $\triangle_p f\in L^2(0,R)$ for any $f$ in the domains on the right hand side of the statement. This becomes clear, once we note that the additional terms in the asymptotics of $f$ other than $\widetilde{f}(x)$ are solutions to $\triangle_pu=0$.
\end{proof}\ \\
\\[-7mm] In order to analyze the minimal closed extension $\triangle_{p,\min}$ we need to derive the Lagrange identity for $\triangle_p$, see also [KLP2, Section 3.1].
\begin{lemma}\label{lapl-lagr} [Lagrange-identity] 
For any $f,g \in \dom (\triangle_{p,\max})$ the following identities hold.
\begin{enumerate}
\item If $p=-1/2$, then we have in the notation of Proposition \ref{laplacian-maximal}
\begin{align*}
\langle f, \triangle_p g\rangle_{L^2}-\langle \triangle_p f, g\rangle_{L^2} =\\
= [c_1(f) \overline{c_2(g)}-c_2(f) \overline{c_1(g)}] + [f'(R)\overline{g(R)}-f(R) \overline{g'(R)}].
\end{align*}
\item If $|p|<1/2$, then we have in the notation of Proposition \ref{laplacian-maximal}
\begin{align*}
\langle f, \triangle_p g\rangle_{L^2}-\langle \triangle_p f, g\rangle_{L^2} =\\
= -(2p+1)[c_1(f) \overline{c_2(g)}-c_2(f) \overline{c_1(g)}] + [f'(R)\overline{g(R)}-f(R) \overline{g'(R)}].
\end{align*}
\item If $p\geq 1/2$, then we have
\begin{align*}
\langle f, \triangle_p g\rangle_{L^2}-\langle \triangle_p f, g\rangle_{L^2}
=[f'(R)\overline{g(R)}-f(R) \overline{g'(R)}].
\end{align*}
\end{enumerate}
\end{lemma}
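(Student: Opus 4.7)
The plan is to evaluate $\langle f, \triangle_p g\rangle_{L^2} - \langle \triangle_p f, g\rangle_{L^2}$ by integration by parts on the truncated interval $[\varepsilon, R]$ and then let $\varepsilon\downarrow 0$. Since $\triangle_p = -d^2/dx^2 + \lambda/x^2$ has a real potential coefficient, the contributions from $\lambda f\overline{g}/x^2$ drop out of the difference, and the inclusion $\dom(\triangle_{p,\max})\subset H^2_{\mathrm{loc}}(0,R]$ from \eqref{H-2-loc} legitimizes integration by parts away from zero. Two integrations by parts give
\[
\int_\varepsilon^R \bigl(f''(x)\overline{g(x)} - f(x)\overline{g''(x)}\bigr)\,dx = \bigl[f'(x)\overline{g(x)} - f(x)\overline{g'(x)}\bigr]_{\varepsilon}^{R}.
\]
The contribution at $x=R$ is $f'(R)\overline{g(R)} - f(R)\overline{g'(R)}$, which appears in all three cases; it remains to compute $L(p) := \lim_{\varepsilon\downarrow 0}\bigl[f'(\varepsilon)\overline{g(\varepsilon)} - f(\varepsilon)\overline{g'(\varepsilon)}\bigr]$, and to combine it with the overall minus sign of the lower endpoint.

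For case (iii), $p\geq 1/2$, Proposition \ref{laplacian-maximal}(iii) gives $f=O(x^{3/2})$ and $f'=O(x^{1/2})$, similarly for $g$; hence $f'\overline{g}$ and $f\overline{g'}$ are both $O(x^2)$ and $L(p)=0$, yielding the third identity.

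For case (ii), $|p|<1/2$, I substitute $f(x)=c_1(f)x^{p+1}+c_2(f)x^{-p}+\widetilde f(x)$ with $\widetilde f=O(x^{3/2})$, $\widetilde f'=O(x^{1/2})$, and similarly for $g$, and expand $f'(x)\overline{g(x)} - f(x)\overline{g'(x)}$ by linearity. Three types of contributions appear from the leading asymptotics: the pure $c_1\overline{c_1}$ term at order $x^{2p+1}$ cancels (both $f'\overline{g}$ and $f\overline{g'}$ produce the same factor $p+1$); the potentially divergent pure $c_2\overline{c_2}$ term at order $x^{-2p-1}$ also cancels (both produce the same factor $-p$); the mixed $c_1(f)\overline{c_2(g)}$ and $c_2(f)\overline{c_1(g)}$ contributions land at order $x^0$ and collect to $(2p+1)\bigl[c_1(f)\overline{c_2(g)} - c_2(f)\overline{c_1(g)}\bigr]$. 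Every cross term involving $\widetilde f$ or $\widetilde g$ carries a strictly positive power of $x$ in the range $|p|<1/2$ (for instance $\widetilde f'\cdot\overline{c_2(g)}x^{-p}=O(x^{1/2-p})$), hence vanishes. This yields $L(p)=(2p+1)[c_1(f)\overline{c_2(g)} - c_2(f)\overline{c_1(g)}]$ and, after accounting for the minus sign at the lower endpoint, the stated identity.

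For case (i), $p=-1/2$, the expansion $f(x)=c_1(f)\sqrt{x}+c_2(f)\sqrt{x}\log x + \widetilde f(x)$ gives $f'(x)=(2\sqrt{x})^{-1}\bigl[c_1(f)+c_2(f)(\log x + 2)\bigr]+\widetilde f'(x)$, and similarly for $g$. Multiplying out $f'\overline{g}-f\overline{g'}$, the $(\log x)^2$ terms cancel algebraically (they come from $c_2\overline{c_2}$ on both sides with opposite factors that coincide after symmetrization), the linear $\log x$ terms cancel, and what survives is the constant $-[c_1(f)\overline{c_2(g)} - c_2(f)\overline{c_1(g)}]$; the remaining cross terms with $\widetilde f,\widetilde g$ again vanish by the asymptotic bounds. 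Combined with the overall minus sign, this gives the $+[c_1(f)\overline{c_2(g)}-c_2(f)\overline{c_1(g)}]$ of case (i). The main obstacle throughout is the bookkeeping in case (ii): individually divergent summands of order $x^{-2p-1}$ appear on both sides of the difference and must be shown to cancel with coefficients that match exactly; grouping all terms by their power of $x$ before passing to the limit is the cleanest way to expose the cancellation.
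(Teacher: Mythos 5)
Your proposal is correct and follows essentially the same route as the paper: both compute the difference of inner products on $[\varepsilon,R]$, rewrite the integrand as the derivative of the Wronskian $f'\overline{g}-f\overline{g'}$, read off the boundary term at $x=R$, and evaluate the limit at the singular end by inserting the asymptotics of Proposition \ref{laplacian-maximal}. The only difference is that you carry out explicitly the coefficient bookkeeping (cancellation of the $x^{\pm(2p+1)}$ and $\log$-terms, survival of the mixed $(2p+1)$- resp.\ constant terms) which the paper leaves to the reader, and your signs and coefficients check out.
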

\begin{proof}
Let $f,g \in \dom (\triangle_{p,\max})$ be any two elements of the maximal domain of $\triangle_p$. We compute:
\begin{align*}
\langle f, \triangle_p g\rangle_{L^2}-\langle \triangle_p f, g\rangle_{L^2} =\\
=\lim_{\epsilon \to 0} \int_{\epsilon}^R [f(x)\overline{\triangle_pg(x)} -\triangle_pf(x)\overline{g(x)}]dx= \\
=\lim_{\epsilon \to 0} \int_{\epsilon}^R \frac{d}{dx}[-f(x)\overline{g'(x)} + f'(x)\overline{g(x)}]dx=\\
=\lim_{\epsilon \to 0}[f(\epsilon)\overline{g'(\epsilon)}-f'(\epsilon)\overline{g(\epsilon)}] + [f'(R)\overline{g(R)}-f(R)\overline{g'(R)}].
\end{align*}
Now the statement follows by inserting the asymptotics at $x=0$ of $f,g \in \dom (\triangle_{p,\max})$ into the first summand of the expression above. 
\end{proof}

\begin{prop}\label{laplacian-minimal} The minimal domain of the model Laplacian $\triangle_p$ is given explicitly in the notation of Proposition \ref{laplacian-maximal} as follows
\begin{align*}
\dom (\triangle_{p,\min})&= \\ =\{f \in \dom &(\triangle_{p,\max})|c_1(f)=c_2(f)=0, f(R)=f'(R)=0\}, \ p\in [-1/2,1/2), \\
\dom &(\triangle_{p,\min})=\{f \in \dom (\triangle_{p,\max})| f(R)=f'(R)=0\}, \ p \geq 1/2.
\end{align*}
\end{prop}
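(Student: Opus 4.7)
The plan is to mirror the strategy used in the first-order case (Proposition \ref{lagrange}): combine the Lagrange identity of Lemma \ref{lapl-lagr} with the duality $\triangle_{p,\min} = (\triangle_{p,\max})^*$, which follows from Theorem \ref{max-min-theorem} together with the formal self-adjointness of $\triangle_p$.

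For the inclusion $\subseteq$, I would take $f \in \dom(\triangle_{p,\min})$ and use the adjoint identity
$$\langle \triangle_{p,\min} f, g\rangle_{L^2} - \langle f, \triangle_{p,\max} g\rangle_{L^2} = 0$$
for every $g \in \dom(\triangle_{p,\max})$. Comparing with Lemma \ref{lapl-lagr}, the sum of the boundary terms at $x = 0$ and at $x = R$ must then vanish for all admissible $g$. The argument reduces to showing that the boundary-data map
$$\dom(\triangle_{p,\max}) \longrightarrow \mathbb{C}^4, \qquad g \mapsto (c_1(g), c_2(g), g(R), g'(R))$$
is surjective for $|p| < 1/2$ (resp. onto $\mathbb{C}^2$ via $g \mapsto (g(R), g'(R))$ for $p \geq 1/2$, where the singular-boundary components are absent by Proposition \ref{laplacian-maximal}(iii)). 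Arbitrariness of the image then forces $c_1(f), c_2(f), f(R), f'(R)$ to vanish, as claimed.

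The converse inclusion is the easy direction: given $f$ in the right-hand side, Lemma \ref{lapl-lagr} directly yields $\langle \triangle_{p,\max} f, g\rangle_{L^2} = \langle f, \triangle_{p,\max} g\rangle_{L^2}$ for every $g \in \dom(\triangle_{p,\max})$, whence $f \in \dom((\triangle_{p,\max})^*) = \dom(\triangle_{p,\min})$.

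The main obstacle will be the surjectivity of the boundary-data map. To realize prescribed values I would construct test elements by gluing: near $x = 0$, a cut-off multiple of an explicit linear combination of the singular model solutions (namely $\sqrt{x}$ and $\sqrt{x}\log x$ for $p = -1/2$, and $x^{p+1}$, $x^{-p}$ for $|p| < 1/2$), which tunes $c_1$ and $c_2$ freely while contributing nothing at $x = R$; and near $x = R$, a smooth function supported away from the singularity with prescribed value and derivative at $R$, which adjusts the regular-boundary data without touching the asymptotics at $0$. Each such building block visibly lies in $\dom(\triangle_{p,\max})$ by Proposition \ref{laplacian-maximal}, and superposing them yields a $g$ with the desired four (resp. two) boundary invariants, completing the argument.
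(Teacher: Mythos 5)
Your proposal is correct and follows essentially the same route as the paper: the adjoint relation $\triangle_{p,\min}=(\triangle_{p,\max})^*$ combined with the Lagrange identity of Lemma \ref{lapl-lagr}, the surjectivity of the boundary-data map $g\mapsto (c_1(g),c_2(g),g(R),g'(R))$ (which the paper simply asserts and you justify by gluing cut-off model solutions), and the easy converse via the vanishing of all boundary terms.
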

\begin{proof}
Fix some $f \in \dom (\triangle_{p,\min})$. Then for any $g \in \dom (\triangle_{p,\max})$ we obtain with $\triangle_{p,\min}=\triangle_{p,\max}^*$ (see Theorem \ref{max-min-theorem}) the following relation
$$\langle f, \triangle_p g\rangle_{L^2}-\langle \triangle_p f, g\rangle_{L^2}=0.$$
Together with the Lagrange-identity, established in Lemma \ref{lapl-lagr}, and the fact that for $p\in [-1/2,1/2)$ and any arbitrary $c_1,c_2,b_1,b_2 \in \C$ there exists $g \in \dom (\triangle_{p,\max})$ such that $$c_1(g)=c_1, \ c_2(g)=c_2, \ g(R)=b_1, \ g'(R)=b_2,$$
we conclude for $f\in \dom (\triangle_{p,\min}), p\in [-1/2,1/2)$
\begin{align}\label{helping2}
c_1(f)=c_2(f)=0, f(R)=f'(R)=0.
\end{align} 
Analogous arguments for $f\in \dom (\triangle_{p,\min}), p\geq 1/2$ show $f(R)=f'(R)=0$. This proves the inclusion $\subseteq$ in the statement. For the converse inclusion consider any $f\in \dom (\triangle_{p,
\max})$, satisfying \eqref{helping2}, where the condition $c_1(f)=c_2(f)=0$ is imposed only for $p\in [-1/2,1/2)$. Now we obtain from the Lagrange-identity in Lemma \ref{lapl-lagr}
$$\forall g \in \dom (\triangle_{p,\max}): \ \langle f, \triangle_p g\rangle_{L^2}-\langle \triangle_p f, g \rangle_{L^2}=0.$$
Hence $f$ is automatically an element of $\dom (\triangle_{p.\max}^*)=\dom (\triangle_{p,\min})$. This proves the converse inclusion. 
\end{proof}

\begin{cor}\label{ess.s.a}
\begin{enumerate}
\item For $\lambda \geq 3/4$, equivalently for $p=\sqrt{\lambda+1/4}-1/2\geq 1/2$, the model Laplacian $\triangle_p$ is in the limit point case at $x=0$ and the closed extensions $\triangle_{p,\max}$ and $\triangle_{p,\min}$ coincide at $x=0$. In particular $\triangle_p$ is essentially self-adjoint at $x=0$.
\item For $\lambda \in [-1/4, 3/4)$, equivalently for $p=\sqrt{\lambda+1/4}-1/2\in [-1/2,1/2)$, the model Laplacian $\triangle_p$ is in the limit circle case at $x=0$ and the asymptotics at zero of the elements in $\dom (\triangle_{p,\max})$ differ from the asymptotics at zero of elements in $\dom (\triangle_{p,\min})$ by presence of fundamental solutions to $\triangle_pu=0$.
\end{enumerate}
\end{cor}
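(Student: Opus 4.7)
The plan is to derive the statement as a direct comparison of Propositions \ref{laplacian-maximal} and \ref{laplacian-minimal}, together with an identification of the fundamental solutions to $\triangle_p u = 0$. Since the two prior propositions already do the heavy analytic lifting, the corollary reduces to matching the two lists of asymptotics at $x=0$ term by term, and to checking the square-integrability near zero of two explicit power functions.

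For part (i), I would fix $p \geq 1/2$. Proposition \ref{laplacian-maximal}(iii) and Proposition \ref{laplacian-minimal} deliver identical asymptotic behavior at $x=0$, namely $f(x) = O(x^{3/2})$ and $f'(x) = O(x^{1/2})$; the two domains differ only through the boundary conditions $f(R) = f'(R) = 0$ imposed at the regular endpoint. Multiplying by any cut-off $\phi \in C^{\infty}[0,R]$ that is identically one near zero and vanishes at $R$ kills that distinction, so by Definition \ref{coincide} the two extensions coincide at $x=0$. For the limit point assertion, by [W, Theorem 5.3] it is enough to exhibit one $\lambda \in \C$ for which some solution of $(\triangle_p - \lambda) u = 0$ fails to lie in $L^2_{\textup{loc}}[0,R)$; taking $\lambda = 0$ the indicial equation produces the two fundamental solutions $x^{p+1}$ and $x^{-p}$, and for $p \geq 1/2$ the second one satisfies $\int_0^{R} x^{-2p} dx = \infty$, so it is not in $L^2_{\textup{loc}}[0,R)$. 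Essential self-adjointness at $x=0$ then drops out either from Corollary \ref{lpc-ess} or directly from the coincidence of the two extensions just verified.

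For part (ii), I would fix $p \in [-1/2, 1/2)$. Now the two fundamental solutions of $\triangle_p u = 0$ --- with $\sqrt{x}$ and $\sqrt{x}\log x$ in the coalescent case $p = -1/2$, where the indicial roots collapse --- are both square-integrable near zero, because $\int_0^R x^{2p+2} dx$ and $\int_0^R x^{-2p} dx$ both converge for $-1/2 \leq p < 1/2$; this establishes the limit circle case. Comparing Proposition \ref{laplacian-maximal}(i)-(ii) with Proposition \ref{laplacian-minimal}, the maximal and minimal domains differ exactly by the freedom to prescribe the coefficients $c_1(f)$ and $c_2(f)$, which are precisely the coefficients of the two fundamental solutions of $\triangle_p u = 0$. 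This is the asserted difference by presence of fundamental solutions.

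No real obstacle is present; the corollary is essentially bookkeeping layered on top of the two preceding propositions. The only point that requires minor care is the degenerate case $p = -1/2$, where the indicial roots coincide and a logarithmic factor enters one of the model solutions, but Proposition \ref{laplacian-maximal}(i) has already absorbed this into the asymptotic expansion and no separate treatment is needed.
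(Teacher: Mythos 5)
Your proposal is correct and follows essentially the same route as the paper: comparison of Propositions \ref{laplacian-maximal} and \ref{laplacian-minimal} for the coincidence/difference of the domains at $x=0$, together with the square-integrability check near zero of the fundamental solutions $x^{p+1}$, $x^{-p}$ (resp.\ $\sqrt{x}$, $\sqrt{x}\log x$ for $p=-1/2$) of $\triangle_p u=0$ to settle the limit point/limit circle dichotomy. The additional details you supply (the explicit integral criteria, the appeal to [W, Theorem 5.3] and Definition \ref{coincide}) only make explicit what the paper leaves as a direct verification.
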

\begin{proof}
On the one hand statements on the coincidence or the difference of maximal and minimal domains at $x=0$ follow from a direct comparison of the results of Propositions \ref{laplacian-maximal} and \ref{laplacian-minimal}. It remains then to verify the limit point and the limit circle statements. They follow by definition from the study of the fundamental solutions $u_1,u_2:(0,R)\to \R$ of $\triangle_pu=0$:
\begin{align}
&\textup{For} \ p=-1/2 \quad u_1(x)=\sqrt{x}, \ u_2(x)=\sqrt{x}\log(x), \label{fundamental1}\\
&\textup{For} \ p>-1/2 \quad u_1(x)=x^{p+1}, \ u_2(x)=x^{-p}.\label{fundamental2}
\end{align} 
\end{proof}\ \\
\\[-7mm] Next, since the model Laplacian $\triangle_p$ is shown to be essentially self-adjoint at $x=0$ for $p\geq 1/2$, we are interested in the self-adjoint extensions of $\triangle_p$ for $p \in [-1/2,1/2)$, since only there the boundary conditions at $x=0$ are not redundant. In this subsection we determine for these values of $p$ the two geometrically meaningsful extensions of the model Laplacian $-$ the D-extension and N-extension:
\begin{align*}
\triangle_p^{D}:=(d_{p,\min})^*(d_{p,\min})=d^t_{p,\max}d_{p,\min}, \\ \triangle_p^{N}:=(d_{p,\max})^*(d_{p,\max})=d^t_{p,\min}d_{p,\max}.
\end{align*}
\begin{cor}\label{ND-extension}
For $|p| < 1/2$ we have in the notation of Proposition \ref{laplacian-maximal}
\begin{align}
\dom(\triangle_p^{D})=\{f \in \dom(\triangle_{p,\max})| c_2(f)=0,\ f(R)=0\}, \label{D-extension} \\ \dom(\triangle_p^{N})=\{f \in \dom(\triangle_{p,\max})| c_1(f)=0,\ d_pf(R)=0\}. \label{N-extension}
\end{align}
\end{cor}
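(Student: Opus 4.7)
The plan is to unfold the compositions defining $\triangle_p^D$ and $\triangle_p^N$ and then translate the resulting conditions into conditions on the asymptotic coefficients $c_1(f), c_2(f)$ by combining Propositions \ref{d-p-max}, \ref{lagrange} and \ref{laplacian-maximal}.

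First I would handle the D-extension. By definition $f \in \dom (\triangle_p^D) = \dom (d_{p,\max}^t\, d_{p,\min})$ is characterized by $f \in \dom (d_{p,\min})$ together with $d_{p,\min}f \in \dom (d_{p,\max}^t)$. Since $d_{p,\max}^t$ is already maximal, the second condition reduces to $\triangle_p f \in L^2(0,R)$; hence $f \in \dom (\triangle_{p,\max})$. The next step is to match the two descriptions of the asymptotics of $f$ at $x=0$: the coefficient $c_f$ of $x^{-p}$ appearing in Proposition \ref{d-p-max}(iii) coincides with $c_2(f)$ from Proposition \ref{laplacian-maximal}(ii), because the remaining summands $c_1(f)\, x^{p+1} + \widetilde f(x)$ are $O(\sqrt{x})$ as $x \to 0$ for $p \geq -1/2$. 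Then Proposition \ref{lagrange} translates $f \in \dom (d_{p,\min})$ into $c_2(f) = 0$ and $f(R)=0$, proving \eqref{D-extension}.

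For the N-extension, $f \in \dom (\triangle_p^N) = \dom (d_{p,\min}^t\, d_{p,\max})$ means $f \in \dom (d_{p,\max})$ together with $d_{p,\max}f \in \dom (d_{p,\min}^t)$. Since $d_p^t$ differs from $d_{-p}$ only by a sign and $|-p|=|p|<1/2$, Proposition \ref{lagrange} applied to $d_{-p}$ says this latter condition is equivalent to the vanishing of the coefficient $c_{d_p f}$ of $x^p$ in the expansion of $d_p f$ provided by Proposition \ref{d-p-max}(iii), together with the boundary condition $d_p f(R) = 0$. I would then compute $d_p$ on the decomposition of $f$ from Proposition \ref{laplacian-maximal}(ii); the $x^{-p-1}$ contributions drop out because $d_p(x^{-p})=0$, leaving
\begin{align*}
d_p f(x) = c_1(f)\,(2p+1)\, x^p + d_p \widetilde f(x),
\end{align*}
with $d_p \widetilde f(x) = O(\sqrt{x})$ from the asymptotics of $\widetilde f, \widetilde f'$. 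Since $2p+1 > 0$ for $|p|<1/2$, this identifies $c_{d_p f} = c_1(f)(2p+1)$, so $c_{d_p f}=0$ is equivalent to $c_1(f)=0$. The same computation also confirms $\dom (\triangle_{p,\max}) \subseteq \dom (d_{p,\max})$, so the requirement $f \in \dom (d_{p,\max})$ is automatic. Combined with $d_p f(R)=0$, this yields \eqref{N-extension}.

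The main obstacle is the bookkeeping of asymptotic coefficients across the different parametrizations in Propositions \ref{d-p-max} and \ref{laplacian-maximal}. In particular, the nontrivial identification $c_{d_p f} = c_1(f)(2p+1)$ relies crucially on the cancellation produced by $d_p(x^{-p})=0$; this is precisely what makes the generalized Neumann extension pick out the coefficient of $x^{p+1}$ rather than of $x^{-p}$ in $f$, in contrast to the D-extension.
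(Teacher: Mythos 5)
Your proposal is correct, and for the D-extension it is essentially the paper's own argument: unfold $\triangle_p^{D}=d^t_{p,\max}d_{p,\min}$, note that the second condition just says $\triangle_p f\in L^2(0,R)$, and translate $f\in\dom(d_{p,\min})$ through Proposition \ref{lagrange}, identifying $c_f$ with $c_2(f)$ because $x^{-p}\neq O(\sqrt{x})$ while $c_1(f)x^{p+1}+\widetilde f=O(\sqrt{x})$. For the N-extension you take a mildly different route in the forward inclusion: the paper writes $d_pf=g$ with $g\in\dom(d^t_{p,\min})$, solves this first-order equation by variation of constants and uses Cauchy--Schwarz to see that the particular solution is $O(x^{3/2})$, whence $c_1(f)=0$; you instead apply $d_p$ directly to the expansion of Proposition \ref{laplacian-maximal}(ii), using $d_p(x^{-p})=0$ and $d_p(x^{p+1})=(2p+1)x^p$ together with $\widetilde f=O(x^{3/2})$, $\widetilde f'=O(\sqrt{x})$ to get the identification $c_{d_pf}=(2p+1)c_1(f)$, and then invoke Proposition \ref{lagrange} for the parameter $-p$ (legitimate since $\dom(d^t_{p,\min})=\dom(d_{-p,\min})$ and $|-p|<1/2$). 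Both routes rest on the same three propositions; yours has the small advantage of treating the two inclusions of \eqref{N-extension} by a single computation (the paper's converse inclusion is in effect that same computation, showing $d_pf=d_p\widetilde f\in\dom(d_{-p,\min})$), and it also makes explicit why $\dom(\triangle_{p,\max})\subseteq\dom(d_{p,\max})$ for $|p|<1/2$, a point the paper leaves implicit.
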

\begin{proof}
Let us consider the D$-$extension first. By definition $\dom (\triangle_p^D)\subset \dom (d_{p,\min})$ and thus by Proposition \ref{lagrange} we have for any $f\in \dom (\triangle_p^D)$ $$f(x)=O(\sqrt{x})\quad \textup{and} \quad f(R)=0.$$ Since $x^{-p} \neq O(\sqrt{x}),x\to 0$ for $|p|< 1/2$ we find in the notation of Proposition \ref{laplacian-maximal} that the constant $c_2(f)$ must be zero for $f\in \dom (\triangle^D_p)$. This proves the inclusion $\subseteq$ in the first statement. 
\\[3mm] For the converse inclusion consider $f \in \dom(\triangle_{p,\max})$ with $c_2(f)=0, \ f(R)=0$. By Proposition \ref{lagrange} we find $f \in \dom (d_{p,\min})$. Now with $f \in \dom (\triangle_{p,\max})$ we obtain $d_{p,\min}f\in \dom(d^t_{p,\max})$ and hence $$f\in \dom (\triangle_p^D).$$ This proves the converse inclusion of the first statement.
\\[3mm] For the second statement consider any $f \in \dom (\triangle_p^N)=\dom(d^t_{p,\min}d_{p,\max})$. There exists some $g \in \dom(d^t_{p,\min})$ such that $d_pf=g$ with the general solution of this differential equation obtained by the variation of constants method. $$f(x)=c\cdot x^{-p}-x^{-p}\int_0^xy^pg(y)dy.$$ Since $g \in \dom(d^t_{p,\min})$ and thus in particular $g(x)=O(\sqrt{x})$, we find via the Cauchy-inequality that the second summand in the solution above behaves as $O(x^{3/2})$ for $x\to 0$. Hence $c_1(f)=0$ in the notation of Proposition \ref{laplacian-maximal}. Further $d_pf \in \dom(d^t_{p,\min})$ and thus $$d_pf(R)=0.$$ This proves the inclusion $\subseteq$ in the second statement. 
\\[3mm] For the converse inclusion consider an element $f \in \dom(\triangle_{p,\max})$ with $f(x)=c_2(f)\cdot x^{-p} + \widetilde{f}(x)$, where $\widetilde{f}(x)=O(x^{3/2}), \widetilde{f}'(x)=O(\sqrt{x})$, as $x\to 0$, and $d_pf(R)=0$. The inclusion $f \in \dom(d_{p, \max})$ is then clear by Proposition \ref{d-p-max}. Now by Proposition \ref{lagrange} we have $d_pf=d_p\widetilde{f}\in \dom (d_{-p,\min})$ due to asymptotics of $\widetilde{f}(x)$ as $x\to 0$ and $d_pf(R)=0$. 
\end{proof}
\begin{cor}\label{DN-extension-1/2}
For $p =-1/2$ we have  in the notation of Proposition \ref{laplacian-maximal} $$\dom(\triangle_p^{D}) =\{f \in \dom(\triangle_{p,\max})| c_2(f)=0,\ f(R)=0\}.$$ $$\dom(\triangle_p^{N}) =\{f \in \dom(\triangle_{p,\max})| c_2(f)=0,\ d_pf(R)=0\}.$$
\end{cor}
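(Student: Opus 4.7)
The plan is to parallel the structure of Corollary \ref{ND-extension}, using Proposition \ref{laplacian-maximal}(i) for the $p=-1/2$ asymptotics instead of (ii), together with the $|p|\geq 1/2$ branch of Proposition \ref{lagrange} (since $|-1/2|=1/2$). The essential new input is a short computation of how $d_{-1/2}$ acts on the two leading modes of $\dom(\triangle_{p,\max})$: one checks directly that $d_{-1/2}(\sqrt{x})=0$ and $d_{-1/2}(\sqrt{x}\log x) = x^{-1/2}$, and that $d_{-1/2}\widetilde{f}(x) = O(x^{1/2})$ whenever $\widetilde{f}(x)=O(x^{3/2})$ and $\widetilde{f}'(x)=O(x^{1/2})$. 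Consequently, for any $f\in\dom(\triangle_{p,\max})$ one obtains the expansion
$$d_p f(x) = c_2(f)\, x^{-1/2} + O(x^{1/2}), \qquad x\to 0.$$
Since $x^{-1/2}\notin L^2(0,R)$ near zero, the condition $d_p f\in L^2(0,R)$ (equivalently, membership in $\dom(d_{p,\max})$ by Proposition \ref{d-p-max}(ii)) is \emph{equivalent} to $c_2(f)=0$. Observe in particular that the surviving mode $c_1(f)\sqrt{x}$ is invisible to $d_p$, which is why $c_1(f)$ remains unconstrained in both extensions.

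For the D-extension $\triangle_p^D=d_{p,\max}^t d_{p,\min}$ the forward inclusion proceeds as follows: any $f\in\dom(\triangle_p^D)$ lies in $\dom(d_{p,\min})$, so by Proposition \ref{lagrange} (case $|p|\geq 1/2$) one has $f(R)=0$, and the inclusion $\dom(d_{p,\min})\subseteq\dom(d_{p,\max})$ combined with the expansion above forces $c_2(f)=0$. Conversely, given $f\in\dom(\triangle_{p,\max})$ with $c_2(f)=0$ and $f(R)=0$, the expansion yields $d_pf = O(x^{1/2})\in L^2$, hence $f\in\dom(d_{p,\max})$ and then $f\in\dom(d_{p,\min})$ by $f(R)=0$ via Proposition \ref{lagrange}. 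Finally, $d_pf=O(\sqrt{x})$ together with $d_{-p}(d_pf) = -\triangle_p f\in L^2(0,R)$ places $d_pf$ in $\dom(d_{-p,\max}) = \dom(d_{p,\max}^t)$ by Proposition \ref{d-p-max}(iv) applied to $-p=1/2$, so $f\in\dom(\triangle_p^D)$.

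The N-extension $\triangle_p^N = d_{p,\min}^t d_{p,\max}$ is handled symmetrically. Membership $f\in\dom(d_{p,\max})$ again forces $c_2(f)=0$; the further condition $d_pf\in\dom(d_{p,\min}^t)=\dom(d_{-p,\min})$, applied via the $|-p|\geq 1/2$ branch of Proposition \ref{lagrange}, forces $d_pf(R)=0$. For the converse, $c_2(f)=0$ places $d_pf$ in $\dom(d_{-p,\max})$ exactly as in the D-case, and the additional condition $d_pf(R)=0$ promotes this to $\dom(d_{-p,\min})$ by Proposition \ref{lagrange}, yielding $f\in\dom(\triangle_p^N)$.

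The main potential obstacle is really just the asymptotic computation of $d_p f$. Once it is recognized that $\sqrt{x}$ is annihilated by $d_{-1/2}$ while $\sqrt{x}\log x$ is sent to the non-$L^2$ function $x^{-1/2}$, the rest is a mechanical adaptation of the argument for $|p|<1/2$: the asymmetry of the two leading modes at $p=-1/2$ — both present in $\dom(\triangle_{p,\max})$ but only one compatible with $\dom(d_{p,\max})$ — is precisely the content of the statement $c_2(f)=0$ for both the D- and N-extensions.
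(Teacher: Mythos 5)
Your proposal is correct and follows essentially the same route as the paper: both statements come from the factorizations $\triangle_p^{D}=d^t_{p,\max}d_{p,\min}$ and $\triangle_p^{N}=d^t_{p,\min}d_{p,\max}$ together with Proposition \ref{lagrange} (the $|p|\geq 1/2$ branch) for the conditions at $x=R$. The only, harmless, difference is that you obtain $c_2(f)=0$ by directly computing $d_{-1/2}(\sqrt{x})=0$ and $d_{-1/2}(\sqrt{x}\log x)=x^{-1/2}\notin L^2(0,R)$, whereas the paper invokes Corollary \ref{ess-s.a.} (coincidence of $d_{p,\min}$ and $d_{p,\max}$, hence of the D- and N-extensions, at $x=0$ for $|p|\geq 1/2$); your version is slightly more explicit and also writes out the converse inclusions that the paper leaves to ``similar arguments''.
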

\begin{proof}The first statement is proved by similar arguments as in Corollary \ref{D-extension}. The Corollary \ref{ess-s.a.} asserts the equality of the D-extension and the N-extension at $x=0$ in the sense of Definition \ref{coincide}. This determines the asymptotic behaviour of $f \in \dom(\triangle_p^{N})$ as $x\to 0$. For the boundary conditions of $\triangle_p^{N}$ at $x=R$ simply observe that for any $f\in \dom (\triangle_p^{N})$ one has in particular $d_pf \in \dom(d_{-p,\min})$ and hence $d_pf(R)=0$. 
\end{proof}
\begin{remark}
The naming "D-extension" and "N-extension" coincides with the convention chosen in [LMP, Section 2.3]. However the motivation for this naming is given here by the type of the boundary conditions at the regular end $x=R$. In fact $\dom(\triangle_p^{D})$ has Dirichlet boundary conditions at $x=R$ and $\dom(\triangle_p^{D})$ $-$ generalized Neumann boundary conditions at $x=R$.
\end{remark}\ \\
\\[-7mm] So far we considered the self-adjoint extensions of $\triangle_p=d_p^td_p$ with $p:=\sqrt{\lambda+1/4}-1/2\in [-1/2,1/2)$. However for $r=-p-1$ we have $$d_r^td_r=d_p^td_p=-\frac{d^2}{dx^2}+\frac{\lambda}{x^2}, \ \textup{since}\ r(r+1)=p(p+1)=\lambda. $$
Hence for completeness it remains identify the D- and the N-extensions for $d_r^td_r, r=p-1\in (-3/2,-1/2]$ as well. Note however that for $p=-1/2$ we get $r=p=-1/2$ and the D-, N-extensions are as established before. It remains to consider $r \in (-3/2, -1/2)$.
\begin{cor}\label{DN-extension}
Let $p \in (-1/2;-1/2)$. Put $r=-p-1\in (-3/2,-1/2)$. Then we have in the notation of Proposition \ref{laplacian-maximal} $$\dom(\triangle_r^{D}) =\{f \in \dom(\triangle_{p,\max})| c_2(f)=0,\ f(R)=0\},$$ $$\dom(\triangle_r^{N}) =\{f \in \dom(\triangle_{p,\max})| c_2(f)=0,\ d_
rf(R)=0\}.$$
\end{cor}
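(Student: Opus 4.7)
My plan is to exploit the algebraic identity $\triangle_r = d_r^t d_r = d_p^t d_p = \triangle_p$ as differential operators, which follows from $r(r+1) = p(p+1) = \lambda$. Consequently the maximal and minimal extensions of $\triangle_r$ are literally those of $\triangle_p$, so the asymptotic expansion supplied by Proposition \ref{laplacian-maximal}(ii) applies unchanged: for $f \in \dom(\triangle_{p,\max})$,
\[ f(x) = c_1(f)\, x^{p+1} + c_2(f)\, x^{-p} + \widetilde{f}(x) = c_1(f)\, x^{-r} + c_2(f)\, x^{r+1} + \widetilde{f}(x), \]
with $\widetilde{f}(x) = O(x^{3/2})$ and $\widetilde{f}'(x) = O(x^{1/2})$ as $x\to 0$. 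The new ingredient is that we now factor $\triangle_p$ through $d_r$ rather than $d_p$, and since $r \in (-3/2,-1/2)$ we have $|r| > 1/2$; consequently Proposition \ref{d-p-max}(i) and Proposition \ref{lagrange} give $\dom(d_{r,\max}) \subseteq \{f : f(x) = O(\sqrt{x})\}$ and $\dom(d_{r,\min}) = \{f \in \dom(d_{r,\max}) : f(R) = 0\}$.

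For the D-extension I would unfold $\triangle_r^D = d_{-r,\max}\, d_{r,\min}$ and note that $f \in \dom(\triangle_r^D)$ must satisfy $f \in \dom(d_{r,\min})$, forcing $f(x) = O(\sqrt{x})$ and $f(R) = 0$. Comparing with the two-term expansion: the leading $c_1(f)\, x^{-r}$ is automatically $O(\sqrt{x})$ because $-r > 1/2$, whereas $x^{r+1}$ with $r+1 \in (-1/2, 1/2)$ is \emph{never} $O(\sqrt{x})$. Hence $c_2(f) = 0$. For the converse direction I would assume $f \in \dom(\triangle_{p,\max})$ with $c_2(f) = 0$ and $f(R) = 0$; then $f = c_1(f)\, x^{-r} + \widetilde{f} = O(\sqrt{x})$, so $f \in \dom(d_{r,\min})$, and the identity $d_{-r}(d_r f) = -\triangle_p f \in L^2(0,R)$ places $d_r f \in \dom(d_{-r,\max})$.

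For the N-extension the factorisation is $\triangle_r^N = d_{-r,\min}\, d_{r,\max}$. Membership in $\dom(d_{r,\max})$ again forces $f(x) = O(\sqrt{x})$ and hence $c_2(f) = 0$, while $d_r f \in \dom(d_{-r,\min})$ translates via Proposition \ref{lagrange} applied to $d_{-r}$ (which also has $|-r| > 1/2$) into the single regular-end condition $d_r f(R) = 0$. The converse inclusion runs in parallel to the D-case.

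The point requiring vigilance, and in my view the main subtlety, is verifying that $d_r f$ actually lies in $L^2(0,R)$ in both converse directions. A priori the expansion of $f$ retains $c_1(f)\, x^{-r}$, whose naive derivative $-r \cdot x^{-r-1}$ is not square-integrable for $r < -1/2$. The resolution is the cancellation $d_r(x^{-r}) = -r\, x^{-r-1} + r\, x^{-r-1} = 0$; this reduces $d_r f$ to $d_r \widetilde{f}$, which is $O(\sqrt{x})$ thanks to the bounds on $\widetilde{f}$ and $\widetilde{f}'$. It is precisely this cancellation that makes the factorisation $\triangle_p = d_r^t d_r$ compatible with the minimal/maximal domain analysis of $d_r$, and it is the essential feature distinguishing the present corollary from the $d_p$-based Corollary \ref{ND-extension}.
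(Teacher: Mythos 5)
Your argument is correct and follows essentially the same route as the paper: the identity $\triangle_r=\triangle_p$, the $|r|\geq 1/2$ characterizations of $\dom(d_{r,\max})$ and $\dom(d_{r,\min})$ from Propositions \ref{d-p-max} and \ref{lagrange}, and the resulting conditions $c_2(f)=0$ together with the boundary condition at $x=R$ -- the paper merely compresses this by citing the arguments of Corollary \ref{ND-extension} for the D-part and invoking Corollary \ref{ess-s.a.} (coincidence of $d_{r,\min}$ and $d_{r,\max}$ at $x=0$) to transfer the $x=0$ asymptotics to the N-part. One small inaccuracy in your closing remark: for $r\in(-3/2,-1/2)$ the exponent $-r-1$ lies in $(-1/2,1/2)$, so $x^{-r-1}$ \emph{is} square-integrable on $(0,R)$ and $d_rf\in L^2$ would follow even without the cancellation $d_r(x^{-r})=0$ (which is of course true and tidy); the genuine subtlety you describe belongs rather to the $p$-based Corollary \ref{ND-extension}, where for $p\neq 0$ the naive derivative of the retained term $x^{-p}$ really does fail to be square-integrable.
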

\begin{proof}The first statement is proved by similar arguments as in Corollary \ref{ND-extension}. Further, Corollary \ref{ess-s.a.} implies equality of the D-extension and the N-extension at $x=0$ in the sense of Definition \ref{coincide}. This determines the asymptotic behaviour of $f \in \dom(\triangle_r^{N})$ at $x=0$. For the boundary conditions of $\triangle_r^{N}$ at $x=R$ simply observe that for any $f\in \dom (\triangle_r^{N})$ one has in particular $d_rf \in \dom(d_{-r,\min})$ and hence $d_rf(R)=0$. 
\end{proof}

\section{Functional Determinants for Regular-Singular Sturm-Liouville Operators}\label{funct-Det}
\subsection{Self-adjoint Realizations}
We consider the following model setup. Let the operator $L$ be the following regular-singular Sturm-Liouville operator 
$$L= -\frac{d^2}{dx^2}+ \frac{1}{x^2}A:C^{\infty}_0((0,R),\C^q)\to C^{\infty}_0((0,R),\C^q),$$
where for any fixed $q\in \N$, $C^{\infty}_0((0,R),\C^q)$ denotes the space of smooth $\C^q$-valued functions with compact support in $(0,R)$. Let the tangential operator $A$ be a symmetric $q\times q$ matrix and choose on $\C^q$ an orthonormal basis of $A$-eigenvectors. Then we can write:
$$L=\bigoplus\limits_{\lambda \in \textup{Spec}(A)}-\frac{d^2}{dx^2}+\frac{\lambda}{x^2}.$$
Following [KLP1, KLP2] we need a classification of boundary conditions at $x=0$ for self-adjoint realizations of $L$. In view of Corollary \ref{ess.s.a} we restrict to the case $$\textup{Spec}(A) \subset [-1/4,3/4),$$
so that $L$ is a finite direct sum of model Laplace operators in the limit circle case at $x=0$ and $x=R$. In this case boundary conditions must be posed at both boundary components.
\\[3mm] Fix a counting on Spec$(A)$ as follows
$$-\frac{1}{4}=\lambda_1=\cdots =\lambda_{q_0}< \lambda_{q_0+1}\leq \cdots \leq \lambda_{q=q_0+q_1}<\frac{3}{4}.$$ 
Denote by $E_{l}$ the $\lambda_l$-eigenspace of $A$. We count the eigenvalues of $A$ with their multiplicities, so $E_{l}$ is understood to be one-dimensional with $E_{l}=\langle e_{l}\rangle$. Over $C^{\infty}_0((0,R),E_{l}), l=1, ..., q$ the differential operator $L$ reduces to a model Laplace operator $$-\frac{d^2}{dx^2}+\frac{\lambda_l}{x^2}: C^{\infty}_0(0,R) \to C^{\infty}_0(0,R).$$ 
Consider the maximal closed extension $L_{\max}$ of the differential operator $L$. Any $\phi_l \in \dom (L_{\max})\cap L^2((0,R),E_{l}),l=1,...,q$ is given by $f_l\cdot e_{l}$ where $e_{l}$ is the generator of the one-dimensional eigenspace $E_{l}$ and by definition (see also \eqref{H-2-loc}) $$f_l \in \dom \left( -\frac{d^2}{dx^2} +\frac{\lambda_l}{x^2}\right)_{\max}\subset H^2_{loc}(0,R].$$ We identify any $\phi_l$ with its scalar part $f_l$ and observe by Proposition \ref{laplacian-maximal} that $f_l(x)$ has the following asymptotics at $x=0$: \begin{align}
&c_l\sqrt{x}+c_{q+l}\sqrt{x}\log x+ O(x^{3/2}), \ \textup{as} \ l=1,...,q_0.  \label{asymptotics1}\\
&c_l x^{\nu_l+\frac{1}{2}} + c_{q+l} x^{-\nu_l+\frac{1}{2}}+ O(x^{3/2}), \ \textup{as} \  l=q_0+1,...,q.\label{asymptotics2} \\
&\textup{with } \nu_l:=\sqrt{\lambda_l+\frac{1}{4}}. \nonumber
\end{align}
A general element $\phi \in \dom (L_{\max})$ decomposes into a direct sum of such $\phi_l, l=1,...,q$, each of them of the asymptotics above. This defines a vector for any $\phi \in \dom (L_{\max})$ $$\vec{\phi}:=(c_1,...c_{2q})^T \in \C^{2q}.$$ 
Consider now any $\phi , \psi \in \dom (L_{\max})$ and the associated vectors $\vec{\phi}, \vec{\psi}$. Each of the components $\phi_l, \psi_l, l=1, .., q$ lies in the maximal domain of the corresponding model Laplace operator and thus is continuous over $(0,R]$ and differentiable over $(0,R)$ with the derivatives $\phi'_l, \psi'_l$ extending continuously to $x=R$. We impose boundary conditions at $x=R$ as follows 
$$\phi '(R)+ \A \phi (R) =0, \quad \psi '(R)+ \A \psi (R) =0,$$
where these equations are to be read componentwise and $\A \in \R$. 
\\[3mm] In [KLP2, Section 3] the classical results on self-adjoint extensions are reviewed and based on the Lagrange identity for $L$ (similar to Lemma \ref{lapl-lagr}) the boundary conditions at $x=0$ for the self-adjoint extensions of $L$ are classified in terms of Lagrangian subspaces.
\\[3mm] As a consequence of [KLP2, Corollary 3.4, (4.2)] the self-adjoint realizations of $L$ with fixed generalized Neumann boundary conditions at $x=R$ are characterized as follows:
\begin{align}\label{L-s.a.}
\dom (\mathcal{L})=\{\phi \in \dom (L_{\max})| \phi'(R)+\A \phi(R)=0, \ (\mathcal{A} \ \mathcal{B})\vec{\phi}=0\},
\end{align}
where the matrices $\mathcal{A}, \mathcal{B}\in \C^{q\times q}$ are fixed according to the conditions of [KLP2, Corollary 3.4], i.e. $(\mathcal{A} \ \mathcal{B})\in \C^{q\times 2q}$ is of full rank $q$ and $\mathcal{A}'\mathcal{B}^*$ is self-adjoint, where $\mathcal{A}'$ is the matrix $\mathcal{A}$ with the first $q_0$ columns multiplied by $(-1)$.
\begin{remark}
Note that in general $\mathcal{L}$ does not decompose into a finite sum of one-dimensional boundary value problems, since the matrices $(\mathcal{A}, \mathcal{B})$ need not to be diagonal. Therefore the computations below do not reduce to a one-dimensional discussion.
\end{remark}

\subsection{Functional Determinants}\label{neumann-determinants}
In this subsection we continue with the analysis of the self-adjoint realization $\dom (\mathcal{L})$, fixed in \eqref{L-s.a.}. Our aim is to construct explicitly the analytic continuation of the associated zeta-function to $s=0$ and to compute the zeta-regularized determinant of $\mathcal{L}$. 
\\[3mm] We follow the ideas of [KLP1, KLP2], where however only Dirichlet boundary conditions at the cone base $x=R$ have been considered. We extend their approach to generalized Neumann boundary conditions at the cone base, in order to apply the calculations to the relative self-adjoint extension of the Laplace Operator on a bounded generalized cone.
\\[3mm] Furthermore we put the arguments on a thorough footing by proving applicability of the Contour integral method in the regular-singular setup.
\\[3mm] We introduce the following $q\times q$ matrices in terms of Bessel functions of first and second kind:
\begin{align*}
J^{\pm}(\mu):=\!\left(\begin{array}{l} \left(\!\kappa J_{\pm 0}(\mu R)+\mu\sqrt{R}J'_{\pm 0}(\mu R)\right)\cdot Id_{q_0} \qquad 0 \\ \ 0 \quad \textup{diag}\left[2^{\pm \nu_l}\Gamma(1\pm\nu_l)\mu^{\mp\nu_l}\left(\kappa J_{\pm\nu_l}(\mu R) +\mu\sqrt{R}J'_{\pm\nu_l}(\mu R)\right)\right]\end{array}\!\right)\!,
\end{align*}
where the diagonal block matrix in the right low corner has entries for $l=q_0+1,...,q$. Further we have introduced new constants $$\kappa := \frac{1}{2\sqrt{R}}+\alpha \sqrt{R}, \quad \nu_l:=\sqrt{\lambda_l+\frac{1}{4}}, \ l=q_0+1,...,q,$$ to simplify notation. Moreover the function $J_{-0}(\mu R)$ is defined as follows:
$$J_{-0}(\mu x):=\frac{\pi}{2}Y_0(\mu x) - (\log\mu -\log 2 + \gamma)J_0(\mu x)$$ with $\gamma$ being the Euler constant and where we fix for the upcoming discussion the branch of logarithm in $\C\backslash \R^+$ with $0 \leq Im \log (z) < 2\pi$. With this notation we can now formulate the implicit eigenvalue equation for $\mathcal{L}$.
\begin{prop}\label{implicit-eigenvalue-equation}
$\mu^2$ is an eigenvalue of $\mathcal{L}$ if and only if the following equation is satisfied \begin{align*}
F(\mu):=\textup{det}\left(\begin{array}{cc}\mathcal{A} & \mathcal{B} \\ J^{+}(\mu) & J^{-}(\mu)\end{array}\right)\stackrel{!}{=}0.
\end{align*} 
\end{prop}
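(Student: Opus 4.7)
The strategy is to exhibit every $\mu^2$-eigenfunction of $\mathcal{L}$ as a direct sum of explicit Bessel-type solutions, read off its coefficient vector $\vec{\phi}\in \C^{2q}$ from the asymptotics at $x=0$, and recast the two boundary conditions (at $x=0$ and at $x=R$) as $q+q$ linear equations on $\vec\phi$. A nonzero eigenfunction then exists if and only if the resulting $2q\times 2q$ system has a nontrivial kernel, i.e.\ iff $F(\mu)=0$.

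First I diagonalize. Since $A$ is self-adjoint, the equation $(L-\mu^2)\phi=0$ splits into the scalar ODEs $-f_l''+(\lambda_l/x^2)f_l=\mu^2 f_l$ for $l=1,\dots,q$. The substitution $f_l(x)=\sqrt{x}\,g_l(\mu x)$ reduces each to Bessel's equation of order $\nu_l=\sqrt{\lambda_l+1/4}$. For $l>q_0$ the general solution is $f_l(x)=a_l\sqrt x\, J_{\nu_l}(\mu x)+b_l\sqrt x\, J_{-\nu_l}(\mu x)$; for $l\leq q_0$, so $\nu_l=0$, one uses the pair $\{J_0,J_{-0}\}$, where the modified function $J_{-0}$ is designed so that $\sqrt x\,J_{-0}(\mu x)$ has leading asymptotic $\sqrt x\log x$ as $x\to 0$, with no $\mu$-dependent contribution to the coefficient of $\sqrt{x}\log x$.

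Matching to \eqref{asymptotics1}--\eqref{asymptotics2} via $J_{\pm\nu}(z)\sim (z/2)^{\pm\nu}/\Gamma(1\pm\nu)$ identifies, for $l>q_0$,
$$a_l=2^{\nu_l}\Gamma(1+\nu_l)\mu^{-\nu_l}\,c_l,\quad b_l=2^{-\nu_l}\Gamma(1-\nu_l)\mu^{\nu_l}\,c_{q+l},$$
while for $l\leq q_0$ the normalization of $J_{-0}$ yields $a_l=c_l,\, b_l=c_{q+l}$. A direct differentiation shows, for either sign of $\nu$,
$$\bigl(\sqrt{x}\,J_{\pm\nu}(\mu x)\bigr)'\bigl|_{x=R}+\alpha\sqrt R\,J_{\pm\nu}(\mu R)=\kappa J_{\pm\nu}(\mu R)+\mu\sqrt R\,J'_{\pm\nu}(\mu R),$$
which is the scalar building block of $J^{\pm}(\mu)$. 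Incorporating the renormalizations $(a_l,b_l)\leftrightarrow(c_l,c_{q+l})$ produces precisely the diagonal entries of $J^+(\mu)$ and $J^-(\mu)$ as listed.

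Stacking then gives the boundary condition at $x=R$ as $(J^+(\mu),\,J^-(\mu))\vec\phi=0$ and the boundary condition at $x=0$ imposed in \eqref{L-s.a.} as $(\mathcal{A},\,\mathcal{B})\vec\phi=0$; the combined $2q\times 2q$ homogeneous linear system has a nontrivial solution iff its determinant $F(\mu)$ vanishes, and conversely any vector in the kernel furnishes via the Bessel parametrization an honest element of $\dom(\mathcal{L})$ with $\mathcal{L}\phi=\mu^2\phi$. The one genuinely delicate point is the case $\nu_l=0$: the raw $Y_0$-solution carries logarithmic terms entangled with $\log\mu$, so one must check that the subtraction $J_{-0}=(\pi/2)Y_0-(\log\mu-\log 2+\gamma)J_0$ removes exactly the $\mu$-dependent piece, ensuring that the assignment $(a_l,b_l)\leftrightarrow(c_l,c_{q+l})$ is independent of $\mu$ and that the uniform block form of $J^{\pm}(\mu)$ across $l\leq q_0$ and $l>q_0$ is legitimate.
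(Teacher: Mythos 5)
Your proposal is correct and follows essentially the same route as the paper: separation into scalar Bessel problems, normalization of the solutions $\sqrt{x}J_{\pm\nu_l}(\mu x)$ and $\sqrt{x}J_{-0}(\mu x)$ so that the $x\to 0$ asymptotics reproduce the coefficient vector $\vec\phi$ of \eqref{asymptotics1}--\eqref{asymptotics2}, and then the determinant criterion for the combined $2q\times 2q$ system $(\mathcal{A},\mathcal{B})\vec\phi=0$, $(J^+(\mu),J^-(\mu))\vec\phi=0$. The delicate point you flag about $J_{-0}$ is handled in the paper exactly as you suggest, via the series identity $J_{-0}(\mu x)=\log x\cdot J_0(\mu x)-\sum_{k\geq 1}H_k(-(\mu x)^2/4)^k/(k!)^2$, which shows the $\sqrt{x}\log x$ coefficient is $\mu$-independent.
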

\begin{proof} Any $\mu^2$-eigenvector $\phi$ of $\mathcal{L}$ is given by a direct sum of scalar functions $\phi_l, l=1,..,q$, which are in the $\lambda_l$-eigenspace of $A$ for any fixed $x\in (0,R]$. Each $\phi_l$ arises as a solution to the Bessel equation $$-\phi_l''+\frac{\lambda_l}{x^2}\phi_l=\mu^2\phi_l.$$ Putting $\nu_l:=\sqrt{\lambda_l+1/4}$ we can rewrite the equation as follows:
\begin{align}\label{bessel-equation-nu}
-\phi_l''+\frac{1}{x^2}\left(\nu_l-\frac{1}{4}\right)\phi_l=\mu^2\phi_l.
\end{align}
The general solution to this Bessel equation is given in terms of $J_*$ and $Y_*$, Bessel functions of first and second kind, respectively.
\\[3mm] For $l=1,..,q_0$ we have $\lambda_l=-1/4$ and hence $\nu_l=0$. In this case the Bessel equation \eqref{bessel-equation-nu} has two linearly independent solutions, $\sqrt{x}J_0(\mu x)$ and $\sqrt{x}Y_0(\mu x)$. Following [KLP2, Section 4.2] we write for the general solution
\begin{align}
c_l\sqrt{x}J_0(\mu x)+c_{q+l}\sqrt{x}J_{-0}(\mu x), \ \textup{as} \ l=1,...,q_0, \label{bessel-solution1}
\end{align}
where $c_l, c_{q+l}$ are constants and $J_{-0}(\mu x)=\frac{\pi}{2}Y_0(\mu x) - (\log\mu -\log 2 + \gamma)J_0(\mu x)$ with $\gamma$ being the Euler constant. Note from [AS, p.360] with $H_k=1+1/2+...+1/k$ and $z \in \C$:
$$\frac{\pi}{2}Y_0(z)=(\log z -\log 2 + \gamma)J_0(z)-\sum\limits_{k=1}^{\infty}\frac{H_k(-z^2/4)^k}{(k!)^2},$$
with $H_k=1+1/2+..+1/k$. Thus by definition we obtain $$J_{-0}(\mu x)=\log x \cdot J_0(\mu x)-\sum\limits_{k=1}^{\infty}\frac{H_k(-(\mu x)^2/4)^k}{(k!)^2}. $$
For $l=q_0+1,..,q$ we have $\lambda_l\in (-1/4, 3/4)$ and hence $\nu_l\in (0,1)$, in particular $\nu_l$ is non-integer. In this case the Bessel equation \eqref{bessel-equation-nu} has two linearly independent solutions $\sqrt{x}J_{\nu_l}(\mu x)$ and $\sqrt{x}J_{-\nu_l}(\mu x)$. Following [KLP2, Section 4.2] we write for the general solution
\begin{align}
\nonumber\\
c_l2^{\nu_l}\Gamma(1+\nu_l)\mu^{-\nu_l}\sqrt{x}J_{\nu_l}(\mu x) + c_{q+l} 2^{-\nu_l} \Gamma(1-\nu_l) \mu^{\nu_l} \sqrt{x}J_{-\nu_l}(\mu x),\label{bessel-solution2} \\ \ \textup{as} \ l=q_0+1,...,q. \nonumber
\end{align}
Now we deduce from the standard series representation of Bessel functions [AS, p. 360] the following asymptotic behaviour as $x\to 0$:
\begin{align}
\label{ass1} \sqrt{x}J_0(\mu x)=\sqrt{x}+\sqrt{x}O(x^2), \\
\label{ass2} \sqrt{x}J_{-0}(\mu x)=\sqrt{x}\log x + \sqrt{x}\log x \cdot O(x^2)+ \sqrt{x}O(x^2), \\
\label{ass3} 2^{\pm \nu_l}\Gamma(1\pm\nu_l)\mu^{\mp\nu_l}\sqrt{x} J_{\pm\nu_l}(\mu x)=x^{\pm \nu_l+1/2} + x^{\pm \nu_l+1/2}O(x^2),
\end{align}
where $O(x^2)$ is given by power-series in $(x\mu)^2$ with no constant term. Hence the asymptotic behaviour at $x=0$ of the general solutions \eqref{bessel-solution1} and \eqref{bessel-solution2} corresponds to the asymptotics \eqref{asymptotics1} and \eqref{asymptotics2}, respectively. Organizing the constants $c_l,c_{q+l}, l=1,..,q$ into a vector $\vec{\phi}=(c_1,...,c_{2q})$, we obtain $$(\mathcal{A},\mathcal{B})\vec{\phi}=0, $$ since by assumption, $\phi \in \dom (\mathcal{L})$. We now evaluate the generalized Neumann boundary conditions at the regular boundary $x=R$. 
\begin{align*}
\phi'_{l}(R)+\alpha\phi_{l}(R)=0 \ \Rightarrow \  c_l\cdot \{(\frac{1}{2\sqrt{R}}+\alpha \sqrt{R})J_0(\mu R)+\mu\sqrt{R}J'_0(\mu R)\} + \\ + \ c_{q+l}\cdot\{(\frac{1}{2\sqrt{R}}+\alpha \sqrt{R})J_{-0}(\mu R)+\mu\sqrt{R}J'_{-0}(\mu R)\}=0, \ \textup{as} \ l=1,...,q_0. \\
c_l\cdot2^{\nu_l}\Gamma(1+\nu_l)\mu^{-\nu_l}\{(\frac{1}{2\sqrt{R}}+\alpha \sqrt{R})J_{\nu_l}(\mu R)+\mu\sqrt{R}J'_{\nu_l}(\mu R)\} + \\ + \ c_{q+l}\cdot2^{-\nu_l} \Gamma(1-\nu_l) \mu^{\nu_l}\{(\frac{1}{2\sqrt{R}}+\alpha \sqrt{R})J_{-\nu_l}(\mu R)+\mu\sqrt{R}J'_{-\nu_l}(\mu R)\}=0, \\ \ \textup{as} \ l=q_0+1,...,q,
\end{align*}
We can rewrite this system of equations in a compact form as follows 
$$
(J^+(\mu);J^-(\mu))\vec{\phi}=0,$$
where the matrices $J^{\pm}(\mu)$ are defined above.
\\[3mm] We obtain equations which have to be satisfied by the $\mu^2$-eigenvectors of the self-adjoint realization $\mathcal{L}$: 
\begin{align*}
\left(\begin{array}{cc}\mathcal{A} & \mathcal{B} \\ J^{+}(\mu) & J^{-}(\mu)\end{array}\right)\vec{\phi}=0.
\end{align*}
This equation has non-trivial solutions if and only if the determinant of the matrix in front of the vector is zero. Hence we finally arrive at the following implicit eigenvalue equation
\begin{align*}
F(\mu):=\textup{det}\left(\begin{array}{cc}\mathcal{A} & \mathcal{B} \\ J^{+}(\mu) & J^{-}(\mu)\end{array}\right)\stackrel{!}{=}0.
\end{align*}
\end{proof}
\begin{prop}\label{bahaviour-at-zero1} With $\nu_l=\sqrt{\lambda_l+1/4}$ and $\kappa = \frac{1}{2\sqrt{R}}+\alpha \sqrt{R}$
\begin{align*}F(0)=\textup{det}\left(\!\!\begin{array}{cc}\mathcal{A} & \mathcal{B} \\ \begin{array}{l}\kappa Id_{q_o} \qquad \qquad 0 \\ 0 \quad \textup{diag}(\kappa R^{\nu_l}+\nu_l R^{\nu_l-\frac{1}{2}})\end{array}& \begin{array}{l}(\kappa \log R + \frac{1}{\sqrt{R}})Id_{q_o} \qquad 0 \\ 0 \quad \textup{diag}(\kappa R^{-\nu_l}-\nu_l R^{-\nu_l-\frac{1}{2}})\end{array}\end{array}\!\!\right)\end{align*}
\end{prop}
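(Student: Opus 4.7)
The plan is to evaluate $F(0)$ blockwise. Since the entries of $\mathcal{A}$ and $\mathcal{B}$ do not depend on $\mu$, only the lower half of the $2q \times 2q$ block matrix requires the limit to be taken; both $J^+(\mu)$ and $J^-(\mu)$ are block-diagonal with a scalar $q_0 \times q_0$ block on top and a diagonal $q_1 \times q_1$ block below, so I will handle these two pieces separately and then assemble the $2q\times 2q$ matrix whose determinant is $F(0)$.

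For the $q_0\times q_0$ block of $J^+(\mu)$, the power series $J_0(z)=1+O(z^2)$ and $J'_0(z)=-J_1(z)=O(z)$ immediately yield $\kappa J_0(\mu R)+\mu\sqrt{R}\,J'_0(\mu R)\longrightarrow\kappa$ as $\mu\to 0$. For the matching block of $J^-(\mu)$, I would use the explicit representation
\[
J_{-0}(\mu x)=\log x\cdot J_0(\mu x)-\sum_{k\ge 1}\frac{H_k(-(\mu x)^2/4)^k}{(k!)^2},
\]
already recalled in the excerpt, which gives $J_{-0}(\mu R)\longrightarrow\log R$. To compute the limit of $\mu\sqrt{R}\,J'_{-0}(\mu R)$, I would differentiate the equivalent form $J_{-0}(z)=\log(z/\mu)J_0(z)-\sum_{k\ge 1}H_k(-z^2/4)^k/(k!)^2$ with respect to $z$, obtaining $J'_{-0}(z)=z^{-1}J_0(z)+\log(z/\mu)J'_0(z)+O(z)$. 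Evaluating at $z=\mu R$ and multiplying by $\mu\sqrt{R}$, the $z^{-1}$-term contributes $1/\sqrt{R}$ in the limit while the remaining contributions vanish, so the $q_0\times q_0$ block of $J^-(0)$ is $(\kappa\log R+1/\sqrt{R})\cdot Id_{q_0}$.

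For the lower diagonal block (indices $l=q_0+1,\dots,q$) I would use the leading terms
\[
J_{\pm\nu_l}(z)=\frac{(z/2)^{\pm\nu_l}}{\Gamma(1\pm\nu_l)}\bigl(1+O(z^2)\bigr),\qquad J'_{\pm\nu_l}(z)=\frac{\pm\nu_l\,(z/2)^{\pm\nu_l-1}}{2\,\Gamma(1\pm\nu_l)}\bigl(1+O(z^2)\bigr).
\]
The prefactor $2^{\pm\nu_l}\Gamma(1\pm\nu_l)\mu^{\mp\nu_l}$ is calibrated precisely to cancel the singular $\mu^{\pm\nu_l}$-behaviour of these series at $z=\mu R$, and a direct substitution yields
\[
2^{\pm\nu_l}\Gamma(1\pm\nu_l)\mu^{\mp\nu_l}J_{\pm\nu_l}(\mu R)\longrightarrow R^{\pm\nu_l},\qquad 2^{\pm\nu_l}\Gamma(1\pm\nu_l)\mu^{\mp\nu_l}\cdot\mu\sqrt{R}\,J'_{\pm\nu_l}(\mu R)\longrightarrow \pm\nu_l R^{\pm\nu_l-1/2}.
\]
Summing the two contributions, the $l$-th diagonal entry of the lower block of $J^\pm(0)$ equals $\kappa R^{\pm\nu_l}\pm\nu_l R^{\pm\nu_l-1/2}$. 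Combining with the $q_0\times q_0$ block computed above produces the lower half of the matrix displayed in the statement, and taking the determinant finishes the proof.

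The only genuinely delicate step is the $J_{-0}$-block: here one must carefully track the cancellation between the $\log z$-singularity of $\tfrac{\pi}{2}Y_0$ and the explicit $\log\mu$-subtraction built into the definition of $J_{-0}$, so that both $J_{-0}(\mu R)$ and $\mu\sqrt{R}\,J'_{-0}(\mu R)$ acquire finite, non-trivial limits as $\mu\to 0$. Once this is tracked, all remaining computations are mechanical substitutions of Bessel Taylor series.
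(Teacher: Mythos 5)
Your proposal is correct and follows essentially the same route as the paper: the paper also computes the entrywise limits of the lower blocks as $\mu\to 0$ from the small-argument Bessel series (its displayed asymptotics \eqref{ass1}--\eqref{ass3}, viewed as power series in $(\mu x)^2$) and then takes the determinant of the limiting matrix. Your more detailed treatment of the $J_{-0}$-block, tracking the cancellation of the $\log\mu$ subtraction and the $z^{-1}J_0(z)$ contribution to $J'_{-0}$, simply fills in what the paper calls ``straightforward computations.''
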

\begin{proof}
The asymptotics \eqref{ass1}, \eqref{ass2} and \eqref{ass3}, where $O(x^2)$ is in fact power-series in $(x\mu)^2$ with no constant term, imply by straightforward computations:
\begin{align*}
\kappa J_0(\mu R)+\mu \sqrt{R} J'_0(\mu R)\rightarrow \kappa,  \ \textup{as} \ \mu \to 0, \\
\kappa J_{-0}(\mu R)+\mu \sqrt{R} J'_{-0}(\mu R) \rightarrow \kappa \cdot \log R + \frac{1}{\sqrt{R}},  \ \textup{as} \ \mu\to 0, \\
2^{\pm \nu_l}\Gamma(1\pm\nu_l)\mu^{\mp\nu_l}J_{\pm\nu_l}(\mu R)\to R^{\pm \nu_l}, \ \textup{as} \ \mu\to 0, \\
2^{\pm \nu_l}\Gamma(1\pm\nu_l)\mu^{\mp\nu_l}\mu\sqrt{R}J'_{\pm\nu_l}(\mu R) \to \pm \nu_l R^{\pm \nu_l-\frac{1}{2}}, \ \textup{as} \ \mu\to 0,
\end{align*}
where $l=q_0+1,..,q$. These relations prove the statement.
\end{proof} \ \\
\\[-7mm] The next proposition is similar to [KLP2, Proposition 4.3] and we use the notation therein.
\begin{prop}\label{asymptotics}
Let $\Upsilon \subset \C$ be a closed angle in the right half-plane. Then as $|x|\to \infty, x\in \Upsilon$ we can write  
\begin{align*}
F(ix)= \rho x^{|\nu|+\frac{q}{2}}e^{qxR}(2\pi)^{-\frac{q}{2}}(\widetilde{\gamma}-\log x)^{q_o}\times \\ p((\widetilde{\gamma} -\log   x)^{-1},x^{-1})\left(1+O\left(\frac{1}{x}\right)\right),
\end{align*}
where $\gamma$ is the Euler constant, $|\nu|=\nu_{q_o+1}+...+\nu_q$. Moreover, as $|x|\to \infty$ with $x\in \Upsilon$, $O(1/x)$ is a power-series in $x^{-1}$ with no constant term. Furthermore we have set:  
\begin{align*}\widetilde{\gamma}:=\log  2-\gamma, \ \rho:=\prod\limits_{l=q_o+1}^q2^{-\nu_l}\Gamma(1-\nu_l),& \\ 
p(x,y):=\textup{det}\left( \begin{array}{cc} \mathcal{A} & \mathcal{B} \\ \begin{array}{cc}x\cdot Id_{q_0} & 0 \\ 0 & \textup{diag} \ [\tau_ly^{2\nu_l}]\end{array}& Id_{q}\end{array}\right), \quad \textup{with}& \ \tau_l:=\frac{\Gamma(1+\nu_l)}{\Gamma(1-\nu_l)}2^{2\nu_l},
\end{align*} where the submatrix $\textup{diag} \ [\tau_ly^{2\nu_l}]$ has entries for $l=q_0+1,..,q$.
\end{prop}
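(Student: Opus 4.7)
The plan is to substitute $\mu=ix$ into every Bessel entry of the matrix defining $F(\mu)$ and then to perform a sequence of elementary row rescalings so that the prefactors $\rho$, $x^{|\nu|+q/2}$, $e^{qxR}$, $(2\pi)^{-q/2}$ and $(\widetilde\gamma-\log x)^{q_0}$ emerge naturally, leaving exactly the determinant defining $p((\widetilde\gamma-\log x)^{-1},x^{-1})$, plus a controlled error.

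First I would collect the large-argument asymptotics, uniform on closed subsectors of the right half-plane. Using $J_\nu(iz)=i^\nu I_\nu(z)$, $Y_0(iz)=iI_0(z)-(2/\pi)K_0(z)$, the standard expansions $I_\nu(z)\sim e^z(2\pi z)^{-1/2}\sum_k a_k(\nu)z^{-k}$, and the exponential decay of $K_\nu$, each entry of $J^\pm(ix)$ acquires a uniform expansion of the form $e^{xR}$ times a power series in $1/x$. For the $l\le q_0$ rows the computation is delicate: the explicit $(\log\mu-\log 2+\gamma)$ inside $J_{-0}(\mu R)$ combines with the imaginary part $i\pi/2$ coming from $Y_0(ixR)$ (using the specified branch $\log(ix)=\log x+i\pi/2$) so that the leading logarithmic behaviour collapses to the clean factor $(\widetilde\gamma-\log x)$; similarly the derivative combination absorbs $\kappa$ into an $O(1/x)$ correction. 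One concludes that the $l$-th diagonal entry of $J^+(ix)$ is asymptotic to $\sqrt{x/(2\pi)}\,e^{xR}$ for $l\le q_0$ and to $2^{\nu_l}\Gamma(1+\nu_l)x^{-\nu_l}\sqrt{x/(2\pi)}\,e^{xR}$ for $l>q_0$, while the $J^-$ entries pick up the additional factors $(\widetilde\gamma-\log x)$ and $2^{-\nu_l}\Gamma(1-\nu_l)x^{\nu_l}$ respectively, each with multiplicative error $(1+O(1/x))$.

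Next I would execute two successive row rescalings on the $2q\times 2q$ block matrix. First, factor $\sqrt{x/(2\pi)}\,e^{xR}$ out of each of the bottom $q$ rows; this produces the global prefactor $(x/(2\pi))^{q/2}e^{qxR}$. Second, rescale each of these rows by the reciprocal of its diagonal entry in the $J^-$-block: this reduces the lower-right block to $I_q$, turns the lower-left block into $\mathrm{diag}((\widetilde\gamma-\log x)^{-1}I_{q_0},\,\tau_l x^{-2\nu_l})$, and extracts the additional factor $(\widetilde\gamma-\log x)^{q_0}\rho\, x^{|\nu|}$. Here the combination $\tau_l=\Gamma(1+\nu_l)\Gamma(1-\nu_l)^{-1}2^{2\nu_l}$ arises precisely as the ratio of the $J^+$- and $J^-$-diagonal entries after the common $\sqrt{x/(2\pi)}\,e^{xR}$ has been divided out. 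By inspection the remaining determinant is exactly $p((\widetilde\gamma-\log x)^{-1},x^{-1})$, and multiplying all prefactors reproduces the stated leading term.

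The main obstacle is the logarithmic bookkeeping in the $l\le q_0$ block: a mis-chosen branch of logarithm would leave residual $i\pi/2$ contributions that fail to cancel between $\log(ix)$ and the asymptotics of $Y_0(ixR)$, and one has to use precisely the branch convention fixed earlier. A secondary technical point is to upgrade $O(1/x)$ to a genuine power series in $x^{-1}$ with no constant term, uniform on $\Upsilon$: this rests on the uniform Hankel-type expansions of $I_\nu$ and $K_\nu$ in closed subsectors of the right half-plane, together with the observation that, once the explicit $\log x$ factors sitting inside $J_{-0}$ are extracted, the remaining series contains only integer powers of $1/x$.
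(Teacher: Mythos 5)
Your proposal is correct and follows essentially the same route as the paper: rewrite the entries of $F(ix)$ in terms of $I_\nu$ and $K_0$ (your branch bookkeeping for $J_{-0}(ixR)$ reproduces exactly the identity $J_{-0}(ixR)=-(\log x-\widetilde{\gamma})I_0(xR)-K_0(xR)$ that the paper quotes from [KLP2]), apply the uniform sector asymptotics of the modified Bessel functions, and extract the prefactors row by row from the lower block of the determinant, leaving $p((\widetilde{\gamma}-\log x)^{-1},x^{-1})$.
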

\begin{proof}
We present $F(ix)$ in terms of modified Bessel functions of first and second kind. We use following well-known relations $$(iz)^{-\nu}J_{\nu}(iz)=z^{-\nu}I_{\nu}(z),\ J'_{\nu}(z)=J_{\nu-1}(z)-\frac{\nu}{z}J_{\nu}(z)$$ to analyze the building bricks of $F(ix)$ where we put with $l=q_0+1,..,q$
\begin{align*}
A_l^{\pm}:=2^{\pm \nu_l}\Gamma(1\pm\nu_l)(ix)^{\mp\nu_l}\left(\kappa J_{\pm\nu_l}(ix R) +ix\sqrt{R}J'_{\pm\nu_l}(ix R)\right)= 2^{\pm \nu_l}\Gamma(1\pm\nu_l) \\ \cdot \left((\kappa \mp \frac{\nu_l}{\sqrt{R}})x^{\mp \nu_l}I_{\pm \nu_l}(xR)+\sqrt{R}x^{\mp \nu_l+1}I_{\pm \nu_l-1}(xR)\right)
\end{align*}
\begin{align*}
B:=\kappa J_0(ixR)+\sqrt{R}ixJ'_0(ixR)=\kappa I_0(xR)+\sqrt{R}xI'_0(xR),
\end{align*}
and using the identity $J_{-0}(ixR)=-(\log x-\widetilde{\gamma})I_0(xR)-K_0(xR)$ from [KLP2, Section 4.3, p. 857] where $K_*$ denotes the modified Bessel function of second kind:
\begin{align*}
C:=\kappa J_{-0}(ixR)+\sqrt{R}ixJ'_{-0}(ixR)=\kappa J_{-0}(ixR)+\sqrt{R}\frac{d}{dR}J_{-0}(ixR)&= \\ =\kappa (-(\log  x-\widetilde{\gamma})I_0(xR)-K_0(xR))+\sqrt{R}\frac{d}{dR}(-(\log  x-\widetilde{\gamma})I_0(xR)&- \\ -K_0(xR)) = \kappa (-(\log  x-\widetilde{\gamma})I_0(xR)-K_0(xR))&+ \\+ \sqrt{R}(-(\log  x-\widetilde{\gamma})xI'_0(xR)-xK'_0(xR))&.
\end{align*}
Now in order to compute the asymptotics of $F(ix)$ we use following property of the Bessel functions: as $x\to \infty$ with $x \in \Upsilon$ we have by [AS, p. 377] $$I_{\nu}(x), I'_{\nu}(x)\sim \frac{e^x}{\sqrt{2\pi x}}\left(1+O(x^{-1})\right)$$ $$\Rightarrow \ \frac{I_{\nu}(xR)}{I_{-\nu}(xR)}\sim 1, \ \frac{I_{-\nu-1}(xR)}{I_{-\nu}(xR)}\sim 1, \ \frac{I_{\nu-1}(xR)}{I_{-\nu}(xR)}\sim 1,$$ where as $|x|\to \infty$ with $x\in \Upsilon$, $O(x^{-1})$ is a power-series in $x^{-1}$ with no constant term. From here we obtain the asymptotics of the terms $A_l^{\pm},B,C$ as $x\to \infty, x \in \Upsilon$ with the same meaning for $O(x^{-1})$:
\begin{align*}
A_l^+= 2^{-\nu_l}\Gamma(1-\nu_l) x^{\nu_l} \frac{e^{xR}}{\sqrt{2\pi xR}}\left[2^{2\nu_l}\frac{\Gamma (1+\nu_l)}{\Gamma (1-\nu_l)}x^{-2\nu_l}\right] \times \\ \left((\kappa - \frac{\nu_l}{\sqrt{R}})+  x\sqrt{R}\right)\cdot (1+O(x^{-1}))= \\
= 2^{-\nu_l}\Gamma(1-\nu_l) x^{\nu_l+1/2} \frac{e^{xR}}{\sqrt{2\pi}} \left[2^{2\nu_l}\frac{\Gamma (1+\nu_l)}{\Gamma (1-\nu_l)}x^{-2\nu_l}\right] \times \\  (1+O(x^{-1})).
\end{align*}
Similarly we compute
\begin{align*}
A_l^-= 2^{-\nu_l}\Gamma(1-\nu_l) x^{\nu_l+1/2} \frac{e^{xR}}{\sqrt{2\pi}}\cdot (1+O(x^{-1})), \\
B= \frac{e^{xR}}{\sqrt{2\pi xR}}(\kappa +x \sqrt{R})\cdot (1+O(x^{-1}))= \sqrt{x}\frac{e^{xR}}{\sqrt{2\pi}}\cdot (1+O(x^{-1})), \\ C= \frac{e^{xR}}{\sqrt{2\pi xR}}(\widetilde{\gamma}-\log  x)(\kappa +x \sqrt{R})\cdot (1+O(x^{-1}))=\\=\sqrt{x}\frac{e^{xR}}{\sqrt{2\pi}}(\widetilde{\gamma}-\log  x)\cdot (1+O(x^{-1})),
\end{align*}
where we have further used the fact that by [AS, p. 378] $K_0(xR)$ is exponentially decaying as $|x|\to \infty, x\in \Upsilon$. Now substitute these asymptotics into the definition of $F(ix)$ and obtain
\begin{align*}
F(ix) = \left[\prod\limits_{l=q_0+1}^q 2^{-\nu_l}\Gamma(1-\nu_l) x^{\nu_l+\frac{1}{2}}\right] \left(\frac{e^{xR}}{\sqrt{2\pi}}\right)^q x^{q_0/2}(\widetilde{\gamma}-\log  x)^{q_0}\times \\ \textup{det}\left( \begin{array}{cc} \mathcal{A} & \mathcal{B} \\ \begin{array}{cc}(\widetilde{\gamma}-\log  x)^{-1}Id_{q_0} & 0 \\ 0 & 
* \end{array}& Id_{q}\end{array} \right) (1+O(x^{-1})), \\ \textup{where} \ *=\textup{diag}\left(2^{2\nu_l}\frac{\Gamma (1+\nu_l)}{\Gamma (1-\nu_l)}x^{-2\nu_l}\right).
\end{align*}
\end{proof} \ \\
Using the expansion in [KLP1, (4.9)] we evaluate the asymptotics of $p((\widetilde{\gamma}-\log  x)^{-1},x^{-1})$ and obtain in the notation introduced in the statement of Proposition \ref{asymptotics} 
\begin{align}
F(ix)= a_{j_o\alpha_o}\rho x^{|\nu|+\frac{q}{2}-2\alpha_o}\left(\frac{e^{xR}}{\sqrt{2\pi}}\right)^{q} (\widetilde{\gamma}-\log  x)^{q_o-j_o}\left(1+G(x)\right), \label{G-function}
\end{align}
where $G(x)=O(\frac{1}{\log(x)})$ and $G'(x)=O(\frac{1}{x\log^2(x)})$ as $|x| \to \infty$ with $x$ inside any fixed closed angle of the right half plane of $\C$. The coefficients $\alpha_0, j_0,a_{j_o\alpha_o}$ are defined in [KLP2, (2.2)] and are characteristic values of the boundary conditions $(\mathcal{A},\mathcal{B})$ at the cone singularity. We recall their definition here for convenience. 
\begin{defn}\label{a-j-a}
The expression $p(x,y)$ defined in Proposition \ref{asymptotics} can be written as a finite sum $$p(x,y)=\sum a_{j\A}x^jy^{\A}.$$ The characteristic values $\A_0,j_0,a_{\A_oj_o}$ are defined as follows:
\begin{enumerate}
\item The coefficient $\A_0$ is the smallest of all exponents $\A$ with $a_{j\A}\neq 0$.
\item The coefficient $j_0$ is the smallest of all exponents $j$ with $a_{j\A_o}\neq 0$.
\item The coefficient $a_{j_o\A_o}$ is the coefficient in the polynomial $p(x,y)$ of the summand $x^{j_o}y^{\A_o}$. 
\end{enumerate}
\end{defn} \ \\
\\[-7mm] Unfortunately the asymptotic expansion, obtained in Proposition \ref{asymptotics}, does not hold uniformly for arguments $z$ of $F(z)$ in a fixed closed angle of the positive real axis. This gap is closed by the following proposition.
\begin{prop}\label{F-real-asymptotics}
Fix any $\theta \in [0,\pi)$ and put $\Omega:=\{z \in \C | |\textup{arg}(z)|\leq \theta\}$. Then for $|z|\to \infty, z\in \Omega$ we have the following uniform expansion:
\begin{align*}
F(z)=\prod\limits_{l=q_0+1}^q\left\{ 2^{-\nu_l}\Gamma(1-\nu_l)z^{\nu_l+1/2}\sqrt{\frac{2}{\pi}}\cos (zR+\frac{\nu_l\pi}{2}+\frac{\pi}{4})\right\} \times \\
\times \left\{\sqrt{\frac{2z}{\pi}}(\log z -\widetilde{\gamma})\cos (zR-\frac{3}{4}\pi)\right\}^{q_0}\cdot \det M(z).
\end{align*}
Here the matrix $M(z)$ is given as follows:
\begin{align*}
M(z)=\left(\begin{array}{cc} \mathcal{A} & \mathcal{B} \\ \begin{array}{cc}b(z)\cdot Id_{q_0} & 0 \\ 0 & \textup{diag} \ [a^+_l(z)]\end{array} & \begin{array}{cc}c(z)\cdot Id_{q_0} & 0 \\ 0 & \textup{diag} \ [a^-_l(z)]\end{array} \end{array} \right),                                              
\end{align*}
where for $l=q_0+1,..,q$ we have
\begin{align*}
&a^+_l(z)=2^{2\nu_l}\frac{\Gamma(1+\nu_l)}{\Gamma(1-\nu_l)}z^{-2\nu_l}\frac{\cos(zR-\frac{\nu_l\pi}{2}+\frac{\pi}{4})}{\cos(zR+\frac{\nu_l\pi}{2}+\frac{\pi}{4})}
\cdot \left(1+\frac{f^+_l(z)}{\cos(zR-\frac{\nu_l\pi}{2}+\frac{\pi}{4})}\right), \\
&a^-_l(z)=1+\frac{f^-_l(z)}{\cos(zR+\frac{\nu_l\pi}{2}+\frac{\pi}{4})}, \ 
b(z)=\frac{1}{\widetilde{\gamma}-\log z} \cdot \left(1+\frac{f_b(z)}{\cos(zR-\frac{3}{4}\pi)}\right), \\
&c(z)=1+\frac{f_c(z)}{\cos(zR-\frac{3}{4}\pi)},
\end{align*}
and the functions $f^{\pm}_l(z), f_b(z),f_c(z)$ have the following asymptotic behaviour as $|z|\to \infty, z\in \Omega$
\begin{align*}
f^{\pm}_l(z)=e^{|\textup{Im}(zR)|}O\left(\frac{1}{|z|}\right), &\ \frac{d}{dz}f^{\pm}_l(z)=e^{|\textup{Im}(zR)|}O\left(\frac{1}{|z|}\right), \\
f_b(z)=e^{|\textup{Im}(zR)|}O\left(\frac{1}{|z|}\right), &\ \frac{d}{dz}f_b(z)=e^{|\textup{Im}(zR)|}O\left(\frac{1}{|z|}\right), \\
f_c(z)=e^{|\textup{Im}(zR)|}O\left(\frac{1}{|\log z|}\right), &\ \frac{d}{dz}f_c(z)=e^{|\textup{Im}(zR)|}O\left(\frac{1}{|\log z|}\right). \\
\end{align*}
\end{prop}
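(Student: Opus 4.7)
The strategy is to substitute the standard Hankel-type large-argument asymptotic expansions of the Bessel functions $J_\nu, Y_\nu$ and their derivatives into each entry of the determinant defining $F(z)$ in Proposition~\ref{implicit-eigenvalue-equation}, and then pull the row-dominant factor out of each of the bottom $q$ rows of the $2q\times 2q$ matrix. The Hankel expansions are valid uniformly in any closed sector $|\arg z|\leq \pi-\delta$, with remainders of the form $e^{|\textup{Im}\,z|}O(|z|^{-1})$ beyond the leading $|z|^{-1/2}$ factor, so they apply on $\Omega$ for any $\theta<\pi$. The key contrast with Proposition~\ref{asymptotics} is that there the imaginary-axis argument produced modified Bessel functions $I_\nu$ with purely exponential growth, whereas on or near the real axis the dominant behaviour is oscillatory and the cosine prefactors must be preserved throughout.

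\textbf{Building blocks.} For $l\geq q_0+1$, one uses $J'_\nu(w)=J_{\nu-1}(w)-(\nu/w)J_\nu(w)$ to rewrite the $(l,l)$- and $(l,q+l)$-entries of the bottom block as combinations of $J_{\pm\nu_l-1}(zR)$ and $J_{\pm\nu_l}(zR)$; the dominant term $z\sqrt R\,J_{\pm\nu_l-1}(zR)\sim\sqrt{2z/\pi}\cos(zR\mp\nu_l\pi/2+\pi/4)$ produces the cosine factors appearing in the statement, while the subleading contribution is bounded by $e^{|\textup{Im}(zR)|}O(|z|^{-1/2})$. For $l\leq q_0$, the identity $J_{-0}(zR)=(\pi/2)Y_0(zR)-(\log z-\widetilde\gamma)J_0(zR)$ together with the Hankel expansions of $Y_0,Y'_0,J_0,J'_0$ reduces $B(z):=\kappa J_0(zR)+z\sqrt R\,J'_0(zR)$ and $C(z):=\kappa J_{-0}(zR)+z\sqrt R\,J'_{-0}(zR)$ to explicit combinations of $\cos(zR-\pi/4)$ and $\cos(zR-3\pi/4)$. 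Since $\log z$ dominates the $O(1)$ coefficients inside $C(z)$, its leading part is $(\log z-\widetilde\gamma)\sqrt{2z/\pi}\cos(zR-3\pi/4)$, while the leading part of $B(z)$ is $-\sqrt{2z/\pi}\cos(zR-3\pi/4)$.

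\textbf{Factoring.} From row $l$ with $l\geq q_0+1$ one extracts the factor $\rho_l:=2^{-\nu_l}\Gamma(1-\nu_l)z^{\nu_l+1/2}\sqrt{2/\pi}\cos(zR+\nu_l\pi/2+\pi/4)$, which is the leading behaviour of the $A_l^-$-entry; from row $l$ with $l\leq q_0$ one extracts $\rho_0:=\sqrt{2z/\pi}(\log z-\widetilde\gamma)\cos(zR-3\pi/4)$, the leading of $C$. Their product is the stated global prefactor, and the residual determinant is precisely $\det M(z)$. The entry $a_l^+(z)$ is then obtained by dividing the $(l,l)$-entry by $\rho_l$, producing the leading ratio $2^{2\nu_l}\frac{\Gamma(1+\nu_l)}{\Gamma(1-\nu_l)}z^{-2\nu_l}\cos(zR-\nu_l\pi/2+\pi/4)/\cos(zR+\nu_l\pi/2+\pi/4)$, while the Bessel remainder appears in the form $f_l^+(z)/\cos(zR-\nu_l\pi/2+\pi/4)$ with $f_l^+(z)=e^{|\textup{Im}(zR)|}O(|z|^{-1})$; similarly for $a_l^-(z)$. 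For $b(z)=B(z)/\rho_0$ and $c(z)=C(z)/\rho_0$ the same procedure yields the stated form; the extra $1/(\log z-\widetilde\gamma)$ coming from $\rho_0$ is responsible for the $O(1/|\log z|)$ bound of $f_c$ rather than an $O(1/|z|)$ bound.

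\textbf{Derivative bounds and main obstacle.} The stated estimates on $\frac{d}{dz}f_l^\pm$, $\frac{d}{dz}f_b$, $\frac{d}{dz}f_c$ follow by differentiating the Hankel expansions of the Bessel functions term by term: the derivative of the leading cosine part produces a sine with an $R$ absorbed into the $O$-constant, and the $1/w$-power-series envelope is analytic and therefore differentiable without loss of order. The principal technical obstacle is the bookkeeping of errors: because $\cos(zR\pm\nu_l\pi/2+\pi/4)$ and $\log z-\widetilde\gamma$ are not bounded below uniformly on $\Omega$, the Bessel remainders cannot be absorbed into a single $(1+o(1))$-factor but must be isolated with the offending cosine placed in the denominator of the correction, which is exactly the point of writing the corrections in the form $f/\cos$ as in the statement. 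Verifying that, after pulling out $\rho_l$ and $\rho_0$, no worse denominator than those cosines (or $\log z-\widetilde\gamma$) survives, and that the derivative estimates are preserved by the division, is the bulk of the verification.
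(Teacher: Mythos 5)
Your proposal is correct and follows essentially the same route as the paper: insert the uniform large-argument (Hankel) asymptotics of $J_\nu, Y_\nu$ into the building blocks $\kappa J_{\pm\nu_l}(zR)+z\sqrt{R}J'_{\pm\nu_l}(zR)$, $B$ and $C$ via the recurrences $J'_\nu=J_{\nu-1}-\tfrac{\nu}{z}J_\nu$, $J_0'=-J_1$, $Y_0'=-Y_1$, keep the cosine factors explicit with the remainders written as $f/\cos$, and pull the leading row factors out of the $2q\times 2q$ determinant to leave exactly $\det M(z)$. The only point treated more casually than in the paper is the derivative estimates, which the paper justifies by appealing to the explicit remainder form of the expansions in [GRA, 8.451] rather than by term-by-term differentiation of an asymptotic series; with that reference your argument matches the paper's.
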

\begin{proof}
The formulas [AS, 9.2.1, 9.2.2] provide the standard asymptotic behaviour of Bessel functions as $|z|\to \infty, z\in \Omega$
\begin{align*}
J_{\nu}(z)=\sqrt{\frac{2}{\pi z}}\left(\cos (z-\frac{\nu \pi}{2}-\frac{\pi}{4})+f(z)\right), \ f(z)=e^{|\textup{Im}(z)|}O\left(\frac{1}{|z|}\right), \\
Y_{\nu}(z)=\sqrt{\frac{2}{\pi z}}\left(\sin (z-\frac{\nu \pi}{2}-\frac{\pi}{4})+g(z)\right), \ g(z)=e^{|\textup{Im}(z)|}O\left(\frac{1}{|z|}\right).
\end{align*}
Here $\nu \in \R$ and the expansions are uniform in the closed angle $\Omega$. Moreover we infer from the more explicit form of asymptotics in [GRA, 8.451]:
$$\frac{d}{dz}f(z)=e^{|\textup{Im}(z)|}O\left(\frac{1}{|z|}\right), \ \frac{d}{dz}g(z)=e^{|\textup{Im}(z)|}O\left(\frac{1}{|z|}\right).$$
We apply these asymptotics in order to analyze the asymptotic behaviour as $|z|\to \infty, z\in \Omega$ of the following building bricks of $F(z)$:
\begin{align*}
A^{\pm}_l &:=2^{\pm\nu_l}\Gamma (1\pm \nu_l)z^{\mp \nu_l}\left(\kappa J_{\pm \nu_l}(zR)+z\sqrt{R}J'_{\pm \nu_l}(zR)\right), \ l=q_0+1,..,q, \\
B &:=\kappa J_0(zR)+z\sqrt{R}J'_0(zR), \\
C &:=\kappa J_{-0}(zR)+z\sqrt{R}J'_{-0}(zR).
\end{align*}
Straightforward application of the asymptotics for $J_{\nu}(z)$ and $Y_{\nu}(z)$ as $|z|\to \infty, z\in \Omega$ and furthermore the use of the well-known formulas 
\begin{align*}
J'_{\nu}(z)=J_{\nu-1}(z)-\frac{\nu}{z}J_{\nu}(z),\\
J'_0(z)=-J_1(z), \ Y'_0(z)=-Y_1(z),
\end{align*}
lead to the following intermediate results:
\begin{align*}
&A^+_l=2^{\nu_l}\Gamma(1+\nu_l)z^{-\nu_l+1/2}\sqrt{\frac{2}{\pi}}\cos(zR-\frac{\nu_l\pi}{2}+\frac{\pi}{4})
\cdot \left(1+\frac{f^+_l(z)}{\cos(zR-\frac{\nu_l\pi}{2}+\frac{\pi}{4})}\right), \\
&A^-_l=2^{-\nu_l}\Gamma(1-\nu_l)z^{\nu_l+1/2}\sqrt{\frac{2}{\pi}}\cos (zR+\frac{\nu_l\pi}{2}+\frac{\pi}{4})\cdot \left(1+\frac{f^-_l(z)}{\cos(zR+\frac{\nu_l\pi}{2}+\frac{\pi}{4})}\right), \\
&B=-\sqrt{\frac{2z}{\pi}}\cos (zR-\frac{3}{4}\pi)\cdot\left(1+\frac{f_b(z)}{\cos(zR-\frac{3}{4}\pi)}\right), \\
&C= \sqrt{\frac{2z}{\pi}}(\log z -\widetilde{\gamma})\cos (zR-\frac{3}{4}\pi)\cdot \left(1+\frac{f_c(z)}{\cos(zR-\frac{3}{4}\pi)}\right),
\end{align*}
where the functions $f^{\pm}_l(z), f_b(z)$ and their derivatives are of the asymptotics $e^{|\textup{Im}(zR)|}O\left(1/|z|\right)$ as $|z|\to \infty, z\in \Omega$. The function $f_c(z)$ and its derivative are of the asymptotics $e^{|\textup{Im}(zR)|}O\left(1/|\log z|\right)$, as $|z|\to \infty, z\in \Omega$. Recall finally the definition of $F(z)$:
\begin{align*}
F(z)=\det\left(\begin{array}{cc} \mathcal{A} & \mathcal{B} \\ \begin{array}{cc}B\cdot Id_{q_0} & 0 \\ 0 & \textup{diag} \ [A^+_l]\end{array} & \begin{array}{cc}C\cdot Id_{q_0} & 0 \\ 0 & \textup{diag} \ [A^-_l]\end{array} \end{array} \right).                                              
\end{align*}
Inserting the asymptotics for $A^{\pm}_l, B$ and $C$ into the definition of $F(z)$ we obtain the statement of the proposition.
\end{proof} \ \\
\\[-7mm] The following result on the spectrum of $\mathcal{L}$ is a corollary of Proposition \ref{asymptotics} and is necessary for the definition and discussion of certain contour integrals below.
\begin{cor}\label{finite-imaginary-zeros}
The self-adjoint operator $\mathcal{L}$ is bounded from below. The zeros of its implicit eigenvalue function $F(\mu)$ are either real or purely imaginary, where the number of the purely imaginary zeros is finite.
\\[3mm] The positive eigenvalues of $\mathcal{L}$ are given by squares of the positive zeros of $F(\mu)$. The negative eigenvalues of $\mathcal{L}$ are given by squares of the purely imaginary zeros of $F(\mu)$ with positive imaginary part, i.e. counting the eigenvalues of $\mathcal{L}$ and zeros of $F(\mu)$ with their multiplicities we have
\begin{align}\label{zeros-F}
\textup{Spec}\mathcal{L}\backslash \{0\}=\{\mu^2\in \R | F(\mu)=0, \mu>0 \wedge \mu=ix,x>0\}
\end{align}
\end{cor}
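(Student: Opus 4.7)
The plan is to deduce everything from Proposition \ref{implicit-eigenvalue-equation}, the self-adjointness of $\mathcal{L}$ fixed in \eqref{L-s.a.}, and the asymptotic expansion \eqref{G-function} of $F$ inherited from Proposition \ref{asymptotics}. First, by Proposition \ref{implicit-eigenvalue-equation} a complex number $\mu$ is a zero of $F$ precisely when $\mu^2 \in \mathrm{Spec}(\mathcal{L})$, and self-adjointness of $\mathcal{L}$ forces $\mu^2 \in \R$, hence $\mu \in \R \cup i\R$.

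Next I would exploit \eqref{G-function} to bound the negative spectrum. The prefactor $a_{j_0\alpha_0}\rho\, x^{|\nu|+q/2-2\alpha_0}\,e^{qxR}(2\pi)^{-q/2}(\widetilde{\gamma}-\log x)^{q_0-j_0}$ is nonzero for all sufficiently large $x > 0$, and $1+G(x) \to 1$ since $G(x) = O(1/\log x)$; so there exists $X_0 > 0$ with $F(ix)\neq 0$ for all $x \geq X_0$. By Proposition \ref{implicit-eigenvalue-equation} every purely imaginary zero $\mu = ix$, $x > 0$, therefore satisfies $x < X_0$ and yields a negative eigenvalue $-x^2 \in (-X_0^2, 0)$; hence $\mathcal{L}$ is bounded from below by $-X_0^2$. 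Finiteness of the purely imaginary zeros then follows from the discreteness of $\mathrm{Spec}(\mathcal{L})$, which is standard for a self-adjoint Sturm-Liouville problem on the bounded interval $(0,R]$ whose domain embeds compactly into $L^2((0,R),\C^q)$.

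Finally I would verify the precise correspondence \eqref{zeros-F}: the map sending $\lambda \in \mathrm{Spec}(\mathcal{L})\setminus\{0\}$ to $\sqrt{\lambda}$ for $\lambda>0$ and to $i\sqrt{-\lambda}$ for $\lambda<0$ lands in the specified zero set of $F$ by Proposition \ref{implicit-eigenvalue-equation}, and every zero in that set produces an eigenvalue by the same proposition. The match of multiplicities reduces to identifying the $\mu^2$-eigenspace with the kernel of the $2q\times 2q$ block matrix constructed in the proof of Proposition \ref{implicit-eigenvalue-equation}, whose determinant is $F(\mu)$.

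The delicate point will be this last multiplicity identification, since $F(\mu)$ is not a polynomial in $\mu$ but a determinant mixing $\mu$-powers, logarithms and Bessel functions; however, away from $\mu = 0$ the Bessel lower block has full row rank (its rows are linear combinations of independent Bessel solutions evaluated at the regular boundary), so at any nonzero zero $\mu_0$ the vanishing order of $F$ equals the kernel dimension, which is precisely the geometric multiplicity of the eigenvalue $\mu_0^2$.
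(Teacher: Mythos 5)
Your argument follows essentially the paper's route: Proposition \ref{implicit-eigenvalue-equation} together with self-adjointness of $\mathcal{L}$ gives that zeros of $F$ are real or purely imaginary, and the asymptotics \eqref{G-function} rules out purely imaginary zeros $ix$ with $x$ large, giving semiboundedness. Two points differ. For finiteness of the imaginary zeros you appeal to discreteness of $\mathrm{Spec}\,\mathcal{L}$ (compact resolvent), which is true but merely asserted; the paper stays with $F$ itself: the entries $\mu^{\mp\nu_l}J_{\pm\nu_l}(\mu R)$ and $J_{-0}(\mu R)$ are even entire functions of $\mu$, so $F$ is analytic with discrete zeros, and only finitely many can lie on the bounded piece of the imaginary axis left over by \eqref{G-function} --- no spectral theory is needed. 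The evenness of $F$ is also what the paper uses to note that $\pm\mu$ are simultaneous zeros attached to one and the same eigenfunction, which is the justification for listing only $\mu>0$ and $\mu=ix$, $x>0$ in \eqref{zeros-F}; you pass over this. Finally, your claim that the vanishing order of $F$ at a nonzero zero equals $\dim\ker$ of the $2q\times 2q$ matrix does not follow from the lower block having full row rank: full rank of a block controls the kernel dimension, not the order of vanishing of the determinant (already the $1\times 1$ family $(\mu-\mu_0)^2$ has kernel dimension $1$ but order $2$). That identification of vanishing order with eigenvalue multiplicity genuinely requires a separate argument (e.g.\ a Wronskian/Green-type identity for the $\mu$-derivative of $F$ at an eigenvalue); in fairness, the paper's own proof is equally terse on this point, asserting the one-to-one correspondence with multiplicities without proving the order statement.
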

\begin{proof}
The relation between zeros of $F(\mu)$ and eigenvalues of $\mathcal{L}$ is established in Proposition \ref{implicit-eigenvalue-equation}. The self-adjoint operator $\mathcal{L}$ has real spectrum, hence the zeros of $F(\mu)$ are either real or purely imaginary, representing positive or negative eigenvalues of $\mathcal{L}$, respectively.
\\[3mm] The standard infinite series representation of Bessel functions (see [AS, p.360]) implies that zeros of $F(\mu)$ are symmetric about the origin and any two symmetric zeros do not correspond to two linearly independent eigenfunctions of $\mathcal{L}$. Hence the non-zero eigenvalues of $\mathcal{L}$ are in one-to-one correspondence with zeros of $F(\mu)$ at the positive real and the positive imaginary axis.
\\[3mm] The asymptotics \eqref{G-function} implies in particular that depending on the characteristic values $j_0,q_0,a_{j_o\A_o}$ of the boundary conditions $(\mathcal{A},\mathcal{B})$, the implicit eigenvalue function $F(ix)$ goes either to plus or minus infinity as $x\in \R, x\to \infty$ and cannot become zero for $|x|$ sufficiently large. Since the zeros of the meromorphic function $F(\mu)$ are discrete, we deduce that $F(\mu)$ has only finitely many purely imaginary eigenvalues. Thus in turn, $\mathcal{L}$ has only finitely many negative eigenvalues, i.e. is bounded from below.
\end{proof}\ \\
\\[-8mm] Next we fix an angle $\theta \in (0,\pi /2)$ and put for any $a\in \R^+$:\begin{align*}
&\delta(a):=\{z \in \C | \textup{Re}(z)=a, |\arg(z)|\leq \theta\}, \\
&\rho(a):=\{z \in \C | |z|=a/\cos (\theta), |\arg(z)|\in [\theta, \pi /2 ]\}, \\
&\gamma(a):=\delta(a)\cup \rho(a),
\end{align*}
where the contour $\gamma(a)$ is oriented counter-clockwise, as in the Figure \ref{contour-vertman} below:
\begin{figure}[h]
\begin{center}
\includegraphics[width=0.4\textwidth]{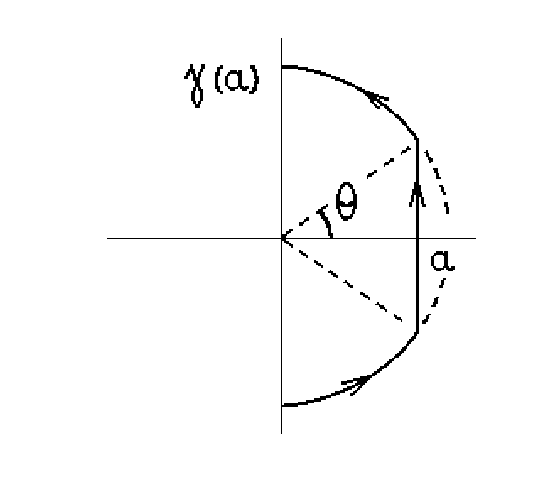}
\vspace{-3mm} \caption{The contour $\gamma(a)$ for the fixed $\theta \in (0,\pi /2)$ and $a\in \R^+$.}
\label{contour-vertman}
\end{center}
\end{figure}
\\[-2mm] Furthermore we fix the branch of logarithm in $\C\backslash \R^+$ with $0\leq Im \log (z) < 2\pi$. In this setup, the following result is a central application of the asymptotic expansions in Proposition \ref{asymptotics} and Proposition \ref{F-real-asymptotics}.
\begin{prop}\label{contour-integral-method}
There exists a sequence $(a_n)_{n\in \N}$ of positive real numbers with $a_n\to \infty$ as $n\to \infty$, such that $F(a_n)\neq 0$ for all $n \in \N$ and for $Re(s)>1/2$ the following integrals $$\int_{\gamma(a_n)}z^{-2s}\frac{d}{dz}\log F(z) dz, \ n \in \N$$ are well-defined and the sequence of integrals converges to zero as $n\to \infty$. 
\end{prop}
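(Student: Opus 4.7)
The plan is to exploit the two asymptotic expansions just proved: Proposition~\ref{F-real-asymptotics} governs $F(z)$ on the vertical segment $\delta(a_n)$, which lies in the closed sector $\Omega = \{|\arg(z)|\leq \theta\}$, while Proposition~\ref{asymptotics} together with \eqref{G-function} governs $F(iw)$ on the upper arc of $\rho(a_n)$, where $w$ lies in a closed sector about the positive real $w$-axis; the lower arc is to be handled by the analogous asymptotic in the lower $w$-half-plane, derived by the same Bessel-function manipulations. The argument then splits into two parts: (A) choose $a_n$ so that $F$ does not vanish on $\gamma(a_n)$ and $\tfrac{d}{dz}\log F$ is uniformly controlled there, and (B) estimate the two pieces of the integral.

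For (A) I would observe that the leading factor of $F(z)$ on $\delta(a_n)$ from Proposition~\ref{F-real-asymptotics} is a finite product of cosines $\cos(zR+\phi_k)$ whose phases belong to the finite set $\Phi := \{-3\pi/4\}\cup \{\pm \nu_l \pi/2 + \pi/4 : l = q_0+1, \ldots, q\}$. The identity
\begin{align*}
|\cos((a+iy)R+\phi)|^2 = \cos^2(aR+\phi) + \sinh^2(yR)
\end{align*}
reduces a uniform lower bound on each $|\cos(zR+\phi_k)|$ on all of $\delta(a_n)$ to the corresponding bound at $z = a_n$ on the real axis. Since $\Phi$ is finite I can pick a shift $c_0 \in (0,1)$ outside the finite bad set $\{c : \cos(c\pi+\phi_k) = 0 \text{ for some } k\}$ and set $a_n := (n + c_0)\pi/R$. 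The subdominant functions $f_\bullet(z)$ in Proposition~\ref{F-real-asymptotics} always appear in the form $f_\bullet(z)/\cos(zR+\phi_k)$, and the pointwise bounds on $f_\bullet$ together with the uniform lower bound on the cosines produce $O(|z|^{-1})$ (respectively $O(|\log z|^{-1})$) corrections on $\delta(a_n)$; consequently, for $n$ large the matrix $M(z)$ has determinant bounded below in modulus and $F(z)$ does not vanish on $\delta(a_n)$. On the arcs of $\rho(a_n)$ the exponential factor $e^{qwR}$ from \eqref{G-function}, combined with $\textup{Re}(w)\geq a_n \tan\theta > 0$, forces $F\neq 0$ for all $n$ large.

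For (B), differentiating the logarithm of the expansion in Proposition~\ref{F-real-asymptotics} on $\delta(a_n)$ gives
\begin{align*}
\tfrac{d}{dz}\log F(z) = -R\sum_k \tan(zR+\phi_k) + O(|z|^{-1}),
\end{align*}
where the $\tan$-terms are uniformly bounded via
\begin{align*}
|\tan(zR+\phi)|^2 = \frac{\sinh^2(yR) + \sin^2(aR+\phi)}{\sinh^2(yR) + \cos^2(aR+\phi)},
\end{align*}
which is bounded by the lower bound on the denominator. Hence $|\tfrac{d}{dz}\log F(z)|\leq C$ uniformly in $n$; combined with $|z^{-2s}|\leq C_s \, a_n^{-2\textup{Re}(s)}$ and $|\delta(a_n)| = 2 a_n \tan\theta$, this gives
\begin{align*}
\Bigl|\int_{\delta(a_n)} z^{-2s}\tfrac{d}{dz}\log F(z)\,dz\Bigr|\leq C'_s\, a_n^{1-2\textup{Re}(s)} \longrightarrow 0
\end{align*}
for $\textup{Re}(s)>1/2$. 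On the upper arc of $\rho(a_n)$, setting $z=iw$ and differentiating the logarithm of \eqref{G-function} yields $\tfrac{d}{dz}\log F(z) = -iqR + O(|z|^{-1})$; the constant $-iqR$ integrated against $z^{-2s}$ evaluates to $-iqR\cdot z^{1-2s}/(1-2s)$ at the two endpoints of the arc, each of modulus $(a_n/\cos\theta)^{1-2\textup{Re}(s)}\to 0$, while the $O(|z|^{-1})$ remainder, integrated against $|z^{-2s}|$ over an arc of length $O(a_n)$, contributes $O(a_n^{-2\textup{Re}(s)}) \to 0$. The lower arc is handled identically.

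The most delicate step is part (A): pinning down a sequence $a_n$ that simultaneously keeps every cosine in the leading factor of $F$ bounded below on the whole segment $\delta(a_n)$. This single choice has to deliver both the non-vanishing of $F$ on $\gamma(a_n)$ (needed for the integrand to be defined) and the uniform sup bound on $|\tfrac{d}{dz}\log F|$ that drives the length-times-sup estimate on $\delta(a_n)$. A secondary technical point is that Proposition~\ref{asymptotics} as stated only covers the upper arc of $\rho(a_n)$, so the lower arc formally requires the analogous expansion in the lower half-plane, which I expect to derive by repeating the Bessel-function manipulations of Proposition~\ref{asymptotics}.
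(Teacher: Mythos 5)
Your overall architecture matches the paper's: treat the arcs $\rho(a_n)$ via the asymptotics \eqref{G-function} (where, incidentally, the paper handles the lower arc simply by the symmetry $F(ix)=F(-ix)$ coming from the series representation of the Bessel functions, so no new expansion in the lower half-plane is needed), and treat the vertical segment $\delta(a_n)$ via Proposition \ref{F-real-asymptotics}, choosing $a_n$ in an arithmetic progression so that the finitely many cosines $\cos(a_nR+\phi_k)$ are bounded away from zero, and then using $|\cos((a+iy)R+\phi)|^2=\cos^2(aR+\phi)+\sinh^2(yR)$ to propagate this bound along the whole segment. That identity is a clean substitute for the paper's estimates \eqref{cos-k1}--\eqref{cos-k4}, and your length-times-sup estimate on $\delta(a_n)$ together with the endpoint/remainder estimate on the arcs is essentially the paper's part \eqref{P-Loya} plus the final step.

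However, there is a genuine gap exactly at the step you pass over in one clause: ``consequently, for $n$ large the matrix $M(z)$ has determinant bounded below in modulus and $F(z)$ does not vanish on $\delta(a_n)$.'' Entrywise control of $M(z)$ (cosines bounded below, corrections $f_\bullet/\cos$ small) does not yield a lower bound on $\det M(z)$: the Leibniz expansion of $\det M(z)$ is a sum of terms of comparable size whose leading coefficient is a $z$-dependent combination $g(z)$ of products of the ratios $\cos(zR-\tfrac{\nu_l\pi}{2}+\tfrac{\pi}{4})/\cos(zR+\tfrac{\nu_l\pi}{2}+\tfrac{\pi}{4})$ weighted by minors of $(\mathcal{A}\ \mathcal{B})$, and this sum can a priori cancel at points of $\delta(a_n)$ even though every individual cosine is bounded away from zero. (Note also that $\det M(z)$ itself decays like $z^{-2\A_0}(\widetilde{\gamma}-\log z)^{-j_0}$, so ``bounded below in modulus'' cannot be meant literally; what must be bounded below is $g(z)$.) This non-cancellation is the core of the paper's proof: there one shows $g(z)=\widetilde{C}\,(1+o(1))$ with $\widetilde{C}=a_{j_o\A_o}e^{\pm i\pi\A_0}\neq 0$ as $|\textup{Im}(zR)|\to\infty$, and near the real axis one uses that $g$ is meromorphic with discrete zeros and poles, together with the $2\pi/R$-periodicity of the cosine ratios, to choose the base point $a$ so that $|g|$ is bounded below (and $|g'|$ above) on all the shifted segments $\Delta(a)$. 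Without this ingredient both of your conclusions on $\delta(a_n)$ are unsupported: the well-definedness of the integrand (you need $F\neq 0$ on the segment, in particular $F(a_n)\neq 0$), and the uniform bound on $\tfrac{d}{dz}\log F$, whose derivative contains the term $g'(z)/g(z)$ (your formula $-R\sum_k\tan(zR+\phi_k)+O(|z|^{-1})$ omits this and the $O(1/|\log z|)$ contributions from $\log\det M$). Once you supply the lower bound on $g$ along the chosen segments, the rest of your estimates go through as in the paper.
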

\begin{proof}
Consider first the logarithmic form of the asymptotics \eqref{G-function} 
\begin{align*}
\log F(ix)=\log \left(a_{j_o\A_o}\cdot \rho \cdot (2\pi)^{-q/2}\right) + (|\nu|+\frac{q}{2}-2\A_0)\log x + q x R + \\
+ (q_0-j_0)\log (\widetilde{\gamma}-\log x) +\log (1+G(x)),
\end{align*}
where $G(x)=O(\frac{1}{\log(x)})$ and $G'(x)=O(\frac{1}{x\log^2(x)})$ as $|x| \to \infty$ with $ix\in \{z \in \C| |\arg(z)|\in [\theta, \pi /2], \ \textup{Im}(z)>0\}$. Same asymptotics holds for $ix\in \{z \in \C| |\arg(z)|\in [\theta, \pi /2], \ \textup{Im}(z)<0\}$, since $F(ix)=F(-ix)$ by the standard infinite series representation of Bessel functions [AS, p.360]. By straightforward calculations we see for Re$(s)>1/2$:
\begin{align}\label{P-Loya}
\int_{\rho(a_n)}z^{-2s}\frac{d}{dz}\log F(z) dz \xrightarrow{n\to \infty} 0,
\end{align}
for any sequence $(a_n)_{n\in \N}$ of positive real numbers with $a_n\to \infty$ as $n\to \infty$. Thus it remains to find a sequence $(a_n)_{n\in \N}\subset \R^+$ which goes to infinity and further ensures that 
\begin{align}
\int_{\delta(a_n)}z^{-2s}\frac{d}{dz}\log F(z) dz \xrightarrow{n\to \infty} 0,
\end{align}
where for each $n\in \N$ the integral is well-defined. In order to construct such a sequence, fix $a>0$ subject to the following conditions
\begin{align}
\label{condition1} \cos (aR\pm \frac{\nu_l \pi}{2}+\frac{\pi}{4})\neq 0, \ l=q_0+1,..,q;\\
\label{condition2} \cos (aR-\frac{3}{4}\pi)\neq 0.
\end{align}
Such a choice is always possible, due to discreteness of zeros of the holomorphic function $\cos (z)$. Given such an $a>0$, we define $$\Delta(a):=\bigcup_{k\in \N}\delta(a+\frac{2\pi}{R}k).$$ Using $\cos(z)=(e^{iz}+e^{-iz})/2$ we find for any $\xi \in \R$ with $\cos(aR+\xi)\neq 0$ as $|z|\to \infty, z\in \Delta(a)$
\begin{align}\label{cos-k1}
\cos (zR+\xi)=e^{|\textup{Im}(zR)|}O(1),
\end{align}
where $|O(1)|$ is bounded away from zero with the bounds depending only on the sign of Im$(zR)$, $a>0$ and $\xi \in \R$. Putting $\vec \A=(\A_{q_0+1},..,\A_q)\in \{0,1\}^{q_1}, q_1=q-q_0$, we obtain for the asymptotic behaviour of $\det M(z)$, introduced in Proposition \ref{F-real-asymptotics}, as $|z|\to \infty, z\in \Delta(a)$:
\begin{align*}
\det M(z)=\sum_{j=0}^{q_0}\sum_{\vec \A\in \{0,1\}^{q_1}}\sum_{\beta=0}^{q}\textup{const}(j,\vec \A,\beta)\left[\frac{1}{\widetilde{\gamma}-\log z}\left(1+O\left(\frac{1}{|z|}\right)\right)\right]^j\times \\ \times
\prod_{l=q_0+1}^{q}\left[z^{-2\nu_l}\frac{\cos(zR-\frac{\nu_l\pi}{2}+\frac{\pi}{4})}{\cos(zR+\frac{\nu_l\pi}{2}+\frac{\pi}{4})}
\cdot \left(1+O\left(\frac{1}{|z|}\right)\right)\right]^{\A_l} \cdot \left[1+O\left(\frac{1}{|\log z|}\right)\right]^{\beta},
\end{align*}
where $\textup{const}(j,\vec \A,\beta)$ depends moreover on $\mathcal{A}$ and $\mathcal{B}$. In fact one has by construction
\begin{align*}
\sum_{\vec \A \in I_{\A}} \sum_{\beta=0}^q \textup{const}(j,\vec \A, \beta)=a_{j\A},
\end{align*}
where $I_{\A}=\{\vec \A \in \{0,1\}^{q_1}| \sum_{l=q_o+1}^q\nu_l\A_l=\A\}$ and $a_{j\A}$ are the coefficients in the Definition \ref{a-j-a}. Multiplying out the expression for $\det M(z)$ we compute:
\begin{align*}
&\det M(z)=\sum_{j=0}^{q_0}\sum_{\vec \A\in \{0,1\}^{q_1}}\sum_{\beta=0}^{q}\textup{const}(j,\vec \A,\beta)\left[\frac{1}{\widetilde{\gamma}-\log z}\right]^j\times \\ &\times
\prod_{l=q_0+1}^{q}\left[z^{-2\nu_l}\frac{\cos(zR-\frac{\nu_l\pi}{2}+\frac{\pi}{4})}{\cos(zR+\frac{\nu_l\pi}{2}+\frac{\pi}{4})}\right]^{\A_l} \cdot \left[1+f_{j,\vec \A,\beta}(z)\right], \ f_{j,\vec \A,\beta}(z)=O\left(\frac{1}{|\log z|}\right).
\end{align*}
The asymptotic behaviour of $f_{j,\vec \A,\beta}(z)$ under differentiation follows from Proposition \ref{F-real-asymptotics}
\begin{align}\label{cos-k3}
\frac{d}{dz}f_{j,\vec \A,\beta}(z)=O\left(\frac{1}{|\log z|}\right).
\end{align}
Before we continue let us make an auxiliary observation, in the spirit of \eqref{cos-k1}. Under the condition \eqref{condition1} on the choice of $a>0$, we have for $z\in \Delta(a)$ and $l=q_0+1,..,q$:
\begin{align}\label{cos-k}
&\frac{\cos(zR - \frac{\nu_l \pi}{2}+\frac{\pi}{4})}{\cos(zR + \frac{\nu_l \pi}{2}+\frac{\pi}{4})}=C\cdot \left(1+\frac{e^{-2|\textup{Im}(zR)|}C'}{1+e^{-2|\textup{Im}(zR)|}C''}\right), 
\end{align}
where the constants $C,C',C''$ are given explicitly as follows:
\begin{align*}
&C=\exp \left(i \, \textup{sign}[\textup{Im}(z)](\nu_l \pi)\right), \\
&C'=\exp \left(i \, \textup{sign}[\textup{Im}(z)] (2aR-\nu_l\pi+\frac{\pi}{2})\right)-\exp \left(i \, \textup{sign}[\textup{Im}(z)] (2aR+\nu_l\pi+\frac{\pi}{2})\right), \\
&C''=\exp \left(i \, \textup{sign}[\textup{Im}(z)] (2aR+\nu_l\pi+\frac{\pi}{2})\right).
\end{align*}
Note that the constants are non-zero, depend only on $\textup{sign}[\textup{Im}(z)]$, the choice of $a$ and $\nu_l$. Hence for $|\textup{Im}(zR)|\to \infty$ the quotient \eqref{cos-k} tends to $C\neq 0$. Therefore, due to conditions \eqref{condition1} and \eqref{condition2}, there exist constants $\mathfrak{C}_1>0$ and $\mathfrak{C}_2>0$, depending only on $a$, such that for $z\in \Delta(a)$ and for all $l=q_0+1,..,q$ we have:
\begin{align}\label{cos-k4}
\mathfrak{C}_1 \leq \left|\frac{\cos(zR- \frac{\nu_l \pi}{2}+\frac{\pi}{4})}{\cos(zR+ \frac{\nu_l \pi}{2}+\frac{\pi}{4})}\right| \leq \mathfrak{C}_2.
\end{align}
In particular the cosinus terms in $\det M(z)$ are not relevant for its asymptotic behaviour as $|z|\to \infty, z\in \Delta(a)$. Now let us consider the summands in $\det M(z)$ of slowest decrease as $|z|\to \infty, z \in \Delta(a)$:
\begin{align*}
\left[\frac{1}{\widetilde{\gamma}-\log z}\right]^{j_0} z^{-2\A_0} \cdot & \left\{\sum_{\beta=0}^{q}\sum_{\vec \A\in I_{\A_o}}\textup{const}(j_0,\vec\A,\beta) \prod_{l=q_0+1}^{q}\left[\frac{\cos(zR-\frac{\nu_l\pi}{2}+\frac{\pi}{4})}{\cos(zR+\frac{\nu_l\pi}{2}+\frac{\pi}{4})}\right]^{\A_l}\right\}\\
=:&\left[\frac{1}{\widetilde{\gamma}-\log z}\right]^{j_0} z^{-2\A_0}g(z),
\end{align*}
where the coefficients $j_0,\A_0$ correspond to those in Definition \ref{a-j-a}. By similar calculus as behind \eqref{cos-k} we can write 
\begin{align*}
g(z)=\widetilde{C}\left(1+\frac{e^{-2|\textup{Im}(zR)|}C'(a,z)}{1+e^{-2|\textup{Im}(zR)|}C''(a,z)}\right),
\end{align*}
where $C'(a,z), C''(a,z)$ further depend on $\mathcal{A}, \mathcal{B}$ and $\nu_l, l=q_0+1,..,q$. Moreover they are bounded from above independently of $a>0$ and $z\in \Delta(a)$. The factor $\widetilde{C}$ is given explicitly as follows: 
\begin{align*}
\widetilde{C}=\sum_{\vec \A \in I_{\A_o}} \sum_{\beta=0}^q \textup{const}(j_0,\vec \A, \beta)\cdot \exp (i \, \textup{sign}[\textup{Im}(z)]\pi \A_0) \\
= a_{j_o \A_o}\cdot \exp (i \, \textup{sign}[\textup{Im}(z)]\pi \A_0)\neq 0,
\end{align*}
since $a_{j_o\A_o}\neq 0$ by the definition of characteristic values in Definition \ref{a-j-a}. Since $g(z)$ is a meromorphic function with discrete zeros and poles, we can choose $a>0$ sufficiently large, still subject to conditions \eqref{condition1} and \eqref{condition2}, such that $g(z)$ has no zeros and poles on $\delta(a)$ and 
$$\left| \frac{e^{-2|\textup{Im}(zR)|}C'(a,z)}{1+e^{-2|\textup{Im}(zR)|}C''(a,z)} \right| << 1,$$ for $z \in \delta(a)$ with the highest possible absolute value of its imaginary part, i.e. with $|\textup{Im}(z)|=a\cdot \tan \theta$. This guarantees that there exist constants $\mathfrak{C}'_1>0$ and $\mathfrak{C}'_2>0$, depending only on $a>0$, such that for $z\in \Delta(a)$
\begin{align}\label{cos-k5}
\mathfrak{C}'_1 \leq |g(z)| \leq \mathfrak{C}'_2.
\end{align}
By similar arguments we find that $|\frac{d}{dz}g(z)|$ is bounded from above for $z \in \Delta(a)$. Using \eqref{cos-k5} we finally obtain for $\det M(z)$ as $|z|\to \infty, z\in \Delta(a)$
\begin{align*}
\det M(z)=\left[\frac{1}{\widetilde{\gamma}-\log z}\right]^{j_0} z^{-\A_0}g(z)(1+f(z)), f(z)=O\left(\frac{1}{|\log (z)|}\right),
\end{align*}
as $|z|\to \infty, z\in \Delta(a)$. Using \eqref{cos-k3}, \eqref{cos-k4} and boundedness of $g(z),g'(z)$ we obtain 
\begin{align*}
\frac{d}{dz}f(z)=O\left(\frac{1}{|\log z|}\right).
\end{align*}
In total we have derived the following asymptotic behaviour of $F(z)$ as $|z|\to \infty, z\in \Delta(a)$:
\begin{align*}
&F(z)=\prod\limits_{l=q_0+1}^q\left\{ 2^{-\nu_l}\Gamma(1-\nu_l)z^{\nu_l+1/2}\sqrt{\frac{2}{\pi}}\cos (zR+\frac{\nu_l\pi}{2}+\frac{\pi}{4})\right\} \times \\
&\times \left\{\sqrt{\frac{2z}{\pi}}(\log z -\widetilde{\gamma})\cos (zR-\frac{3}{4}\pi)\right\}^{q_0}\left[\frac{1}{\widetilde{\gamma}-\log z}\right]^{j_0} z^{-\A_0}g(z)(1+f(z)),
\end{align*}
where there exist positive constants $\mathfrak{C}'_1, \mathfrak{C}'_2, \mathfrak{C}''$, depending only on $a>0$, such that
\begin{align*}
\mathfrak{C}'_1 \leq |g(z)| \leq \mathfrak{C}'_2, \ |g'(z)| \leq \mathfrak{C}'', \\
f(z)=O\left(\frac{1}{|\log z|}\right), \ \frac{d}{dz}f(z)=O\left(\frac{1}{|\log z|}\right).
\end{align*}
Note that for $N\in \N$ sufficiently large, the asymptotics above, together with the conditions \eqref{condition1}, \eqref{condition2} and \eqref{cos-k5}, imply that $F(a+2\pi k /R)\neq 0$ for all $k\in \N, k\geq N$ (note also that by construction $\sum_{l=q_o+1}^q\nu_l+q_1/2-\A_0>0$). Putting $a_n:=a+2\pi (N+n)/R, n \in \N$ we obtain a sequence $(a_n)_{n\in \N}$ of positive numbers, going to infinity as $n\to \infty$ and we infer from the asymptotics of $F(z)$ above, that for Re$(s)>1/2$ 
\begin{align}
\int_{\delta(a_n)}z^{-2s}\frac{d}{dz}\log F(z) dz \xrightarrow{n\to \infty} 0,
\end{align}
where by construction for each $n\in \N$ we have $F(a_n)\neq 0$, and hence the integrals are well-defined. Together with $\eqref{P-Loya}$ this finally proves the statement of the proposition.
\end{proof} \ \\
\\[-7mm] The asymptotics obtained in Proposition \ref{asymptotics} implies that the contour integral $$\frac{1}{2\pi i}\int_{\gamma}\mu^{-2s}\frac{d}{d \mu}\log F(\mu)d\mu $$ with the fixed branch of logarithm in $\C \backslash \R^+$ such that $0 \leq Im \log z < 2\pi$ and the contour $\gamma$ defined in Figure \ref{contour-loya} below, converges for $Re(s)>1/2$.
\pagebreak

\begin{figure}[h]
\begin{center}
\includegraphics[width=0.5\textwidth]{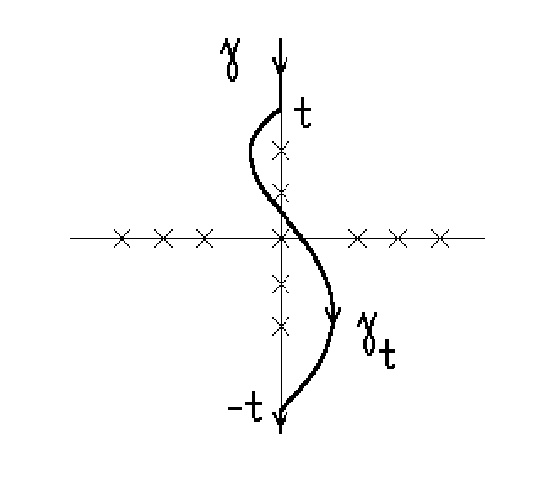}
\caption{The contour $\gamma$. The $\times$'s represent the zeros of $F(\mu)$. The number of purely imaginary zeros is finite by Corollary \ref{finite-imaginary-zeros}. The $t\in i\R$ is chosen such that $|t|^2$ is larger than the largest absolute value of negative eigenvalues of $\mathcal{L}$ (if present). The contour $\gamma_t\subset \gamma$ goes from $t$ to $-t$.}
\label{contour-loya}
\end{center}
\end{figure} 
The definition of $\gamma$ corresponds to [KLP1, Figure 1]. We can view the contour $\gamma$ to be closed up at infinity on the right hand side of $\C$. 
Then by construction $\gamma$ encircles the relevant zeros of $F(\mu)$ in \eqref{zeros-F}. As a consequence of Proposition \ref{contour-integral-method} we can apply the Argument Principle and finally arrive at
\begin{align*}
\zeta_{\mathcal{L}}(s)=\frac{1}{2\pi i}\int_{\gamma}\mu^{-2s}\frac{d}{d \mu}\log F(\mu)d\mu, \quad Re(s) > 1/2.
\end{align*}
This integral representation of the zeta-function is referred by K. Kirsten, P. Loya and J. Park as the "Contour integral method". Thus, on the basis of Proposition \ref{contour-integral-method}, we have verified applicability of the Contour integral method in the regular-singular setup, which is the basis for further arguments in [KLP1] and [KLP2].
\\[3mm] Breaking the integral into three parts $\gamma = \{i x | x\geq t\} \cup \gamma_t \cup \{i x | x \leq -t\}$ we obtain as in [KLP1, (4.10)]
\begin{align}
\zeta_{\mathcal{L}}(s)=\frac{\sin (\pi s)}{\pi }\int_{|t|}^{\infty}x^{-2s}\frac{d}{d x}\log F(ix)dx + \frac{1}{2\pi i}\int_{\gamma_t}\mu ^{-2s}\frac{F'(\mu)}{F(\mu)}d\mu. 
\end{align}
Analytic continuation of the first integral to $s=0$, see [KLP1, (4.12)] allows computation of the functional determinant of $\mathcal{L}$ after subtracting possible logarithmic singularities. We have the following result.

\begin{prop}\label{general-result} Under the assumption that $\ker \mathcal{L}=\{0\}$ we obtain in the notation of Propositions \ref{bahaviour-at-zero1} and \ref{asymptotics}
\begin{align*}
\exp \left[ - \lim_{s\to 0+}\frac{d}{ds}\left\{\frac{1}{2\pi i}\int_{\gamma}\mu^{-2s}\frac{d}{d\mu}\log   F(\mu)d\mu-(j_0-q_0)s\cdot \log  (s)\right\}\right]=\\ =\frac{(2\pi)^{q/2}}{a_{j_o\A_o}}(-2e^{\gamma})^{q_0-j_0}\prod_{l=q_0+1}^q\frac{2^{\nu_l}}{\Gamma(1-\nu_l)} \times \\ \textup{det}\left(\!\!\begin{array}{cc}\mathcal{A} & \mathcal{B} \\ \begin{array}{l}\kappa Id_{q_o} \qquad \qquad 0 \\ 0 \quad \textup{diag}(\kappa R^{\nu_l}+\nu_l R^{\nu_l-\frac{1}{2}})\end{array}& \begin{array}{l}(\kappa \log R + \frac{1}{\sqrt{R}})Id_{q_o} \qquad 0 \\ 0 \quad \textup{diag}(\kappa R^{-\nu_l}-\nu_l R^{-\nu_l-\frac{1}{2}})\end{array}\end{array}\!\!\right).
\end{align*}
\end{prop}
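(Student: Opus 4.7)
The approach is to exploit the contour-integral representation stated just before the proposition, namely
$$\zeta_{\mathcal{L}}(s) = \frac{\sin(\pi s)}{\pi}\, I(s) + H(s), \qquad I(s) := \int_{|t|}^{\infty} x^{-2s}\frac{d}{dx}\log F(ix)\,dx,$$
with $H(s) := \frac{1}{2\pi i}\int_{\gamma_t}\mu^{-2s}F'(\mu)/F(\mu)\,d\mu$. Since $\gamma_t$ meets no zero of $F$ by construction, $H$ is entire in $s$. Under the hypothesis $\ker\mathcal{L}=\{0\}$, one contracts $\gamma_t$ to a small loop around the origin and applies the argument principle to $F'/F$ to identify the exponentiated value $e^{-H'(0)}$, modulo $|t|$-dependent factors, with $F(0)$, which Proposition \ref{bahaviour-at-zero1} computes as the determinantal expression on the right-hand side of the statement.

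For $I(s)$ the meromorphic continuation is produced from the asymptotic expansion \eqref{G-function}. In logarithmic form one writes
$$\log F(ix) = \log\bigl(a_{j_0\A_0}\rho(2\pi)^{-q/2}\bigr) + \bigl(|\nu|+q/2-2\A_0\bigr)\log x + qRx + (q_0-j_0)\log(\widetilde{\gamma}-\log x) + \log(1+G(x)),$$
so that after differentiation and integration against $x^{-2s}$ three of the four pieces are handled by elementary means: the $1/x$-term produces a simple pole $(|\nu|+q/2-2\A_0)|t|^{-2s}/(2s)$, the constant $qR$ yields a function holomorphic at $s=0$, and the tail $(\log(1+G))'$ contributes an entire function by the decay bound $G'(x)=O(1/(x\log^2 x))$ noted after \eqref{G-function}.

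The single delicate term is $(q_0-j_0)\frac{d}{dx}\log(\widetilde{\gamma}-\log x) = -(q_0-j_0)/\bigl(x(\widetilde{\gamma}-\log x)\bigr)$. Substituting $u=\log x-\widetilde{\gamma}$ and then $v=2su$ converts the associated integral into
$$(q_0-j_0)\,e^{-2s\widetilde{\gamma}}\,E_1\!\bigl(2s(\log|t|-\widetilde{\gamma})\bigr),$$
where $E_1$ is the exponential integral. Its small-argument expansion $E_1(z)=-\gamma-\log z + O(z)$ produces the logarithmic singularity $-(q_0-j_0)\log(2s)$; multiplying by $\sin(\pi s)/\pi$ turns this into $(j_0-q_0)s\log s$ plus a piece linear in $s$, whose $s=0$ derivative, upon exponentiation and with the branch convention $0\leq \operatorname{Im}\log(z)<2\pi$ fixed in the excerpt, delivers the factor $(-2e^\gamma)^{q_0-j_0}$. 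The subtraction of $(j_0-q_0)s\log s$ in the statement is designed precisely to remove the divergent $-\log s$ piece of that derivative.

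Assembling, $-\frac{d}{ds}\big|_{s=0+}$ of the regularized expression gives, after exponentiation, the product of (i) the prefactor $(2\pi)^{q/2}/a_{j_0\A_0}\cdot\prod_{l=q_0+1}^q 2^{\nu_l}/\Gamma(1-\nu_l)$ coming from inverting $a_{j_0\A_0}\rho(2\pi)^{-q/2}$ and the definition of $\rho$ in Proposition \ref{asymptotics}, (ii) the factor $(-2e^\gamma)^{q_0-j_0}$ from the exponential-integral term above, and (iii) the determinantal factor $F(0)$ from the computation of $H'(0)$ via the argument principle and Proposition \ref{bahaviour-at-zero1}. The principal obstacle is the bookkeeping of $|t|$-dependent contributions: they must cancel exactly between $I$ and $H$ (since $\zeta_{\mathcal{L}}$ is independent of the choice of $t$), and this cancellation is what selects the value $F(0)$ rather than some value $F(\mu_*)$ at a generic point. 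The accompanying careful tracking of Euler-constant contributions from $E_1$ and of signs induced by the logarithm branch is routine but must be executed meticulously in order to recover the stated constants.
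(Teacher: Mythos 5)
Your analysis of the ray integral is essentially the paper's: the $\log x$ term gives the pole $\propto |t|^{-2s}/(2s)$, the $qRx$ term is harmless, the $G$-tail is entire, and the $\log(\widetilde{\gamma}-\log x)$ term, rewritten as $(q_0-j_0)e^{-2s\widetilde{\gamma}}E_1(2s(\log|t|-\widetilde{\gamma}))$, produces exactly the $s\log s$ singularity that the statement subtracts (this is the content of the entire function $g(s)$ with $g(0)=\gamma+\log(2(\log|t|-\widetilde{\gamma}))$ quoted from [KLP1, (4.11)]). The genuine gap is in how the determinantal factor arises. You attribute it to $H'(0)$, ``via the argument principle'' after contracting $\gamma_t$. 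That cannot work: $\gamma_t$ is an open arc from $t$ to $-t$, and under the hypothesis $\ker\mathcal{L}=\{0\}$ the function $F'/F$ is holomorphic near $0$, so $H'(0)=-\frac{1}{\pi i}\int_{\gamma_t}\log\mu\,\frac{F'(\mu)}{F(\mu)}\,d\mu$ tends to $0$ as $|t|\to 0$ (length $O(|t|)$ against an $O(\log|t|)$ integrand); it contributes nothing, not $\log F(0)$. In the paper the factor $F(0)$ comes from the \emph{other} part: collecting $-\zeta'(0)$ one obtains $-(|\nu|+\tfrac{q}{2}-2\alpha_0)\log|t|-qR|t|+(j_0-q_0)\bigl(\gamma+\log(2(\log|t|-\widetilde{\gamma}))\bigr)-\log(1+G(|t|))-\frac{1}{\pi i}\int_{\gamma_t}\log\mu\,\frac{F'}{F}\,d\mu$, and since $G$ is \emph{defined} through the identity \eqref{G-function}, the term $\log(1+G(|t|))$ recombines with the explicit $|t|$-terms into $-\log\bigl(F(i|t|)/(C(-1)^{q_0-j_0})\bigr)$; this is the exact relation \eqref{result2}, valid for each admissible $t$. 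One then lets $|t|\to 0$: the $\gamma_t$-integral vanishes and $F(i|t|)\to F(0)\neq 0$ — this is precisely where $\ker\mathcal{L}=\{0\}$ is used — after which Proposition \ref{bahaviour-at-zero1} supplies the determinant. Your closing remark that the $|t|$-dependence ``must cancel between $I$ and $H$'' and that this ``selects $F(0)$'' asserts the conclusion rather than proving it; without the recombination into $\log F(i|t|)$ and the limit $|t|\to 0$ you have exhibited no mechanism producing $F(0)$.

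A secondary inaccuracy, tied to the same point: the factor $(-2e^{\gamma})^{q_0-j_0}$ does not come from the $E_1$ term alone. Differentiating that term at $s=0$ leaves, besides $(q_0-j_0)(\gamma+\log 2)$, the $|t|$-dependent quantity $(q_0-j_0)\log(\log|t|-\widetilde{\gamma})$, which only disappears against $(q_0-j_0)\log(\widetilde{\gamma}-\log|t|)$ emerging from the recombination of $\log(1+G(|t|))$; with the fixed branch of the logarithm this difference is $(q_0-j_0)\log(-1)$, which is the source of the sign $(-1)^{q_0-j_0}$ in $(-2e^{\gamma})^{q_0-j_0}$. So both the sign and the removal of the residual $\log\log|t|$ term again require the exact identity \eqref{result2}, which your argument bypasses.
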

\begin{proof} Put $C:=a_{j_o\A_o}\cdot \rho \cdot (2\pi)^{-q/2}$ and rewrite the asymptotic expansion \eqref{G-function} for $|x|\to \infty$ with $x$ inside any fixed closed angle of the right half plane of $\C$ in its logarithmic form:
\begin{align}
\log F(ix)=\log C + (|\nu|+\frac{q}{2}-2\A_0)\log x + q x R + \nonumber \\
+ (q_0-j_0)\log (\widetilde{\gamma}-\log x) +\log (1+G(x)). \label{G-F}
\end{align}
In fact the asymptotics differs from the result in [KLP1, Proposition 4.3] only by a presence of an additional summand: $$\log (x\sqrt{R})^q.$$
Hence the same computations as those in [KLP1] give:
\begin{align*}
&\zeta (s, \mathcal{L}) - (j_0-q_0) s \log s = \\ = &\frac{\sin \pi s }{\pi }\left(|\nu|+\frac{q}{2}-2\alpha_0\right)\frac{|t|^{-2s}}{2s}+ \frac{\sin \pi s }{\pi }qR\frac{|t|^{-2s+1}}{2s-1}+ \\ + &\frac{\sin \pi s }{\pi }(j_0-q_0)g(s)+  \frac{\sin \pi s }{\pi }\int_{|t|}^{\infty}x^{-2s}\frac{d}{dx}\log  (1+G(x))dx+ \\ + &\frac{1}{2\pi i}\int_{\gamma_t}\mu^{-2s}\frac{F'(\mu)}{F(\mu)}d\mu,
\end{align*} 
where with [KLP1, (4.11)] the function $g(s)$ is entire and $g(0)=\gamma+\log  (2(\log  |t|-\widetilde{\gamma}))$. Explicit differentiation at $s\to 0+$ 
 leads to the following result (compare [KLP1, p.113]):
\begin{align*}
\lim_{s\to 0+}\frac{d}{ds}&\left\{\frac{1}{2\pi i}\int_{\gamma}\mu^{-2s}\frac{d}{d\mu}\log   F(\mu)d\mu-(j_0-q_0)s\cdot \log  (s)\right\}\\ =& -\left(|\nu|+\frac{q}{2}-2\alpha_0\right)\log  |t|-qR|t|+(j_0-q_0)\left(\gamma+\log  (2(\log  |t|-\widetilde{\gamma}))\right)\\-& \ \log  (1+G(|t|))-\frac{1}{\pi i}\int_{\gamma_t}\log  \mu\frac{F'(\mu)}{F(\mu)}d\mu =:Q.
\end{align*}
Using \eqref{G-F} we can evaluate $\log   (1+G(|t|))$ and by inserting it into the expression above we obtain 
\begin{align}\nonumber 
Q=- \!\log  \left(\frac{F(i|t|)}{C (-1)^{q_o-j_o}}\right)\!+\!(j_0-q_0)(\gamma +\log   2) - \\ - \frac{1}{\pi i}\int\limits_{\gamma_t}\!\log  \mu\frac{F'(\mu)}{F(\mu)}. \label{result2}
\end{align}
The formula above is a priori derived for $t=i|t|$ being on the upper-half of the imaginary axis. At this point we continue with the trick of [KLP1, Figure 2] to take $|t|\to 0$, which works well under the assumption $\ker \mathcal{L}=\{0\}$. 
\\[3mm] The integral over the finite contour $\gamma_t$ in \eqref{result2} vanishes as $t\to 0$. By triviality of $\ker \mathcal{L}$ we have $F(0)\neq0$ and obtain 
\begin{align}
Q=- \!\log  \left(\frac{F(0)}{C (-1)^{q_o-j_o}}\right)\!+\!(j_0-q_0)(\gamma +\log 2).
\end{align} 
By Proposition \ref{bahaviour-at-zero1} we arrive at the final result
\begin{align*}
Q=(j_0-q_0)(\gamma +\log   2)+ \log \left[\alpha_{j_oa_o}\rho (2\pi)^{-q/2}(-1)^{q_o-j_o}\right]- \\ - \log \textup{det}\left(\!\!\begin{array}{cc}\mathcal{A} & \mathcal{B} \\ \begin{array}{l}\kappa Id_{q_o} \qquad \qquad 0 \\ 0 \quad \textup{diag}(\kappa R^{\nu_l}+\nu_l R^{\nu_l-\frac{1}{2}})\end{array}& \begin{array}{l}(\kappa \log   R + \frac{1}{\sqrt{R}})Id_{q_o} \qquad 0 \\ 0 \quad \textup{diag}(\kappa R^{-\nu_l}-\nu_l R^{-\nu_l-\frac{1}{2}})\end{array}\end{array}\!\!\right).
\end{align*}
Exponentiating the expression proves the statement of the proposition.
\end{proof}
\begin{remark}\label{kernel-non-zero}
In case $\ker \mathcal{L}\neq \{0\}$ we can't apply Proposition \ref{general-result}. However the intermediate relation \eqref{result2} still holds. Further steps are possible if the asymptotics of $F(\mu)$ at zero is determined. 
\end{remark}

\subsection{Special Cases of Self-adjoint Extensions}
We compute the zeta-regularized determinants of some particular self-adjoint extensions of the model Laplacian $\triangle_{\nu-1/2}, \nu \geq 0$ in the notation of Subsection \ref{model-operator}, where we put $R=1$ for simplicity: 
$$\triangle_{\nu -1/2}= -\frac{d^2}{dx^2}+\frac{1}{x^2}\left[\nu^2-\frac{1}{4}\right]:C^{\infty}_0(0,1)\rightarrow C^{\infty}_0(0,1), \ \nu \geq 0.$$
According to Proposition \ref{laplacian-maximal} we get for the asymptotics of any $f \in \dom (\triangle_{\nu -1/2, \max})$ as $x\to 0$:
\begin{align} \label{laplacian-maxA} 
f(x)=&\, c_1(f) \cdot \sqrt{x} + c_2(f)\cdot \sqrt{x}\log (x) + O(x^{3/2}), \ \nu=0, \\ \label{laplacian-maxB}
f(x)=&\, c_1(f)\cdot x^{\nu+\frac{1}{2}} + c_2(f) \cdot x^{-\nu +\frac{1}{2}} + O(x^{3/2}), \ \nu \in (0,1), \\
f(x)=&\, O(x^{3/2}), \ \nu \geq 1.
\end{align}
The results of the previous subsection apply directly to the model situation for $\nu \in [0,1)$. In order to obtain results for $\nu \geq 1$ as well, we need to apply [L, Theorem 1.2]. We compute now a sequence of results on zeta-determinants for particular self-adjoint extensions which will become relevant afterwards.

\begin{cor}\label{1}
Let $D$ be the self-adjoint extension of $\triangle_{\nu-1/2}, \nu \geq 0$ with 
$$\dom(D):=\{f \in \dom (\triangle_{\nu -1/2,\max})|f(x)=O(\sqrt{x}), \ x\to 0;  f'(1)+\A f(1)=0\}.$$ 
Then for $\A\neq -\nu -1/2$ the operator $D$ is injective and 
$$\det\nolimits_{\zeta}(D)=\sqrt{2\pi}\frac{\A + \nu +1/2}{\Gamma(1+\nu)2^{\nu}}.$$
\end{cor}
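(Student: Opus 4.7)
The plan is to recognise $D$ as a one-dimensional instance of the self-adjoint realization $\mathcal{L}$ from \eqref{L-s.a.} when $\nu\in[0,1)$, apply Proposition \ref{general-result} directly, and fall back on [L, Theorem 1.2] for the essentially-self-adjoint regime $\nu\geq 1$.

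First I would dispose of injectivity. Any $u\in\ker D$ solves $\triangle_{\nu-1/2}u=0$, and by \eqref{fundamental1}--\eqref{fundamental2} the two-dimensional solution space is spanned either by $\sqrt{x},\sqrt{x}\log x$ or by $x^{\nu+1/2},x^{-\nu+1/2}$. The condition $u(x)=O(\sqrt{x})$ kills the second fundamental solution in each case, leaving only $u_1(x)=x^{\nu+1/2}$ (respectively $\sqrt{x}$ when $\nu=0$). Imposing the Robin condition at $x=1$ gives $u_1'(1)+\A u_1(1)=\nu+\tfrac{1}{2}+\A$, which vanishes precisely when $\A=-\nu-\tfrac{1}{2}$; so $\ker D=\{0\}$ under the stated hypothesis.

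For the main computation in the range $\nu\in[0,1)$, I read off from \eqref{laplacian-maxA}--\eqref{laplacian-maxB} that the condition $f(x)=O(\sqrt{x})$ is exactly $c_2(f)=0$, so the boundary data at the singular end are encoded by $(\mathcal{A},\mathcal{B})=(0,1)$; the regular end carries the generalized Neumann condition with parameter $\A$, and $R=1$ so $\kappa=\tfrac{1}{2}+\A$. I then compute the polynomial $p(x,y)$ of Definition \ref{a-j-a} as the determinant of a $2\times 2$ block, obtaining $p(x,y)=-x$ with $(q_0,j_0,\alpha_0,a_{j_0\alpha_0})=(1,1,0,-1)$ when $\nu=0$, and $p(x,y)=-\tau y^{2\nu}$ with $(q_0,j_0,\alpha_0,a_{j_0\alpha_0})=(0,0,2\nu,-\Gamma(1+\nu)2^{2\nu}/\Gamma(1-\nu))$ when $\nu\in(0,1)$. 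In both sub-cases $q_0-j_0=0$, so the factor $(-2e^{\gamma})^{q_0-j_0}=1$ and the formula of Proposition \ref{general-result} collapses to a constant multiple of the $2\times 2$ determinant
\[
\det\begin{pmatrix}0 & 1\\ \kappa+\nu & \kappa-\nu\end{pmatrix}=-(\A+\nu+\tfrac{1}{2}),
\]
with the understanding that in the $\nu=0$ sub-case the lower row reads $(\kappa,\kappa\log R+1/\sqrt{R})=(\kappa,1)$ and gives the same value. A short simplification, cancelling $\tau$ against the factor $2^{\nu}/\Gamma(1-\nu)$ arising from the product in Proposition \ref{general-result}, produces the stated formula $\sqrt{2\pi}(\A+\nu+\tfrac{1}{2})/(\Gamma(1+\nu)2^{\nu})$.

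For $\nu\geq 1$ the operator is essentially self-adjoint at $x=0$ by Corollary \ref{ess.s.a}, so Section \ref{funct-Det} does not apply directly; here I would invoke [L, Theorem 1.2] for regular-singular Sturm-Liouville operators in the limit point case, feeding in the solution $u_1(x)=x^{\nu+1/2}$ and the Robin datum at $x=1$, and verify that the answer is the same closed expression. The only real obstacle I expect is bookkeeping: matching our conventions to those of Proposition \ref{general-result} and [L, Theorem 1.2], and checking by continuity in $\nu$ that the three sub-cases $\nu=0$, $\nu\in(0,1)$ and $\nu\geq 1$ glue together into one formula.
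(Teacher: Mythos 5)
Your proposal is correct and takes essentially the same route as the paper: for $\nu\in[0,1)$ you encode $f(x)=O(\sqrt{x})$ as $c_2(f)=0$, i.e.\ $(\mathcal{A},\mathcal{B})=(0,1)$, compute the same characteristic values $(q_0,j_0,\A_0,a_{j_0\A_0})$ and feed the $2\times 2$ determinant into Proposition \ref{general-result}, while the remaining values of $\nu$ are handled by [L, Theorem 1.2] with the normalized solution $x^{\nu+1/2}$ and Wronskian $\A+\nu+1/2$, exactly as in the paper. The only cosmetic difference is that the paper applies [L, Theorem 1.2] uniformly for all $\nu\geq 0$, so no continuity-in-$\nu$ gluing of sub-cases is actually needed.
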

\begin{proof} We first consider $\nu \in [0,1)$. In this case the extension $D$ amounts to $$\dom(D):=\{f \in \dom (\triangle_{\nu -1/2,\max})| c_2(f)=0,  f'(1)+\A f(1)=0\}.$$
Consider the polynomial $p(x,y)$ defined in Proposition \ref{asymptotics}. Its explicit form for the given extension is
\begin{align*}
&p(x,y)=-x, \ \textup{for $\nu =0$}, \\
&p(x,y)=-\frac{\Gamma (1+\nu)}{\Gamma(1-\nu)}2^{2\nu}y^{2\nu},  \ \textup{for $\nu \in (0,1)$}.
\end{align*}
Recall the definition of characteristic values $\A_0,j_0,a_{j_o\A_o}$ in Definition \ref{a-j-a}. For the given extension $D$ we obtain
\begin{align*}
&j_0=q_0=1, \ q=1, a_{j_o\A_o}=-1, \ \textup{for $\nu =0$}, \\
&j_0=q_0=0, \ q=1, a_{j_o\A_o}=-\frac{\Gamma (1+\nu)}{\Gamma(1-\nu)}2^{2\nu}, \ \textup{for $\nu \in (0,1)$}.
\end{align*}
Evaluating with Proposition \ref{bahaviour-at-zero1} the corresponding implicit eigenvalue function $F(\mu)$ at $\mu=0$ we obtain for any $\nu \in [0,1)$ $$F(0)=-\frac{1}{2}-\A-\nu.$$
Thus the condition $\A \neq -\nu -1/2$ guarantees $F(0)\neq 0$ and thus $\ker D=\{0\}$. Applying Proposition \ref{general-result} we obtain the desired formula.
\\[3mm] In order to obtain a result for all $\nu \geq 0$, we need to apply [L, Theorem 1.2]. Consider mappings $\phi, \psi:(0,1)\to \R$ such that 
\begin{align*}
\triangle_{\nu -1/2} \phi=0, \phi(x)=O(\sqrt{x}), x\to 0\ \textup{and} \ \phi(x)=x^{\nu+1/2}\phi_0(x), \phi_0(0)=1, \\
\triangle_{\nu -1/2} \psi=0, \psi'(1)+\A\psi(1)=0 \ \textup{and} \ \psi(1)=1.
\end{align*}
These solutions exist and are uniquely determined. In the sense of [L, (1.38a), (1.38b)] the solutions $\phi, \psi$ are "normalized" at $x=0, x=1$, respectively. In the present setup the normalized solution $\phi$ is given explicitly as follows $$\phi(x)=x^{\nu+1/2}.$$
In particular we obtain for the Wronski determinant $$W(\psi, \psi)=\phi'(1)\psi(1)-\phi(1)\psi'(1)=\A+\nu +1/2.$$
By assumption $\A\neq -\nu -1/2$ and hence $W(\psi, \phi)\neq 0$. Note from the fundamental system of solutions to $\triangle_{\nu-1/2}f=0$ in \eqref{fundamental1} and \eqref{fundamental2} that $\ker D=\{0\}$. Thus we can apply [L, Theorem 1.2] where in the notation therein $\nu_0=\nu$ and $\nu_1=-1/2$: $$\det\nolimits_{\zeta}D=\frac{\pi W(\psi, \phi)}{2^{\nu_0+\nu_1}\Gamma(1+\nu_0)\Gamma(1+\nu_1)}=\sqrt{2\pi}\frac{\A+\nu+1/2}{2^{\nu}\Gamma(1+\nu)}.$$
This proves the statement. 
\end{proof}

\begin{cor}\label{4}
Let $D$ be the self-adjoint extension of $\triangle_{\nu-1/2}, \nu \geq 0$ with 
$$\dom(D):=\{f \in \dom (\triangle_{\nu-1/2,\max})| f(x)=O(\sqrt{x}), x\to 0; f(1)=0\}.$$
The zeta-regularized determinant of this self-adjoint realization is given by
$$\det\nolimits_{\zeta}(D)=\frac{\sqrt{2\pi}}{\Gamma(1+\nu)2^{\nu}}.$$
\end{cor}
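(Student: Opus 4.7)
My plan is to mirror the second half of Corollary \ref{1}'s proof and invoke [L, Theorem 1.2] directly for all $\nu\geq 0$. Since Proposition \ref{general-result} of the present paper is formulated only for generalized Neumann boundary conditions at the regular endpoint, Lesch's formula offers the most uniform route to the Dirichlet case treated here.

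First I identify the two solutions $\phi,\psi$ of $\triangle_{\nu-1/2}u=0$ normalized at the endpoints $x=0$ and $x=1$ respectively in the sense of [L, (1.38a), (1.38b)]. From the fundamental system \eqref{fundamental1}-\eqref{fundamental2}, the unique solution at $x=0$ satisfying $\phi(x)=O(\sqrt{x})$ and normalized as $\phi(x)=x^{\nu+1/2}\phi_0(x)$ with $\phi_0(0)=1$ is $\phi(x)=x^{\nu+1/2}$. For the solution $\psi$ satisfying the Dirichlet condition $\psi(1)=0$ together with the prescribed normalization at the regular endpoint, an elementary calculation using the fundamental systems \eqref{fundamental1}-\eqref{fundamental2} yields, up to sign,
$$\psi(x)=\sqrt{x}\log x \quad (\nu=0), \qquad \psi(x)=\frac{1}{2\nu}\bigl(x^{-\nu+1/2}-x^{\nu+1/2}\bigr) \quad (\nu>0).$$
Both $\phi$ and $\psi$ exist and are uniquely determined by the above data, exactly as in the proof of Corollary \ref{1}.

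Next I verify $\ker D=\{0\}$. Any $u\in\ker D$ solves $\triangle_{\nu-1/2}u=0$ with $u(x)=O(\sqrt{x})$ as $x\to 0$, so by \eqref{fundamental1}-\eqref{fundamental2} it must be a scalar multiple of $\phi(x)=x^{\nu+1/2}$. Since $\phi(1)=1\neq 0$, the Dirichlet condition at $x=1$ forces $u\equiv 0$. A short direct computation then yields $|W(\psi,\phi)(1)|=1$, nonzero, so the hypotheses of [L, Theorem 1.2] are met.

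Finally, applying [L, Theorem 1.2] with $\nu_0=\nu$ and $\nu_1=-1/2$ (the latter being the effective Bessel order for the nonsingular endpoint $x=1$) and using $\Gamma(1/2)=\sqrt{\pi}$, I obtain
$$\det\nolimits_{\zeta} D=\frac{\pi\, W(\psi,\phi)(1)}{2^{\nu-1/2}\,\Gamma(1+\nu)\,\Gamma(1/2)}=\frac{\sqrt{2\pi}}{2^{\nu}\Gamma(1+\nu)}.$$
The main obstacle, as in Corollary \ref{1}, is aligning the Lesch normalization convention at the regular endpoint so that the Wronskian has the correct sign in the final formula. This is settled by the observation that $D=d_{\nu-1/2}^{t}d_{\nu-1/2}$ restricted to the given domain is strictly positive, so the zeta-regularized determinant must be positive, selecting the sign of $\psi$ uniquely.
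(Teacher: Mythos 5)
Your route is essentially the second half of the paper's own proof made uniform: the paper first treats $\nu\in[0,1)$ by comparing with the "Neumann-extension" via [KLP1, Corollary 4.7] and [KLP1, Theorem 2.3], and only then invokes [L, Theorem 1.2] "by similar means as in the previous corollary" to cover all $\nu\geq 0$; you skip the KLP1 step and apply Lesch's formula directly, which is legitimate and in fact cleaner, since the kernel check and the Wronskian computation you give are exactly what is needed. Your verification of $\ker D=\{0\}$ and the explicit normalized solutions $\phi(x)=x^{\nu+1/2}$ and $\psi$ are correct in substance.

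One point needs repair, though it happens not to change the answer. For the Dirichlet condition at the regular endpoint $x=1$ the effective order in Lesch's convention is $\nu_1=+\tfrac12$, with the normalized solution behaving like the distance to the endpoint, $\psi(x)\sim(1-x)$, i.e.\ $\psi(1)=0$, $\psi'(1)=-1$; the value $\nu_1=-\tfrac12$ with normalization $\psi(1)=1$ is the Robin/Neumann case used in Corollary \ref{1} and is incompatible with a solution vanishing at $x=1$. Your $\psi(x)=\tfrac{1}{2\nu}(x^{-\nu+1/2}-x^{\nu+1/2})$ (and, after fixing the sign, $-\sqrt{x}\log x$ for $\nu=0$) is precisely the $\nu_1=+\tfrac12$-normalized solution, so pairing it with the constant $2^{\nu-1/2}\Gamma(1+\nu)\Gamma(1/2)$ is an inconsistent reading of [L, Theorem 1.2]; the final number survives only because of the identity $2^{-1/2}\Gamma(\tfrac12)=2^{1/2}\Gamma(\tfrac32)$, so that
\begin{align*}
\det\nolimits_{\zeta}D=\frac{\pi\,W(\psi,\phi)}{2^{\nu+1/2}\Gamma(1+\nu)\Gamma(3/2)}=\frac{\sqrt{2\pi}}{2^{\nu}\Gamma(1+\nu)}
\end{align*}
with $W(\psi,\phi)=\phi'(1)\psi(1)-\phi(1)\psi'(1)=1$. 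With $\nu_1=+\tfrac12$ and $\psi'(1)=-1$ the sign ambiguity you resolve by positivity also disappears, since the normalization itself fixes $W=+1$.
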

\begin{proof} We first consider $\nu \in [0,1)$. In this case the extension $D$ amounts to 
$$\dom(D):=\{f \in \dom (\triangle_{\nu-1/2,\max})| c_2(f)=0, f(1)=0\}.$$
As in the proof of Corollary \ref{1} we infer for the characteristic values of the extension $D$
\begin{align*}
&j_0=q_0=1, \ q=1, a_{j_o\A_o}=-1, \ \textup{for $\nu =0$}, \\
&j_0=q_0=0, \ q=1, a_{j_o\A_o}=-\frac{\Gamma (1+\nu)}{\Gamma(1-\nu)}2^{2\nu}, \ \textup{for $\nu \in (0,1)$}.
\end{align*}
Consider now the following self-adjoint extension of $\triangle_{\nu-1/2}$:
\begin{align*}
\dom (\triangle_{\nu-1/2}^{\mathcal{N}}):=\left\{\begin{array}{l} \{f \in \dom (\triangle_{\nu-1/2,\max})| c_2(f)=0, f(1)=0\}, \ \nu=0, \\ 
\{f \in \dom (\triangle_{\nu-1/2,\max})| c_1(f)=0, f(1)=0\}, \ \nu\in (0,1).
\end{array}\right.
\end{align*}
This extension is referred to as "Neumann-extension" (with Dirichlet boundary conditions at $x=1$) in [KLP1] and by [KLP1, Corollary 4.7] we have:
\begin{align*}
\det\nolimits_{\zeta}\triangle_{\nu-1/2}^{\mathcal{N}}=\sqrt{2\pi}\frac{2^{\nu}}{\Gamma(1-\nu)}, \ \nu\in [0,1).
\end{align*}
Note that for $\nu=0$ we have $\triangle_{\nu-1/2}^{\mathcal{N}}\equiv D$. Hence it remains to compute the zeta-determinant of $D$ for $\nu \in (0,1)$. Using [KLP1, Theorem 2.3], where arbitrary extensions at the cone-singularity are expressed in terms of the "Neumann extension", we obtain
\begin{align*}
\det\nolimits_{\zeta}(D)=\det\nolimits_{\zeta}\triangle_p^{\mathcal{N}}\cdot \frac{(-2e^{\gamma})^{q_o-j_o}}{a_{j_o\A_o}}\cdot \det \left(\begin{array}{cc}0&1\\1&1\end{array}\right)= \frac{\sqrt{2\pi}}{\Gamma(1+\nu)2^{\nu}}.
\end{align*}
This proves the statement for $\nu \in [0,1)$. In order to obtain the desired result for all $\nu\geq 0$, we apply [L, Theorem 1.2] by similar means as in the previous corollary.
\end{proof}\ \\
\\[-7mm] The remaining three results differ from the previous two corollaries by the fact that the self-adjoint realizations considered there are not injective. In this case we cannot apply [L, Theorem 1.2] and Proposition \ref{general-result}. We have to apply the intermediate result \eqref{result2}.

\begin{cor}\label{1'}
Consider a self-adjoint extension $D$ of $\triangle_{\nu-1/2}$ with $\nu=1/2$
$$\dom(D):=\{f \in \dom (\triangle_{\nu-1/2,\max})|c_2(f)=0, f'(1)- f(1)=0\}.$$
The zeta-regularized determinant of this self-adjoint realization is given by
$$\det\nolimits_{\zeta}(D)=\frac{2}{3}.$$
\end{cor}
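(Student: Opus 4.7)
The regular boundary condition $f'(1)-f(1)=0$ corresponds to $\alpha=-1$, which with $\nu=\tfrac{1}{2}$ is precisely the critical value $\alpha=-\nu-\tfrac{1}{2}=-1$ at which the formula of Corollary \ref{1} degenerates because $D$ acquires a nontrivial kernel. Indeed, for $\nu=\tfrac{1}{2}$ the operator reduces to $\triangle_{0}=-d^{2}/dx^{2}$, whose homogeneous solutions are affine; the condition $c_{2}(u)=0$ kills the constant and $u'(1)=u(1)$ is automatic for $u(x)\propto x$, so $\ker D=\operatorname{span}\{x\}$ is one-dimensional. Neither Corollary \ref{1} nor Proposition \ref{general-result} therefore applies directly; following Remark \ref{kernel-non-zero}, I would recover $\det_{\zeta}D$ by combining the asymptotics of $F(\mu)$ at zero with the values of $\det_{\zeta}D_{\alpha}$ for nearby non-critical boundary conditions. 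For $\nu=\tfrac{1}{2}$ the Bessel functions in Proposition \ref{implicit-eigenvalue-equation} reduce to elementary ones, and the selection rule $(\mathcal{A},\mathcal{B})=(0,1)$ yields, after the normalisation in $J^{\pm}(\mu)$,
\[
F(\mu)=\frac{\sin\mu}{\mu}-\cos\mu=\frac{\mu^{2}}{3}-\frac{\mu^{4}}{30}+O(\mu^{6}),
\]
so $F(\mu)=\mu^{2}G(\mu)$ with $G(0)=1/3\neq 0$, furnishing the required asymptotic data.

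I would then embed $D=D_{-1}$ in the family $D_{\alpha}$ with regular boundary condition $f'(1)+\alpha f(1)=0$ and $c_{2}(f)=0$. An integration by parts combined with Cauchy--Schwarz (valid because $c_{2}(f)=0$ forces $f(0)=0$) gives $\langle D_{\alpha}f,f\rangle=\int_{0}^{1}|f'|^{2}+\alpha|f(1)|^{2}\ge(1+\alpha)\int_{0}^{1}|f'|^{2}$, so $D_{\alpha}$ is strictly positive for every $\alpha>-1$, and Corollary \ref{1} yields
\[
\det\nolimits_{\zeta}D_{\alpha}=\frac{\sqrt{2\pi}\,(\alpha+1)}{\Gamma(3/2)\,2^{1/2}}=2(\alpha+1),\qquad \alpha>-1.
\]
The smallest eigenvalue $\lambda_{1}(\alpha)$ is the square of the smallest positive solution of $\mu\cot\mu=-\alpha$; the expansion $\mu\cot\mu=1-\mu^{2}/3+O(\mu^{4})$ gives $\lambda_{1}(\alpha)=3(\alpha+1)+O((\alpha+1)^{2})$ as $\alpha\to -1^{+}$, while the higher eigenvalues depend analytically on $\alpha$ and remain uniformly bounded away from zero. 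Consequently
\[
\det\nolimits_{\zeta}D=\lim_{\alpha\to-1^{+}}\frac{\det_{\zeta}D_{\alpha}}{\lambda_{1}(\alpha)}=\lim_{\alpha\to-1^{+}}\frac{2(\alpha+1)}{3(\alpha+1)+O((\alpha+1)^{2})}=\frac{2}{3}.
\]

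The hard part is justifying the limit identity rigorously, that is, showing that after removing the vanishing eigenvalue from $\det_{\zeta}D_{\alpha}$ the remaining quantity depends continuously on $\alpha$ up to $\alpha=-1$. This amounts to controlling $\zeta'_{D_{\alpha}}(0)+\log\lambda_{1}(\alpha)$ as $\alpha\to -1^{+}$, and follows from analytic perturbation theory for Sturm--Liouville operators together with the uniform Weyl asymptotics $\lambda_{n}(\alpha)\sim n^{2}\pi^{2}$ for the higher eigenvalues, which guarantee uniform convergence of the tail of the zeta series in a neighbourhood of $\alpha=-1$. An alternative approach, closer in spirit to Remark \ref{kernel-non-zero}, would keep $|t|$ fixed at a value large enough for the asymptotic expansion used to derive \eqref{result2} to be valid on $[|t|,\infty)$, evaluate $Q=\zeta'_{\mathcal{L}}(0)$ from \eqref{result2} using the explicit $F$ above together with the factorisation $F(\mu)=\mu^{2}G(\mu)$, and match the resulting finite-$|t|$ identity to the perturbation formula for $D_{\alpha}$ as $\alpha\to -1$.
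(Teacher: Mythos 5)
Your computational inputs are all correct: $\ker D=\langle x\rangle$, $F(\mu)=\frac{\sin\mu}{\mu}-\cos\mu=\frac{\mu^2}{3}+O(\mu^4)$, $\det\nolimits_{\zeta}D_{\A}=2(\A+1)$ for $\A>-1$ from Corollary \ref{1}, and $\lambda_1(\A)=3(\A+1)+O((\A+1)^2)$ from $\mu\cot\mu=-\A$. But the step on which your whole argument rests, namely
\begin{align*}
\det\nolimits_{\zeta}(D)=\lim_{\A\to-1^+}\frac{\det_{\zeta}D_{\A}}{\lambda_1(\A)},
\end{align*}
is exactly the point you leave unproved, and the justification you sketch does not carry it. Writing $\zeta_{D_{\A}}(s)=\lambda_1(\A)^{-s}+\tilde{\zeta}_{\A}(s)$, what you need is $\tilde{\zeta}'_{\A}(0)\to\zeta'(0,D)$ as $\A\to-1^+$, and $\zeta'(0)$ is a value of the \emph{analytic continuation}: the eigenvalue series diverges at $s=0$, so "uniform convergence of the tail of the zeta series near $\A=-1$" together with leading-order Weyl asymptotics is not a proof of continuity of $\zeta'(0)$ in the parameter. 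To close this you would have to control the continuation uniformly in $\A$ (e.g. via a uniform contour-integral or heat-trace representation for the family $F_{\A}$ with the small zero factored out), which is a genuine piece of work that your proposal replaces by an appeal to "analytic perturbation theory". As it stands this is a gap, not a detail.

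The paper avoids the deformation in $\A$ altogether, and its route is essentially the alternative you mention in your last sentence but do not carry out: since the nonzero spectrum of $D$ is encoded equally well by $\tilde{F}(\mu):=F(\mu)/\mu^2$, which satisfies $\tilde{F}(0)=1/3\neq0$, Remark \ref{kernel-non-zero} says the intermediate identity \eqref{result2} still holds with $F$ replaced by $\tilde{F}$ and the constant $C$ replaced by $\tilde{C}=-C=-a_{j_o\A_o}\rho\,(2\pi)^{-q/2}=1/2$ (the sign flip coming from $\tilde{F}(ix)=-F(ix)/x^2$). Letting $|t|\to0$ kills the finite contour integral and gives directly $\zeta'(0,D)=-\log\bigl(\tilde{F}(0)/\tilde{C}\bigr)=-\log(2/3)$, hence $\det\nolimits_{\zeta}(D)=2/3$. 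That argument uses only material already established in Section \ref{funct-Det} and needs no limit interchange in the boundary parameter; if you want to keep your deformation argument, the continuity of the kernel-removed determinant in $\A$ is the statement you must actually prove, and the cleanest way to do so is again through the $\tilde{F}$-representation, at which point the deformation becomes superfluous.
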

\begin{proof}
As in the proof of Corollary \ref{1} we infer for the characteristic values of the extension $D$
$$j_0=q_0=0, \ q=1, a_{j_o\A_o}=-\frac{\Gamma (1+\nu)}{\Gamma(1-\nu)}2^{2\nu}=-1. $$
Evaluating with Proposition \ref{bahaviour-at-zero1} the corresponding implicit eigenvalue function $F(\mu)$ at $\mu=0$ we obtain as in Corollary \ref{1} $$F(0)=-\frac{1}{2}+1-\nu=0.$$
This implies $\ker D\neq\{0\}$ and so unfortunately we cannot apply Proposition \ref{general-result} directly. Note however from the definition of $F(\mu)$ in Proposition \ref{implicit-eigenvalue-equation}
\begin{align*}
F(\mu)=-\sqrt{\frac{\pi}{2}}\frac{1}{\sqrt{\mu}}\left[-\frac{1}{2}J_{1/2}(\mu)+\mu J'_{1/2}(\mu)\right].
\end{align*}
With $J_{1/2}(\mu)=\sqrt{\frac{2}{\pi \mu}}\sin (\mu)$ we compute further
\begin{align*}
F(\mu)=\frac{\sin (\mu)}{\mu}-\cos(\mu). 
\end{align*}
From the Taylor expansion of $\sin (\mu),\cos(\mu)$ around zero we get
$$F(\mu)=\frac{1}{3}\mu^2+O(\mu^4), \ \textup{as} \ |\mu|\to 0.$$
Thus we equivalently can consider a different implicit eigenvalue function $$\tilde{F}(\mu)=\frac{F(\mu)}{\mu^2}, \tilde{F}(0)=1/3.$$
By Remark \ref{kernel-non-zero} and the relation \eqref{result2} we obtain with $j_0-q_0=0, R=1$
\begin{align}\label{help1'}
\zeta'(0,D)=- \!\log  \left(\frac{\tilde{F}(i|t|)}{\tilde{C}}\right)- \frac{1}{\pi i}\int\limits_{\gamma_t}\!\log  \mu\frac{\tilde{F}'(\mu)}{\tilde{F}(\mu)}.
\end{align}
Note that the second summand is now an entire integral over a finite curve. Taking $|t|\to 0$ this integral vanishes. Recall that the constant $C$ was introduced in \eqref{G-F} to describe the constant term in the asymptotics of $\log F(ix)$. By construction we have a relation between $\tilde{C}$ associated to $\tilde{F}(\mu)$ and $C$ associated to the original $F(\mu)$ 
$$\tilde{C}=-C=-a_{j_o\A_o}\rho (2\pi)^{-q/2},$$
where $\rho$ is defined in the statement of Proposition \ref{asymptotics}, leading in the present situation to $\rho=\Gamma(1-\nu)2^{-\nu}=\sqrt{\pi /2}$. Hence $\tilde{C}$ computes explicitly to $\tilde{C}=1/2$. Inserting this now into \eqref{help1'} and taking $|t|\to 0$ we obtain
\begin{align*}
\zeta'(0,D)=- \!\log  \left(\frac{\tilde{F}(0)}{\tilde{C}}\right)=-\log \left(\frac{2}{3}\right).
\end{align*}
This proves the statement with $\det\nolimits_{\zeta}(D)=\exp(-\zeta'(0,D))$.
\end{proof}

\begin{cor}\label{2'}
Consider a self-adjoint extension $D$ of $\triangle_{\nu-1/2}$ with $\nu=1/2$
$$\dom(D):=\{f \in \dom (\triangle_{\nu-1/2,\max})|c_1(f)=0, f'(1)=0\}.$$
The zeta-regularized determinant of this self-adjoint realization is given by
$$\det\nolimits_{\zeta}(D)=2.$$
\end{cor}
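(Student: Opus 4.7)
The plan is to follow closely the argument used in Corollary \ref{1'}, since the extension $D$ here again has non-trivial kernel (the constant functions satisfy both boundary conditions), so that Proposition \ref{general-result} cannot be applied directly and one must rely on the intermediate identity \eqref{result2} via the route outlined in Remark \ref{kernel-non-zero}.

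First I would extract the characteristic values. With $q_0=0$, $q=1$, $\nu_1=1/2$, and the boundary condition $c_1(f)=0$, the matrix pair of \eqref{L-s.a.} is $\mathcal{A}=(1)$, $\mathcal{B}=(0)$, and the polynomial of Proposition \ref{asymptotics} collapses to $p(x,y)=1$. Hence by Definition \ref{a-j-a} one has $j_0=0$, $\alpha_0=0$, and $a_{j_0\alpha_0}=1$. A direct application of Proposition \ref{bahaviour-at-zero1} with $R=1$ and $\kappa=1/2$ yields
\begin{equation*}
F(0)=\det\begin{pmatrix}1 & 0\\ 1 & 0\end{pmatrix}=0,
\end{equation*}
confirming $\ker D\neq\{0\}$ and ruling out a direct use of Proposition \ref{general-result}.

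Second, I would produce a closed form for $F(\mu)$. The determinant in Proposition \ref{implicit-eigenvalue-equation} collapses to $F(\mu)=J^-(\mu)$, and substituting $J_{-1/2}(\mu)=\sqrt{2/(\pi\mu)}\cos\mu$ together with the standard identity $J'_{-1/2}(\mu)=-\sqrt{2/(\pi\mu)}\sin\mu-(2\mu)^{-1}J_{-1/2}(\mu)$ the $\cos\mu$ terms cancel and one obtains
\begin{equation*}
F(\mu)=-\mu\sin\mu.
\end{equation*}
The Taylor expansion $F(\mu)=-\mu^2+\mu^4/6-\cdots$ shows that $F$ has a double zero at the origin, so I pass to the renormalized implicit eigenvalue function $\widetilde{F}(\mu):=F(\mu)/\mu^2=-\sin\mu/\mu$, which satisfies $\widetilde{F}(0)=-1$ and whose remaining zeros $\mu=n\pi$, $n\in\N$, give the non-zero spectrum of $D$.

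Third, I would feed $\widetilde{F}$ into the variant of \eqref{result2} described in Remark \ref{kernel-non-zero}. With $j_0-q_0=0$ and $R=1$ it reduces to
\begin{equation*}
\zeta'(0,D)=-\log\left(\frac{\widetilde{F}(i|t|)}{\widetilde{C}}\right)-\frac{1}{\pi i}\int_{\gamma_t}\log\mu\,\frac{\widetilde{F}'(\mu)}{\widetilde{F}(\mu)}\,d\mu.
\end{equation*}
The contour integral vanishes as $|t|\to 0$, because $\widetilde{F}(0)=-1\neq 0$ makes $\widetilde{F}'/\widetilde{F}$ regular at the origin, and the $\log\mu$ factor is integrable on a shrinking contour. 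The constant $\widetilde{C}$ is obtained by the same sign relation $\widetilde{C}=-C$ already derived in Corollary \ref{1'} (the normalization by $(ix)^2=-x^2$ shifts $\log C$ by $i\pi$), together with $C=a_{j_0\alpha_0}\rho(2\pi)^{-q/2}=1\cdot\sqrt{\pi/2}\cdot(2\pi)^{-1/2}=1/2$. Hence $\widetilde{C}=-1/2$ and, letting $|t|\to 0$,
\begin{equation*}
\zeta'(0,D)=-\log\left(\frac{-1}{-1/2}\right)=-\log 2,
\end{equation*}
so $\det\nolimits_{\zeta}(D)=\exp(-\zeta'(0,D))=2$.

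There is no real obstacle here; the only step requiring verification is that the passage $F\mapsto\widetilde{F}=F/\mu^2$ is compatible with the asymptotic framework of \eqref{G-F} on which \eqref{result2} rests. But this modification alters \eqref{G-F} only by the addition of a $-2\log x$ term and a shift of $\log C$ by $i\pi$, both of which are benign for the convergence results of Proposition \ref{contour-integral-method} and are exactly the mechanism already exploited in Corollary \ref{1'}.
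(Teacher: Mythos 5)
Your proposal is correct and follows essentially the same route as the paper: it identifies the characteristic values $j_0=q_0=0$, $a_{j_0\alpha_0}=1$, derives $F(\mu)=-\mu\sin\mu$, passes to $\widetilde{F}(\mu)=F(\mu)/\mu^2$ with $\widetilde{F}(0)=-1$, and evaluates \eqref{result2} with $\widetilde{C}=-C=-1/2$ to get $\zeta'(0,D)=-\log 2$. Your added justification of the sign relation $\widetilde{C}=-C$ (the $i\pi$ shift from $\log(ix)^2$) is a correct elaboration of what the paper leaves implicit.
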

\begin{proof}
Consider the polynomial $p(x,y)$ defined in Proposition \ref{asymptotics}. Its explicit form for the given extension is
$$p(x,y)=1.$$
Thus we get for the characteristic values $\A_0,j_0,a_{j_o\A_o}$, defined in Definition \ref{a-j-a} $$j_0=q_0=0, \ q=1, a_{j_o\A_o}=1. $$ Evaluating with Proposition \ref{bahaviour-at-zero1} the corresponding implicit eigenvalue function $F(\mu)$ at $\mu=0$ we obtain $$F(0)=\frac{1}{2}-\nu=0.$$
This implies $\ker D\neq\{0\}$ and so unfortunately we cannot apply Proposition \ref{general-result} directly. Note however from the definition of $F(\mu)$ in Proposition \ref{implicit-eigenvalue-equation}
\begin{align*}
F(\mu)=\sqrt{\frac{\pi}{2}}\sqrt{\mu}\left[\frac{1}{2}J_{-1/2}(\mu)+\mu J'_{-1/2}(\mu)\right].
\end{align*}
With $J_{-1/2}(\mu)=\sqrt{\frac{2}{\pi \mu}}\cos (\mu)$ we compute further
\begin{align*}
F(\mu)=-\mu\sin (\mu). 
\end{align*}
From the Taylor expansion of $\sin (\mu)$ around zero we get
$$F(\mu)=-\mu^2+O(\mu^4), \ \textup{as} \ |\mu|\to 0.$$
Thus we equivalently can consider a different implicit eigenvalue function $$\tilde{F}(\mu)=\frac{F(\mu)}{\mu^2}, \tilde{F}(0)=-1.$$
By Remark \ref{kernel-non-zero} and the relation \eqref{result2} we obtain with $j_0-q_0=0, R=1$
\begin{align}\label{help2'}
\zeta'(0,D)=- \!\log  \left(\frac{\tilde{F}(i|t|)}{\tilde{C}}\right)- \frac{1}{\pi i}\int\limits_{\gamma_t}\!\log  \mu\frac{\tilde{F}'(\mu)}{\tilde{F}(\mu)}.
\end{align}
Note that the second summand is now an entire integral over a finite curve. Taking $|t|\to 0$ this integral vanishes. As in the previous corollary we find by construction a relation between $\tilde{C}$ associated to $\tilde{F}(\mu)$ and $C$ associated to the original $F(\mu)$ 
$$\tilde{C}=-C=-a_{j_o\A_o}\rho (2\pi)^{-q/2},$$
where $\rho=\Gamma(1-\nu)2^{-\nu}=\sqrt{\pi /2}$ is defined in the statement of Proposition \ref{asymptotics}. The constant $\tilde{C}$ computes explicitly to $\tilde{C}=-1/2$. Inserting this now into \eqref{help2'} and taking $|t|\to 0$ we obtain
\begin{align*}
\zeta'(0,D)=- \!\log  \left(\frac{\tilde{F}(0)}{\tilde{C}}\right)=-\log 2.
\end{align*}
This proves the statement with $\det\nolimits_{\zeta}(D)=\exp(-\zeta'(0,D))$.
\end{proof}

\begin{cor}\label{2}
Consider a self-adjoint extension $D$ of $\triangle_{\nu-1/2}$ with $\nu=0$
$$\dom(D):=\{f \in \dom (\triangle_{\nu-1/2,\max})|c_2(f)=0, f'(1)-\frac{1}{2} f(1)=0\}.$$
The zeta-regularized determinant of this self-adjoint realization is given by
$$\det\nolimits_{\zeta}(D)=\sqrt{\frac{\pi}{2}}.$$
\end{cor}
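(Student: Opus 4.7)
With $\nu=0$ we are in the $p=-1/2$ setting of the subsection, so $\kappa = \tfrac{1}{2\sqrt{R}}+\alpha\sqrt{R}$ vanishes for $R=1$ and $\alpha = -\tfrac{1}{2}$. The boundary condition $c_2(f)=0$ at the cone tip is encoded by $\mathcal{A}=(0)$, $\mathcal{B}=(1)$, so the polynomial $p(x,y)$ from Proposition \ref{asymptotics} reduces to $p(x,y)=\det\left(\begin{smallmatrix}0 & 1 \\ x & 1\end{smallmatrix}\right) = -x$. Reading off the characteristic values from Definition \ref{a-j-a} gives $q_0=q=1$, $\alpha_0=0$, $j_0=1$, $a_{j_0\alpha_0}=-1$; in particular $j_0-q_0=0$, which removes the logarithmic singularity in $s\log s$.

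The plan is then to write $F(\mu)$ explicitly and reduce to a modified eigenvalue function with non-vanishing value at $\mu=0$. Using $\kappa=0$ together with $J_0'(\mu)=-J_1(\mu)$ one obtains
\begin{align*}
F(\mu) \;=\; -\bigl(\kappa J_0(\mu)+\mu J_0'(\mu)\bigr) \;=\; \mu J_1(\mu) \;=\; \frac{\mu^2}{2}+O(\mu^4), \quad |\mu|\to 0,
\end{align*}
so $F(0)=0$, $\ker D\neq \{0\}$, and Proposition \ref{general-result} does not apply directly. Following the strategy of Corollaries \ref{1'} and \ref{2'}, I set $\widetilde F(\mu):=F(\mu)/\mu^2$, noting that $\widetilde F(0)=\tfrac{1}{2}$.

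To finish, I apply the intermediate formula \eqref{result2} to $\widetilde F$. Since $j_0=q_0$, the $(j_0-q_0)(\gamma+\log 2)$ term disappears, and letting $|t|\to 0$ kills the finite contour integral $\int_{\gamma_t}$ because $\widetilde F$ is now holomorphic and non-zero at the origin; what remains is
\begin{align*}
\zeta'(0,D) \;=\; -\log\!\left(\frac{\widetilde F(0)}{\widetilde C}\right).
\end{align*}
The constant $\widetilde C$ associated to $\widetilde F(ix)$ differs from $C=a_{j_0\alpha_0}\rho(2\pi)^{-q/2}$ by the factor coming from $-2\log(ix)$ with the branch $0\le \operatorname{Im}\log z < 2\pi$, yielding $\widetilde C=-C$ exactly as in the two preceding corollaries. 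Here $\rho=\prod_{l=q_0+1}^{q}2^{-\nu_l}\Gamma(1-\nu_l)=1$ is an empty product, hence $\widetilde C = -(-1)\cdot 1\cdot(2\pi)^{-1/2}=(2\pi)^{-1/2}$. Substituting gives $\zeta'(0,D)=-\log\sqrt{\pi/2}$ and therefore $\det_\zeta(D)=\sqrt{\pi/2}$. The only real point of care is the bookkeeping for the constant $\widetilde C$ and verifying that the correction term $(j_0-q_0)s\log s$ is indeed absent here; both are mechanical once the characteristic values are computed above.
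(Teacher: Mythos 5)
Your proposal is correct and follows essentially the same route as the paper: identical characteristic values ($j_0=q_0=1$, $a_{j_0\alpha_0}=-1$), the same explicit reduction $F(\mu)=\mu J_1(\mu)$ with $\kappa=0$, the same renormalized eigenvalue function $\widetilde F(\mu)=F(\mu)/\mu^2$ with $\widetilde F(0)=1/2$, and the same application of the intermediate relation \eqref{result2} with $\widetilde C=-C=1/\sqrt{2\pi}$ and $|t|\to 0$. Your extra remark justifying $\widetilde C=-C$ via the branch of $\log(ix)$ only makes explicit what the paper asserts "by construction."
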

\begin{proof}
Consider the polynomial $p(x,y)$ defined in Proposition \ref{asymptotics}. Its explicit form for the given extension is
$$p(x,y)=-x.$$
According to Definition \ref{a-j-a} we obtain from above the characteristic values of the extension $D$
$$j_0=q_0=1, \ q=1, a_{j_o\A_o}=-1. $$ Evaluating with Proposition \ref{bahaviour-at-zero1} the corresponding implicit eigenvalue function $F(\mu)$ at $\mu=0$ we obtain $$F(0)=\det\left(\begin{array}{cc}0&1\\0&1\end{array}\right)=0.$$
This implies $\ker D\neq\{0\}$ and so unfortunately we cannot apply Proposition \ref{general-result} directly. Note however from the definition of $F(\mu)$ in Proposition \ref{implicit-eigenvalue-equation}
\begin{align*}
F(\mu)=-\mu J'_0(\mu)= \mu J_1(\mu),
\end{align*}
where we used the identity $J'_0=J_{-1}=-J_1$. Hence as $|\mu|\to 0$ we obtain the following asymptotics
$$F(\mu)=\mu J_1(\mu)\sim \frac{\mu^2}{2\Gamma(2)}=\frac{\mu^2}{2}.$$
Thus we equivalently can consider a different implicit eigenvalue function $$\tilde{F}(\mu)=\frac{F(\mu)}{\mu^2}, \tilde{F}(0)=1/2.$$
By Remark \ref{kernel-non-zero} and the relation \eqref{result2} we obtain with $j_0-q_0=0, R=1$
\begin{align}\label{help2}
\zeta'(0,D)=- \!\log  \left(\frac{\tilde{F}(i|t|)}{\tilde{C}}\right)- \frac{1}{\pi i}\int\limits_{\gamma_t}\!\log  \mu\frac{\tilde{F}'(\mu)}{\tilde{F}(\mu)}.
\end{align}
Note that the second summand is now an entire integral over a finite curve. Taking $|t|\to 0$ this integral vanishes. As before we find by construction a relation between $\tilde{C}$ associated to $\tilde{F}(\mu)$ and $C$ associated to the original $F(\mu)$ 
$$\tilde{C}=-C=-a_{j_o\A_o}\rho (2\pi)^{-q/2},$$
where $\rho$ is defined in the statement of Proposition \ref{asymptotics} and equals $1$ in the present case. The constant $\tilde{C}$ computes explicitly to $\tilde{C}=1/\sqrt{2\pi}$. Inserting this now into \eqref{help2} and taking $|t|\to 0$ we obtain
\begin{align*}
\zeta'(0,D)=- \!\log  \left(\frac{\tilde{F}(0)}{\tilde{C}}\right)=-\log \left(\sqrt{\frac{\pi}{2}}\right).
\end{align*}
This proves the statement with $\det\nolimits_{\zeta}(D)=\exp(-\zeta'(0,D))$.
\end{proof}

\section{Boundary Conditions for the de Rham Laplacian on a Bounded Generalized Cone}\label{section-laplace}\
\\[-3mm] In this section we discuss the geometry of a bounded generalized cone and following [L3] decompose the associated de Rham Laplacian in a compatible way with respect to its relative self adjoint extension. This decomposition allows us to study the relative boundary conditions of the Laplace operator explicitly and provides a basis for the computation of the associated zeta-regularized determinants.
\\[3mm] The question about the self-adjoint extensions of the Laplacians on differential forms of a fixed degree, on manifolds with conical singularities is addressed in [BL2, Theorems 3.7 and 3.8]. There, among many other issues, the relative extension is shown to coincide with the Friedrich's extension at the cone singularity, outside of the middle degrees.
\\[3mm] Using the decomposition of the complex we obtain further explicit results without the degree limitations.

\subsection{Regular-Singular Operators}\label{rescaling-trick}
Consider a bounded generalized cone $M=(0,R]\times N$ over a closed oriented Riemannian manifold $(N,g^N)$ of dimension $\dim N =n$, with the Riemannian metric on $M$ given by a warped product $$g^M = dx^2 \oplus x^2g^N.$$ 
The volume forms on $M$ and $N$, associated to the Riemannian metrics $g^M$ and $g^N$, are related as follows: 
$$\textup{vol}(g^M)=x^n dx \wedge \textup{vol}(g^N).$$
Consider as in [BS, (5.2)] the following separation of variables map, which is linear and bijective:
\begin{align}\label{separation}
\Psi_k : C^{\infty}_0((0,R),\Omega^{k-1}(N)\oplus \Omega^k(N))\to \Omega_0^k(M) \\
(\phi_{k-1},\phi_k)\mapsto x^{k-1-n/2}\phi_{k-1}\wedge dx + x^{k-n/2}\phi_k, \nonumber
\end{align}
where $\phi_k,\phi_{k-1}$ are identified with their pullback to $M$ under the natural projection $\pi: (0,R]\times N\to N$ onto the second factor, and $x$ is the canonical coordinate on $(0,R]$. Here $\Omega_0^k(M)$ denotes differential forms of degree $k=0,..,n+1$ with compact support in the interior of $M$. The separation of variables map $\Psi_k$ extends to an isometry with respect to the $L^2$-scalar products, induced by the volume forms vol$(g^M)$ and vol$(g^N)$. 
\begin{prop}\label{unitary} The separation of variables map \eqref{separation} extends to an isometric identification of $L^2-$Hilbert spaces
\begin{align*}
\Psi_k: L^2([0,R], L^2(\wedge^{k-1}T^*N\oplus \wedge^kT^*N, \textup{vol}(g^N)), dx)\to L^2(\wedge^kT^*M, \textup{vol}(g^M)).
\end{align*}
\end{prop}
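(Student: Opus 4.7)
The plan is to verify the isometry property at the level of compactly supported smooth sections and then extend by density. Since $\Psi_k$ is already declared (and easily seen) to be a linear bijection between $C^\infty_0((0,R),\Omega^{k-1}(N)\oplus \Omega^k(N))$ and $\Omega^k_0(M)$, the entire content of the statement lies in showing that $\Psi_k$ preserves $L^2$-norms, after which density of these smooth spaces in the respective $L^2$-completions yields the unique unitary extension.

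The key step is a pointwise computation at a fixed $(x,p)\in (0,R]\times N$. With respect to the warped product metric $g^M = dx^2\oplus x^2 g^N$, any pullback $\pi^*\omega$ of a $j$-form $\omega$ on $N$ satisfies $|\pi^*\omega|^2_{g^M}(x,p) = x^{-2j}|\omega|^2_{g^N}(p)$, because raising indices on a $j$-form picks up $j$ factors of $g^{N,-1} = x^{-2}(g^N)^{-1}$. Moreover, $dx$ is pointwise of unit length and is orthogonal to every pullback from $N$. Applying these observations to the two summands in
\[
\Psi_k(\phi_{k-1},\phi_k) = x^{k-1-n/2}\phi_{k-1}\wedge dx + x^{k-n/2}\phi_k
\]
gives
\[
|\Psi_k(\phi_{k-1},\phi_k)|^2_{g^M}(x,p)
= x^{-n}\bigl(|\phi_{k-1}|^2_{g^N}(p) + |\phi_k|^2_{g^N}(p)\bigr),
\]
since the exponents are engineered precisely so that $2(k-1-n/2) - 2(k-1) = -n$ and $2(k-n/2) - 2k = -n$.

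Integrating this identity against the volume form $\textup{vol}(g^M) = x^n dx\wedge \textup{vol}(g^N)$, the factor $x^{-n}$ cancels with $x^n$ and Fubini gives
\[
\|\Psi_k(\phi_{k-1},\phi_k)\|^2_{L^2(\wedge^k T^*M)}
= \int_0^R\! \bigl(\|\phi_{k-1}(x,\cdot)\|^2_{L^2(\wedge^{k-1}T^*N)}
+ \|\phi_k(x,\cdot)\|^2_{L^2(\wedge^k T^*N)}\bigr)dx,
\]
which is exactly $\|(\phi_{k-1},\phi_k)\|^2$ in the Hilbert space on the left-hand side of the claimed identification. Hence $\Psi_k$ is an isometry on its dense smooth domain.

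Finally, since $C^\infty_0((0,R),\Omega^{k-1}(N)\oplus \Omega^k(N))$ is dense in $L^2([0,R], L^2(\wedge^{k-1}T^*N\oplus \wedge^k T^*N),dx)$ (standard for tensor products of $L^2$-spaces) and $\Omega^k_0(M)$ is dense in $L^2(\wedge^k T^*M)$, the isometric linear bijection $\Psi_k$ extends uniquely, by the BLT theorem, to a unitary identification of the two Hilbert spaces. The only point requiring care is the bookkeeping of the $x$-powers in the pointwise norm computation; this is where the seemingly ad hoc exponents $k-1-n/2$ and $k-n/2$ in \eqref{separation} reveal themselves as the unique choice producing an isometry, and no genuine analytic obstacle arises.
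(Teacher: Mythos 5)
Your proof is correct, and it is precisely the standard verification underlying the paper's claim: the paper states this proposition without proof, referring to the separation of variables construction of [BS, (5.2)], and the intended argument is exactly your pointwise computation showing that the weights $x^{k-1-n/2}$, $x^{k-n/2}$ compensate the factors $x^{-2(k-1)}$, $x^{-2k}$ from the rescaled metric together with $x^{n}$ from $\textup{vol}(g^M)$, followed by density of the smooth compactly supported forms and extension by continuity. No gaps; the only implicit point worth stating explicitly is the pointwise orthogonality of the $dx$-component and the tangential component, which you do use.
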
\ \\
\\[-7mm] Under this identification we obtain for the exterior derivative, as in [BS, (5.5)]
\begin{equation}\label{derivative} 
\Psi_{k+1}^{-1} d_k \Psi_k= \left( \begin{array}{cc}0&(-1)^k\partial_x\\0&0\end{array}\right)+\frac{1}{x}\left( \begin{array}{cc}d_{k-1,N}&c_k\\0&d_{k,N}\end{array}\right),
\end{equation}
where $c_k=(-1)^k(k-n/2)$ and $d_{k,N}$ denotes the exterior derivative on differential forms over $N$ of degree $k$. Taking adjoints we find 
\begin{equation}\label{coderivative} 
\Psi_k^{-1} d_k^t \Psi_{k+1}= \left( \begin{array}{cc}0&0\\(-1)^{k+1}\partial_x&0\end{array}\right)+\frac{1}{x}\left( \begin{array}{cc}d_{k-1,N}^t&0\\c_k&d_{k,N}^t\end{array}\right).
\end{equation}
\\[3mm] Consider now the Gauss-Bonnet operator $D_{GB}^+$ mapping forms of even degree to forms of odd degree. The Gauss-Bonnet operator acting on forms of odd degree is simply the formal adjoint $D_{GB}^-=(D_{GB}^+)^t$. With respect to $\Psi_{+}:=\oplus \Psi_{2k}$ and $\Psi_{-}:=\oplus \Psi_{2k+1}$ the relevant operators take the following form:
\begin{align}
&\Psi^{-1}_-D_{GB}^+\Psi_+ = \frac{d}{dx}+\frac{1}{x}S_0, \quad \Psi^{-1}_+D_{GB}^-\Psi_- = -\frac{d}{dx}+\frac{1}{x}S_0, \label{gauss-bonnet}\\
 &\Psi^{-1}_+\triangle^+ \Psi_+= \Psi^{-1}_+(D_{GB}^+)^t\Psi_-\Psi_-^{-1}D_{GB}^+ \Psi_+ = -\frac{d^2}{dx^2}+ \frac{1}{x^2}S_0(S_0+1),\label{laplacian} \\ \nonumber
  &\Psi^{-1}_-\triangle^- \Psi_-= \Psi^{-1}_-(D_{GB}^-)^t\Psi_+\Psi_+^{-1}D_{GB}^- \Psi_- = -\frac{d^2}{dx^2}+ \frac{1}{x^2}S_0(S_0-1).
\end{align}
where $S_0$ is a first order elliptic differential operator on $\Omega^*(N)$. The transformed Gauss-Bonnet operators in \eqref{gauss-bonnet} are regular singular in the sense of [BS] and [Br, Section 3]. Moreover, the Laplace Operator on $k$-forms over $M$ transforms to 
\begin{align}\label{A_k,S_0}
\Psi_k \triangle_k \Psi_k^{-1}=-\frac{d^2}{dx^2}+ \frac{1}{x^2}A_k,
\end{align}
where the operator $A_k$ is simply the restriction of $S_0(S_0+(-1)^k)$ to $\Omega^{k-1}(N)\oplus \Omega^k(N)$. Note, that under the isometric identification $\Psi_*$ the previous non-product situation of the bounded generalized cone $M$ is now incorporated in the $x$-dependence of the tangential parts of the geometric Gauss-Bonnet and Laplace operators.
\\[3mm] Next we take boundary conditions into account and consider their behaviour under the isometric identification $\Psi_*$. More precisely consider the exterior derivatives and their formal adjoints on differential forms with compact support in the interior of $M$:
\begin{align*}
d_k:\Omega^k_0(M)\rightarrow \Omega_0^{k+1}(M), \\
d^t_k: \Omega^{k+1}_0(M)\rightarrow \Omega^k_0(M).
\end{align*}
Define the minimal closed extensions $d_{k,\min}$ and $d^t_{k,\min}$ as the graph closures in $L^2(\bigwedge\nolimits^*T^*M,\textup{vol}(g^M))$ of the differential operators $d_k$ and $d^t_k$ respectively. 
\\[3mm] The operators $d_{k,\min}$ and $d^t_{k,\min}$ are closed and densely defined. In particular we can form the adjoint operators and set for the maximal extensions:
\begin{align*}
d_{k,\max}:=(d^t_{k,\min})^*, \quad d^t_{k,\max}:=(d_{k,\min})^*.
\end{align*}
These definitions correspond to Theorem \ref{max-min-theorem}. The following result is an easy consequence of the definitions of the minimal and maximal extensions and of Proposition \ref{unitary}.
\begin{prop}\label{isometry-bc}
\begin{align*}
\Psi^{-1}_k(\dom (d_{k,\min}))=\dom ([\Psi^{-1}_{k+1}d_k\Psi_k]_{\min}), \\
\Psi^{-1}_k(\dom (d_{k,\max}))=\dom ([\Psi^{-1}_{k+1}d_k\Psi_k]_{\max}).
\end{align*}
\end{prop}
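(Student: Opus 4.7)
The plan is to reduce both equalities to the single functional-analytic observation that $\Psi_k$ is an isometric isomorphism which intertwines $d_k$ with $\Psi_{k+1}^{-1} d_k \Psi_k$ on a common core of smooth, compactly supported elements, and then to exploit the fact that graph closures and Hilbert-space adjoints are preserved under isometric conjugation.

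First, I would verify the elementary fact that $\Psi_k$ restricts to a linear bijection
\[
\Psi_k : C^{\infty}_0\!\bigl((0,R),\Omega^{k-1}(N)\oplus\Omega^{k}(N)\bigr) \longrightarrow \Omega^{k}_0(M).
\]
Indeed, any $\omega\in\Omega^{k}_0(M)$ supported in $(0,R)\times N$ decomposes uniquely as a sum of a piece containing $dx$ and a piece tangential to $N$; dividing out the $x$-factors $x^{k-1-n/2}$ and $x^{k-n/2}$ (harmless on the support, which is bounded away from $x=0$) recovers the preimage pair $(\phi_{k-1},\phi_k)$, smooth with compact support in $(0,R)$. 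By construction $\Psi_k$ takes this core to $\Omega^{k}_0(M)$ and the identity $\Psi_{k+1}^{-1} d_k \Psi_k$ (as computed in \eqref{derivative}) holds pointwise on this core.

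Second, for the minimal domains, I would argue directly from the graph-closure definition. An element $f \in L^2$ belongs to $\dom\bigl([\Psi_{k+1}^{-1} d_k \Psi_k]_{\min}\bigr)$ iff there exists a sequence $f_n$ in $C^{\infty}_0((0,R),\Omega^{k-1}(N)\oplus\Omega^{k}(N))$ with $f_n\to f$ and $\Psi_{k+1}^{-1}d_k\Psi_k f_n$ convergent in $L^2$. Applying the isometry $\Psi_k$ to the first sequence and $\Psi_{k+1}$ to the second, and using that $\Psi_k$ maps the core onto $\Omega^{k}_0(M)$, this is equivalent to $\Psi_k f_n \to \Psi_k f$ in $L^2(\wedge^{k}T^*M)$ with $d_k(\Psi_k f_n)$ convergent, i.e.\ $\Psi_k f \in \dom(d_{k,\min})$. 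Hence $\Psi_k^{-1}(\dom(d_{k,\min})) = \dom([\Psi_{k+1}^{-1}d_k\Psi_k]_{\min})$.

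Third, for the maximal domains, I would use the characterisation $d_{k,\max}=(d_{k,\min}^{t})^{*}$ from Theorem \ref{max-min-theorem} together with the analogous identity for the transformed operator. The formula \eqref{coderivative} shows that $\Psi_k^{-1}d_k^t\Psi_{k+1}$ is the formal adjoint of $\Psi_{k+1}^{-1}d_k\Psi_k$ on the core; by the previous paragraph their minimal extensions correspond under $\Psi$. Since taking Hilbert-space adjoints is preserved under conjugation by the unitary map $\Psi$, we obtain
\[
\Psi_k^{-1}\bigl(\dom(d_{k,\max})\bigr) = \Psi_k^{-1}\bigl(\dom((d_{k,\min}^t)^*)\bigr) = \dom\bigl(([\Psi_k^{-1}d_k^t\Psi_{k+1}]_{\min})^{*}\bigr) = \dom\bigl([\Psi_{k+1}^{-1}d_k\Psi_k]_{\max}\bigr).
\]

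The only non-routine point is confirming the surjectivity statement of the first step: that every smooth $k$-form on $M$ with compact support in $(0,R)\times N$ arises via $\Psi_k$ from a smooth, compactly supported map into $\Omega^{k-1}(N)\oplus\Omega^{k}(N)$. This is a matter of interpreting smoothness of $C^{\infty}_0((0,R),\Omega^{*}(N))$-valued sections correctly, and once it is granted, the rest of the argument is purely formal manipulation with isometries, graph closures and adjoints.
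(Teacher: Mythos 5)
Your argument is correct and follows exactly the route the paper intends: the paper states the proposition without proof as ``an easy consequence of the definitions of the minimal and maximal extensions and of Proposition \ref{unitary}'', and your elaboration — transporting graph-closure sequences through the isometry for the minimal domains, and using $d_{k,\max}=(d^t_{k,\min})^*$ together with the fact that unitary conjugation commutes with adjoints (plus \eqref{coderivative}) for the maximal domains — is precisely that argument spelled out. The point you flag about surjectivity of $\Psi_k$ onto $\Omega^k_0(M)$ is the standard separation-of-variables decomposition taken from [BS], so nothing is missing.
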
\ \\
\\[-7mm] Similar statements hold for the minimal and maximal extensions of the formal adjoint operators $d^t_k$. The minimal and the maximal extensions of the exterior derivative give rise to self-adjoint extensions of the associated Laplace operator
$$\triangle_k=d^t_kd_k+d_{k-1}d_{k-1}^t.$$ 
It is important to note that there are self-adjoint extensions of $\triangle_k$ which do not come from closed extensions of $d_k$ and $d_{k-1}$, compare the notion of "ideal boundary conditions" in [BL1]. However the most relevant self-adjoint extensions of the Laplacian indeed seem to come from closed extensions of the exterior derivatives. 
\\[3mm] We are interested in the relative and the absolute self-adjoint extensions of $\triangle_k$, defined as follows:
\begin{align}\label{definition}
\hspace{25mm}\triangle_k^{rel}:= &\, d^*_{k,\min}d_{k,\min}+d_{k-1,\min}d^*_{k-1,\min}= \\
\hspace{25mm}=&\, d^t_{k,\max}d_{k,\min}+d_{k-1,\min}d^t_{k-1,\max}, \nonumber \\
\triangle_k^{abs}:= &\, d^*_{k,\max}d_{k,\max}+d_{k-1,\max}d^*_{k-1,\max}= \\
=&\, d^t_{k,\min}d_{k,\max}+d_{k-1,\max}d^t_{k-1,\min}. \nonumber
\end{align}
As a direct consequence of the previous proposition and Proposition \ref{unitary} we obtain for the relative self-adjoint extension (absolute self-adjoint extension is discussed in a similar way):
\begin{cor}\label{unitary-bc}
Consider the following two complexes
\begin{align*}
(\Omega^*_0(M),d_k), \quad (C^{\infty}_0((0,R),C^{\infty}(\wedge^{k-1}T^*N\oplus \wedge^kT^*N)), \widetilde{d}_k:=\Psi^{-1}_{k+1}d_k\Psi_k).
\end{align*}
Then the relative self-adjoint extensions of the associated Laplacians 
\begin{align*}
\triangle_k^{rel}&= d^*_{k,\min}d_{k,\min}+d_{k-1,\min}d^*_{k-1,\min}, \\
\widetilde{\triangle}_k^{rel}&= \widetilde{d}^*_{k,\min}\widetilde{d}_{k,\min}+\widetilde{d}_{k-1,\min}\widetilde{d}^*_{k-1,\min}
\end{align*}
are spectrally equivalent, with $\Psi^{-1}_k(\dom (\triangle^{rel}_k))=\dom (\widetilde{\triangle}^{rel}_k)$ and $$\widetilde{\triangle}^{rel}_k=\Psi_k^{-1}\triangle^{rel}_k \Psi_k.$$
\end{cor}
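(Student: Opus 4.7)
The plan is to translate the domain of $\triangle_k^{rel}$ piece by piece under $\Psi_k$ using the two available tools: Proposition~\ref{unitary}, which says that $\Psi_*$ is an isometric isomorphism of the relevant $L^2$ Hilbert spaces, and Proposition~\ref{isometry-bc}, which says that $\Psi_*$ intertwines the minimal and maximal extensions of $d_k$ with those of $\widetilde{d}_k=\Psi_{k+1}^{-1}d_k\Psi_k$. Because $\Psi_*$ is unitary, taking Hilbert space adjoints commutes with conjugation by $\Psi_*$, so the analogous intertwining holds for $d^t_k$, i.e.\ $\Psi_{k+1}^{-1}(\dom d^t_{k,\max})=\dom \widetilde{d}^t_{k,\max}$ and $\Psi_{k+1}^{-1}(\dom d^t_{k,\min})=\dom \widetilde{d}^t_{k,\min}$, with the operators conjugate.

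First I would record these four domain identifications together with the operator intertwinings $d_{k,\min}=\Psi_{k+1}\widetilde{d}_{k,\min}\Psi_k^{-1}$, $d^t_{k,\max}=\Psi_k\widetilde{d}^t_{k,\max}\Psi_{k+1}^{-1}$, and the analogues with $\min/\max$ swapped. Next I would handle the composition $d^t_{k,\max}d_{k,\min}$ whose natural domain is $\{\omega\in\dom d_{k,\min}\mid d_{k,\min}\omega\in\dom d^t_{k,\max}\}$. An element $\omega$ lies in this domain iff $\Psi_k^{-1}\omega\in\dom \widetilde{d}_{k,\min}$ and $\widetilde{d}_{k,\min}\Psi_k^{-1}\omega=\Psi_{k+1}^{-1}d_{k,\min}\omega\in\dom \widetilde{d}^t_{k,\max}$, which is exactly the defining condition for $\Psi_k^{-1}\omega$ to belong to $\dom(\widetilde{d}^t_{k,\max}\widetilde{d}_{k,\min})$. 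Moreover on this domain
\begin{equation*}
\Psi_k^{-1}d^t_{k,\max}d_{k,\min}\Psi_k=\widetilde{d}^t_{k,\max}\widetilde{d}_{k,\min}.
\end{equation*}

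The second summand $d_{k-1,\min}d^t_{k-1,\max}$ is treated identically, shifting the index $k\mapsto k-1$ and noting that the relevant unitary is again $\Psi_k$ (the target of $d_{k-1}$ and source of $d^t_{k-1}$). Combining the two pieces, and intersecting domains, gives $\Psi_k^{-1}(\dom\triangle_k^{rel})=\dom\widetilde{\triangle}_k^{rel}$ together with $\widetilde{\triangle}_k^{rel}=\Psi_k^{-1}\triangle_k^{rel}\Psi_k$. Spectral equivalence is then immediate from the unitarity of $\Psi_k$.

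The only real bookkeeping hazard, and what I would treat as the main obstacle, is the transfer of Proposition~\ref{isometry-bc} from $d_k$ to its formal adjoint $d^t_k$: one must verify that the Hilbert space adjoint of $\Psi_{k+1}^{-1}d_k\Psi_k$ is $\Psi_k^{-1}d^t_k\Psi_{k+1}$ as operators (not merely formally), so that the equalities $(\widetilde{d}_{k,\min})^*=\widetilde{d}^t_{k,\max}$ and $(\widetilde{d}_{k,\max})^*=\widetilde{d}^t_{k,\min}$ of Theorem~\ref{max-min-theorem} pull back correctly. This however follows directly from unitarity of $\Psi_*$, since for a densely defined operator $T$ and a unitary $U$ one has $(U^{-1}TU)^*=U^{-1}T^*U$; the remainder of the argument is formal.
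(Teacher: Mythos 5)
Your proposal is correct and follows exactly the route the paper intends: the paper states Corollary \ref{unitary-bc} as a direct consequence of Proposition \ref{unitary} (unitarity of $\Psi_k$) and Proposition \ref{isometry-bc} (intertwining of minimal/maximal extensions), which is precisely the argument you spell out, including the standard fact $(U^{-1}TU)^*=U^{-1}T^*U$ needed to transfer the adjoint identifications of Theorem \ref{max-min-theorem}. You have merely made explicit the routine bookkeeping the paper leaves implicit.
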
\ \\
\\[-7mm] As a consequence of Corollary \ref{unitary-bc} we can deal with the minimal extension of the unitarily transformed exterior differential $\Psi^{-1}_{k+1}d_k\Psi_k$ and the relative extension of the unitarily transformed Laplacian $\Psi^{-1}_{k}\triangle_k\Psi_k$ without loss of generality. By a small abuse of notation we denote the operators again by $d_{k,\min}$ and $\triangle^{rel}_k$, in order to keep the notation simple. 

\subsection{Decomposition of the de Rham Laplacian}\label{relative-singular}
Our goal is the explicit determination of the domain of $\triangle_k^{rel}, k=0,..,m=\dim M$.  We restrict ourselves to the relative extension, since the absolute extension is treated analogously. 
\\[3mm] By the convenient structure \eqref{A_k,S_0} of the Laplacian $\triangle_k$ one is tempted to write 
$$\triangle_k=\bigoplus_{\lambda \in \textup{Sp}(A_k)} -\frac{d^2}{dx^2}+\frac{\lambda}{x^2},$$ 
and study the boundary conditions induced on each one-dimensional component. However this decomposition might be incompatible with the boundary conditions, so the discussion of the corresponding self-adjoint realization might not reduce to simple one-dimensional problems. This is in fact the case for the relative boundary conditions, which (by definition) determine the domain of the relative extension $\triangle_k^{rel}$. 
\\[3mm] Nevertheless we infer from the decomposition above and \eqref{H-2-loc} the regularity properties for elements $\phi\in \dom (\triangle_{k,\max})$, needed in the formulation of Proposition \ref{relative-regular} below.
\\[3mm] At the cone face $\{x=R\}\times N$ the relative boundary conditions are derived from the following trace theorem of L. Paquet:
\begin{thm}\label{trace-theorem} \textup{[P, Theorem 1.9]} Let $K$ be a compact oriented Riemannian manifold with boundary $\partial K$ and let $\iota: \partial K \hookrightarrow K$ be the natural inclusion. Then the pullback $\iota^*:\Omega^k(K) \to \Omega^k(\partial K)$ with $\Omega^k(\partial K)=\{0\}$ for $k=\dim K$, extends continuously to the following linear surjective map: $$\iota^*:\dom (d_{k,\max})\rightarrow \dom (d_{k,\partial K}^{-1/2}),$$ where $d_{k,\partial K}^{-1/2}$ is the closure of the exterior derivative on $\partial K$ in the Sobolev space $H^{-1/2}(\wedge^*T^*\partial K)$ and $d_{k, \max}$ the maximal extension of the exterior derivative on $K$. The domains $\dom (d_{k,\max})$ and $\dom (d_{k,\partial K}^{-1/2})$ are Hilbert spaces with respect to the graph-norms of the corresponding operators.
\end{thm}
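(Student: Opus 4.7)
\textbf{Proof plan for Theorem \ref{trace-theorem}.} This is a standard trace theorem in the style of those for $H(\textup{div})$ or $H(\textup{curl})$, and I would follow the usual three-step scheme: continuity via Stokes/duality, intertwining with $d$, and surjectivity by an explicit lift from the boundary.

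\emph{Continuity.} Set $n = \dim K$. For smooth $\omega \in \Omega^k(K)$ and $\eta \in \Omega^{n-k-1}(K)$, Stokes' theorem gives
\[
\int_{\partial K} \iota^*\omega \wedge \iota^*\eta = \int_K d\omega \wedge \eta + (-1)^k\int_K \omega \wedge d\eta,
\]
whose right-hand side is bounded by $\|\omega\|_{\dom (d_{k,\max})}\,\|\eta\|_{\dom (d_{n-k-1,\max})}$. Given $\beta \in H^{1/2}(\wedge^{n-k-1}T^*\partial K)$, the classical $H^1$-trace theorem on $K$ produces $\eta = E\beta \in H^1(\wedge^{n-k-1}T^*K) \subset \dom (d_{n-k-1,\max})$ with $\iota^*\eta = \beta$ and $\|\eta\|_{H^1} \leq C\|\beta\|_{H^{1/2}}$. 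Combining the two bounds and invoking Poincar\'e duality on $\partial K$ forces $\iota^*\omega \in H^{-1/2}(\wedge^k T^*\partial K)$ with a graph-norm estimate. Friedrichs mollification shows $\Omega^k(K)$ is dense in $\dom (d_{k,\max})$ in the graph norm, so $\iota^*$ extends continuously.

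\emph{Intertwining and range.} On smooth forms $d_{\partial K}\iota^* = \iota^* d_K$; by continuity applied in degrees $k$ and $k+1$, this identity passes to $\dom (d_{k,\max})$. Hence $d^{-1/2}_{k,\partial K}(\iota^*\omega) = \iota^*(d_{k,\max}\omega)$ lies in $H^{-1/2}$, placing $\iota^*\omega \in \dom (d^{-1/2}_{k,\partial K})$ with continuous dependence in graph norms.

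\emph{Surjectivity.} For $\alpha \in \dom (d^{-1/2}_{k,\partial K})$, choose a collar $U = [0,\epsilon) \times \partial K$ with projection $\pi$ and a cutoff $\chi \in C^\infty_c[0,\epsilon)$ with $\chi(0)=1$. For smooth $\alpha$ the naive lift $\omega = \chi(x)\pi^*\alpha$ gives $d\omega = \chi'(x)\,dx \wedge \pi^*\alpha + \chi(x)\pi^*(d_{\partial K}\alpha)$ and satisfies $\iota^*\omega = \alpha$. For general $\alpha$ the same formula does not yield an $L^2$ form, so one replaces $\chi(x)$ by a pseudodifferential $x$-profile built from the tangential Laplacian, for instance $e^{-x\sqrt{1+\triangle_{\partial K}}}\chi(x)$, which gains regularity along the normal direction and renders both $\omega$ and $d\omega$ square-integrable on the collar. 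A density argument in the graph norm of $d^{-1/2}_{k,\partial K}$ then promotes this to a continuous right inverse of $\iota^*$ from $\dom (d^{-1/2}_{k,\partial K})$ to $\dom (d_{k,\max})$.

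\emph{Main obstacle.} Surjectivity is the delicate step: when $\alpha$ is merely $H^{-1/2}$, the naive lift $\chi\,\pi^*\alpha$ is not in $L^2(\wedge^k T^*K)$, so the construction must use a pseudodifferential/spectral extension that simultaneously controls $\alpha$ and $d_{\partial K}\alpha$ in $H^{-1/2}$. This is precisely where the hypothesis $\alpha \in \dom (d^{-1/2}_{k,\partial K})$, rather than $\alpha \in H^{-1/2}$ alone, enters and provides the regularity trade-off needed to keep $d\omega$ in $L^2$.
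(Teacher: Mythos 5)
First, note that the paper itself gives no proof of this statement: it is quoted verbatim from Paquet [P, Theorem 1.9], so your proposal is being measured against the standard proof in the literature rather than an argument in the text. Your first two steps are fine and essentially the classical ones: the Stokes/duality estimate combined with the $H^1\to H^{1/2}$ trace lift gives continuity of $\iota^*$ into $H^{-1/2}$ (after the routine density of smooth forms in $\dom(d_{k,\max})$ for the graph norm), and the intertwining $d_{\partial K}\iota^*=\iota^* d$ passes to the closure because $d_{k,\partial K}^{-1/2}$ is by definition a closed operator and traces of an approximating sequence converge together with their boundary differentials.

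The genuine gap is in the surjectivity step. The lift you propose, $\omega=\chi(x)\,e^{-x\sqrt{1+\triangle_{\partial K}}}\alpha$, does not map $\dom(d^{-1/2}_{k,\partial K})$ into $\dom(d_{k,\max})$: writing $\Lambda=(1+\triangle_{\partial K})^{1/2}$ and $d=dx\wedge\partial_x+d_{\partial K}$ on the collar, the normal part of $d\omega$ contains $\chi(x)\,dx\wedge\Lambda e^{-x\Lambda}\alpha$, and
\begin{equation*}
\int_0^{\delta}\bigl\|\Lambda e^{-x\Lambda}\alpha\bigr\|_{L^2(\partial K)}^2\,dx \;\asymp\; \|\alpha\|_{H^{1/2}(\partial K)}^2 ,
\end{equation*}
so $d\omega\in L^2$ forces $\alpha\in H^{1/2}$. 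The hypothesis $d_{\partial K}\alpha\in H^{-1/2}$ does not repair this: it controls only the coexact part of $\alpha$, while the exact part can be genuinely rougher than $H^{1/2}$ (take $\alpha=d_{\partial K}\sigma$ with $\sigma\in H^{1/2}\setminus H^{3/2}$, which satisfies $d_{\partial K}\alpha=0$ but lies outside $H^{1/2}$); such $\alpha$ are nevertheless traces, e.g.\ of $d\tilde\sigma$. This also blocks your closing density argument, since extending a lift defined on smooth data requires exactly the boundedness in the graph norm of $d^{-1/2}_{k,\partial K}$ that the operator lacks. The missing idea is a Hodge decomposition on the closed manifold $\partial K$: using the Green operator write $\alpha=d_{\partial K}\sigma+\rho$ with $\sigma=\delta_{\partial K}G\alpha\in H^{1/2}$ and $\rho=\delta_{\partial K}G\,d_{\partial K}\alpha+H\alpha\in H^{1/2}$ (here the graph-norm hypothesis on $\alpha$ is what puts $\rho$ in $H^{1/2}$), lift $\sigma$ and $\rho$ to $H^1$-forms $\tilde\sigma,\tilde\rho$ by the classical trace theorem, and set $\omega=d\tilde\sigma+\tilde\rho$; then $\omega\in L^2$, $d\omega=d\tilde\rho\in L^2$, $\iota^*\omega=\alpha$, with norms controlled by the graph norm of $\alpha$. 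With that replacement (or an equivalent boundary-value-problem construction) your outline becomes a complete proof.
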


\begin{prop}\label{relative-regular} Let $\gamma\in C^{\infty}[0,R]$ be a smooth cut-off function, vanishing identically at $x=0$ and being identically one at $x=R$. Then
\begin{align*}
\gamma \dom (\triangle_k^{rel}) = \{\Psi_k(\phi_{k-1},\phi_k) \in \gamma \dom(\triangle_{k,\max})| \phi_k(R)=0, \\  
\phi'_{k-1}(R)- \frac{(k-1-n/2)}{R}\phi_{k-1}(R)=0\}.
\end{align*}
\end{prop}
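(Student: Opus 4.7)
The plan is to reduce to a local analysis near the regular boundary $x=R$ (since $\gamma$ vanishes near $x=0$, the conical singularity plays no role) and to translate the abstract definition of $\triangle_k^{rel}$ into pointwise conditions at $R$. Unpacking \eqref{definition}, $\omega\in\dom(\triangle_k^{rel})$ if and only if $\omega\in\dom(d_{k,\min})\cap\dom(d^t_{k-1,\max})$, $d_{k,\min}\omega\in\dom(d^t_{k,\max})$, and $d^t_{k-1,\max}\omega\in\dom(d_{k-1,\min})$. Near the smooth boundary $\{R\}\times N$ the two maximal-domain memberships impose nothing beyond the local elliptic regularity that is already encoded in $\omega\in\dom(\triangle_{k,\max})$, so the only content of the relative condition on the support of $\gamma$ is $\gamma\omega\in\dom(d_{k,\min})$ and $d^t_{k-1,\max}(\gamma\omega)\in\dom(d_{k-1,\min})$.

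Next I would invoke Paquet's trace theorem \ref{trace-theorem}. Combined with the standard density of smooth compactly supported forms in $\dom(d_{\bullet,\min})$, it characterizes the minimal domain at the regular boundary as the kernel of the pullback $\iota^*$. For $\omega=\Psi_k(\phi_{k-1},\phi_k)$, the summand $x^{k-1-n/2}\phi_{k-1}\wedge dx$ contributes nothing to $\iota^*$ (since $\iota^* dx=0$), so the first condition $\iota^*(\gamma\omega)|_{\{R\}\times N}=0$ reads $R^{k-n/2}\phi_k(R)=0$, i.e.\ $\phi_k(R)=0$. For the second condition, I would apply formula \eqref{coderivative} with the index shifted by one to compute the tangential (i.e.\ $\Omega^{k-1}(N)$-valued) component of $\Psi_{k-1}^{-1}d^t_{k-1,\max}\omega$. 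Using $\phi_k(R)=0$ from the first condition to kill the summand $R^{-1}d^t_{k-1,N}\phi_k(R)$, the tangential component at $x=R$ evaluates to
$$(-1)^k\bigl[\phi_{k-1}'(R)-\tfrac{k-1-n/2}{R}\phi_{k-1}(R)\bigr],$$
whose vanishing is precisely the stated generalized Neumann condition on $\phi_{k-1}$.

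The converse direction reverses these implications: starting from $\gamma\omega\in\gamma\dom(\triangle_{k,\max})$ satisfying the two boundary equations, the same trace calculation shows $\gamma\omega\in\dom(d_{k,\min})$ and $d^t_{k-1,\max}(\gamma\omega)\in\dom(d_{k-1,\min})$, and the remaining inclusions in \eqref{definition} are automatic near the smooth end. The main technical obstacle is the rigorous identification ``kernel of $\iota^*$ equals $\dom(d_{\bullet,\min})$'' in the present singular setting. My plan there is to exploit the cut-off $\gamma$: on its support the underlying region is a standard smooth manifold with boundary, so one approximates by smooth compactly supported forms via a partition-of-unity and Friedrichs-mollifier argument localized near $\{R\}\times N$, reducing the equivalence to its classical smooth counterpart.
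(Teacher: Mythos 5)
Your proposal is correct and follows essentially the same route as the paper: localize at $x=R$ with the cut-off, invoke Paquet's trace theorem \ref{trace-theorem} to identify $\gamma \dom(d_{k,\min})$ with the trace-zero part of $\gamma \dom(d_{k,\max})$, and read off $\phi_k(R)=0$ and the generalized Neumann condition from $\iota^*\Psi_k$ and the index-shifted formula \eqref{coderivative} (your observation that $\phi_k(R)=0$ kills the $R^{-1}d^t_{k-1,N}\phi_k(R)$ term is exactly the computation the paper leaves implicit). The only divergence is the converse inclusion ``trace zero $\Rightarrow$ minimal domain'': the paper gets it from the Lagrange identity for $d_k$ (duality with $d^t_{k,\max}$), whereas you sketch a mollifier/approximation reduction to the classical smooth-boundary statement; both work, the duality argument merely avoids the approximation technicalities.
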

\begin{proof} Let $r\in (0,R)$ be fixed and consider the associated natural inclusions
\begin{align*}
\chi&: [0,R]\times N =:M_r \hookrightarrow M, \\
\iota&: \{R\}\times N \equiv N \hookrightarrow M, \\
\iota_r&: \{R\}\times N \equiv N \hookrightarrow M_r. 
\end{align*}
We obviously have $\iota =\chi \circ \iota_r$. The inclusions above induce pullbacks of differential forms. The pullback map $\chi^*:\Omega^k(M) \to \Omega^k(M_r)$ is simply a restriction and extends to a continuous linear map $$\chi^*:\dom (d_{k,\max})\to \dom (d^r_{k,\max}),$$ where $d^r_k$ is the $k-$th exterior derivative on $M_r\subset M$ and the domains are endowed with the graph norms of the corresponding operators. Applying Theorem \ref{trace-theorem} to the compact manifold $M_r$, we deduce that $\iota^*=\iota_r^*\circ \chi^*$ extends to a continuous linear map 
\begin{align}\label{sternchen}
\iota^*:\dom (d_{k,\max})\to \dom (d^{-1/2}_{k,N}).
\end{align}
Now, continuity of $\iota^*$ together with the definition of the minimal domain $\dom (d_{k,\min})$ implies $$\gamma \dom (d_{k,\min})\subseteq \{\phi\in \gamma \dom (d_{k,\max})|\iota^*\phi=0\}.$$
Equality in the relation above follows from the Lagrange identity for $d_k$. We obtain for the relative boundary conditions at the cone base: $$\gamma \dom (\triangle^{rel}_k)=\{\phi \in \gamma \dom (\triangle_{k,\max})| \iota^*\phi=0, \iota^*(d^t_{k-1}\phi)=0\}.$$
Now the statement of the proposition follows from the explicit action of $d^t_{k-1}$ under the isometric identification $\Psi_*$ and the fact that for $\Psi_k(\phi_{k-1},\phi_k)\in \dom (\triangle_{k,\max})$ we have $\iota^*(\Psi_k(\phi_{k-1},\phi_k))= R^{k-n/2}\phi_k(R)$. 
\end{proof}\ \\
\\[-7mm] In order to identify the relative boundary conditions at the cone singularity, we decompose $\triangle_k$ into a direct sum of operators such that the decomposition is compatible with the relative self-adjoint extension.
\\[3mm] Compatibility of a decomposition means explicitly the following in the context of our presentation.
\begin{defn}\label{def-compatibility}
Let $D$ be a closed operator in a Hilbert space $H$. Let $H_1$ be a closed subspace of $H$ and $H_2:=H_1^{\perp}$. We say the decomposition $H=H_1\oplus H_2$ is compatible with $D$ if $D(H_j\cap \dom (D))\subset H_j, j=1,2$ and for any $\phi_1\oplus \phi_2\in \dom (D)$ we get $\phi_1, \phi_2 \in \dom (D)$.
\end{defn}\ \\
\\[-7mm] This definition corresponds to [W2, Exercise 5.39] where the subspaces $H_j,j=1,2$ are called the "reducing subspaces of $D$". We have the following result:
\begin{prop}\label{lemma-compatibility}\textup{[W2, Theorem 7.28]}
Let $D$ be a self-adjoint operator in a Hilbert space $H$. Let $H_1$ be a closed subspace of $H$ and $H_2:=H_1^{\perp}$. Let the decomposition $H=H_1\oplus H_2$ be compatible with $D$. Then each operator $D_i:=D|H_i,i=1,2$ with domain 
$$\dom (D_i):=\dom (D)\cap H_i, \ i=1,2$$
is a self-adjoint operator in $H_i$. In other words, the induced decomposition $D=D_1\oplus D_2$ is an orthogonal decomposition of $D$ into sum of two self-adjoint operators.
\end{prop}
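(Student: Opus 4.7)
The plan is to verify the two defining properties of a self-adjoint operator for each restriction $D_i$, $i=1,2$: first symmetry, then the crucial equality of domains $\dom(D_i^{*}) = \dom(D_i)$. By symmetry of the roles of $H_1$ and $H_2$ it suffices to treat $i=1$, with $P_1,P_2$ denoting the orthogonal projections onto $H_1,H_2$.

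First I would note that $\dom(D_1) = \dom(D)\cap H_1$ is dense in $H_1$. This uses condition~(2) of the compatibility definition: any $\phi \in \dom(D)$ can be written $\phi = \phi_1 + \phi_2$ with $\phi_j = P_j\phi \in \dom(D)\cap H_j$. Since $\dom(D)$ is dense in $H$, projecting onto $H_1$ yields density of $\dom(D_1)$ in $H_1$. Symmetry of $D_1$ then follows directly from condition~(1): for $\phi,\psi \in \dom(D_1)$, both $D_1\phi = D\phi$ and $D_1\psi = D\psi$ lie in $H_1$, so that
\[
\langle D_1\phi,\psi\rangle_{H_1} = \langle D\phi,\psi\rangle_H = \langle \phi, D\psi\rangle_H = \langle \phi, D_1\psi\rangle_{H_1},
\]
using self-adjointness of $D$.

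The main step is to show $\dom(D_1^{*}) \subseteq \dom(D_1)$. Take $\psi \in \dom(D_1^{*}) \subset H_1$, so that there exists $\eta \in H_1$ with $\langle D_1\phi_1,\psi\rangle = \langle \phi_1,\eta\rangle$ for every $\phi_1 \in \dom(D_1)$. I would extend this identity from $\dom(D_1)$ to all of $\dom(D)$ as follows. For arbitrary $\phi \in \dom(D)$, invoke condition~(2) to decompose $\phi = \phi_1 + \phi_2$ with $\phi_j \in \dom(D)\cap H_j$, and invoke condition~(1) to conclude $D\phi_2 \in H_2 \perp \psi$. Then
\[
\langle D\phi,\psi\rangle = \langle D\phi_1,\psi\rangle + \langle D\phi_2,\psi\rangle = \langle D_1\phi_1,\psi\rangle = \langle \phi_1,\eta\rangle = \langle \phi,\eta\rangle,
\]
the last equality because $\eta \in H_1 \perp \phi_2$. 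Hence $\psi \in \dom(D^{*}) = \dom(D)$ with $D\psi = \eta \in H_1$, which places $\psi$ in $\dom(D)\cap H_1 = \dom(D_1)$ with $D_1\psi = \eta$. This yields $D_1^{*} \subseteq D_1$, and combined with symmetry gives self-adjointness of $D_1$.

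The main obstacle is precisely the step of propagating the adjoint identity from $\dom(D_1)$ to the larger set $\dom(D)$: neither compatibility condition alone suffices. Condition~(2) is used to split an arbitrary $\phi \in \dom(D)$ into pieces in $\dom(D)\cap H_j$, while condition~(1) is used to discard the cross term $\langle D\phi_2,\psi\rangle$. Once both are available, self-adjointness of the ambient $D$ does the remaining work. The final assertion $D = D_1 \oplus D_2$ on the domain $\dom(D) = \dom(D_1) \oplus \dom(D_2)$ is then a direct rereading of the compatibility definition.
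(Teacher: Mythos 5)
Your proof is correct. Note that the paper itself gives no argument for this proposition — it is imported verbatim as [W2, Theorem 7.28] — so there is no internal proof to compare against; your direct verification (density of $\dom (D)\cap H_1$ in $H_1$ via the projection, symmetry of $D_1$, and the key step of extending the adjoint identity from $\dom (D_1)$ to all of $\dom (D)$ using condition (2) to split $\phi$ and condition (1) to kill the cross term, whence $\dom (D_1^*)\subseteq \dom (D_1)$) is precisely the standard textbook argument and correctly fills in the cited result.
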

\begin{defn}\label{induced}
In the setup of Proposition \ref{lemma-compatibility} we say $D_i, i=1,2$ is a self-adjoint operator \textup{"induced"} by $D$.
\end{defn} \ \\
\\[-7mm] In order to simplify notation, put: 
\begin{align*}
H^k:&=L^2([0,R], L^2(\wedge^{k-1}T^*N\oplus \wedge^{k}T^*N, \textup{vol}(g^N)), dx), \\ H^*:&=\bigoplus_{k\geq 0}H^k,
\end{align*}
where $H^k$ are mutually orthogonal in $H^*$. The following result follows straightforwardly by the definition of $\triangle_k^{rel}$ and gives a practical condition for compatibility of a decomposition of $H^k$ with the self-adjoint realization $\triangle_k^{rel}$.

\begin{prop}\label{rel-compatibility}
Let $H^k=H_1\oplus H_2, H_2:=H_1^{\perp}$ be an orthogonal decomposition into closed subspaces, such that $\triangle^{rel}_k(H_j\cap\dom(\triangle^{rel}_k))\subset H_j, j=1,2$. Assume that for $D\in \{d_k, d_k^t,d_k^td_k, d_{k-1}d_{k-1}^t\}$ the images $D_{\max}(H_j\cap \dom (D_{\max})), j=1,2$ are mutually orthogonal in $H^*$. Then the decomposition $H^k=H_1\oplus H_2$ is compatible with the relative extension $\triangle_k^{rel}$.
\end{prop}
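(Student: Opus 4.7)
The first half of Definition \ref{def-compatibility} is provided by the hypothesis, so only the second half needs verification: for any $\phi = \phi_1 + \phi_2 \in \dom(\triangle_k^{rel})$ with $\phi_j \in H_j$, one must show $\phi_j \in \dom(\triangle_k^{rel})$ individually. Unpacking $\triangle_k^{rel} = d^t_{k,\max}d_{k,\min} + d_{k-1,\min}d^t_{k-1,\max}$, this amounts to the four membership conditions $\phi_j \in \dom(d_{k,\min}) \cap \dom(d^t_{k-1,\max})$, $d_{k,\min}\phi_j \in \dom(d^t_{k,\max})$ and $d^t_{k-1,\max}\phi_j \in \dom(d_{k-1,\min})$.

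The strategy is to establish that the orthogonal projection $P_j : H^k \to H_j$ commutes, in a suitable strong sense, with each of the closed operators involved. For $D \in \{d_k, d^t_{k-1}\}$ (the latter is what is denoted $d_k^t$ in the hypothesis), the orthogonality-of-images assumption says that the closed restrictions $D_j := D_{\max}|_{H_j \cap \dom(D_{\max})}$ have mutually orthogonal ranges in $H^*$, so $D_1 \oplus D_2$ is a closed operator contained in $D_{\max}$. To promote this inclusion to an equality, thereby showing $P_j \dom(D_{\max}) \subseteq \dom(D_{\max})$, I would invoke the hypothesized invariance of $H_j$ under $\triangle_k^{rel}$: the bounded resolvent $(I+\triangle_k^{rel})^{-1}$ then commutes with $P_j$, and its range serves as a common core from which the block decomposition of $D_{\max}$ extends by closedness.

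Next I would pass from the maximal extension $d_{k,\max}$ to the minimal extension $d_{k,\min}$ via the Lagrange identity: $\phi \in \dom(d_{k,\min})$ iff $\phi \in \dom(d_{k,\max})$ and a certain boundary pairing against every $\psi \in \dom(d^t_{k-1,\max})$ vanishes. Since both sides of that pairing respect the decomposition, by the previous step applied to $d_k$ and $d^t_{k-1}$ and the mutual orthogonality of $H_1$ and $H_2$, the pairing inherited by $P_j\phi$ also vanishes, yielding $P_j\phi \in \dom(d_{k,\min})$. The two composition conditions $d_{k,\min}(P_j\phi) \in \dom(d^t_{k,\max})$ and $d^t_{k-1,\max}(P_j\phi) \in \dom(d_{k-1,\min})$ are then secured by the additional orthogonality-of-images hypotheses for $d_k^t d_k$ and $d_{k-1}d_{k-1}^t$, which guarantee that the compositions entering $\triangle_k^{rel}$ inherit a block structure on $\dom(\triangle_k^{rel})$. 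Assembling the four verified conditions delivers $P_j\phi \in \dom(\triangle_k^{rel})$, which is exactly the missing half of compatibility.

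The main obstacle is the equality $D_{\max} = D_1 \oplus D_2$ in the maximal-domain step. Orthogonality of the ranges alone only yields one inclusion; the reverse requires bootstrapping from the self-adjointness of $\triangle_k^{rel}$ together with its $H_j$-invariance, and it is precisely the supplementary orthogonality-of-images assumptions for the compositions $d_k^t d_k$ and $d_{k-1}d_{k-1}^t$ that provide the structural input needed to exclude any \emph{off-diagonal} component in the closed graph of $d_{k,\max}$ and $d^t_{k-1,\max}$.
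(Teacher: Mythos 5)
Your reduction of the problem is right: hypothesis (i) of Proposition \ref{rel-compatibility} is the first half of Definition \ref{def-compatibility}, so what remains is precisely $P_j\,\dom(\triangle_k^{rel})\subset\dom(\triangle_k^{rel})$ for the orthogonal projections $P_j$ onto $H_j$. But the decisive step of your argument is circular. For a self-adjoint operator, commutation of $P_j$ with the resolvent $(I+\triangle_k^{rel})^{-1}$ is \emph{equivalent} to the subspaces $H_j$ reducing $\triangle_k^{rel}$, i.e.\ to hypothesis (i) \emph{together with} the very inclusion $P_j\,\dom(\triangle_k^{rel})\subset\dom(\triangle_k^{rel})$ you are trying to establish (this is the content of [W2, Theorem 7.28], quoted as Proposition \ref{lemma-compatibility}). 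The inclusion $\triangle_k^{rel}(H_j\cap\dom(\triangle_k^{rel}))\subset H_j$ alone does not yield resolvent commutation — it can even hold vacuously when $H_j\cap\dom(\triangle_k^{rel})$ is small — so at the crucial moment you assume the conclusion. Two further steps would not survive even if commutation were granted: the range of $(I+\triangle_k^{rel})^{-1}$ is $\dom(\triangle_k^{rel})$, which sits inside $\dom(d_{k,\min})\cap\dom(d^t_{k-1,\max})$ and is \emph{not} a core for the maximal extensions, so "extending the block decomposition of $D_{\max}$ by closedness" from it cannot recover $D_{\max}$; and in the Lagrange-identity step the pairing of $P_j\phi$ runs over all $\psi\in\dom(d^t_{k,\max})$, i.e.\ over $H^{k+1}$, where no decomposition has been given or constructed, so "both sides of the pairing respect the decomposition" is not available. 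Finally, the orthogonality hypotheses for $(d^t_kd_k)_{\max}$ and $(d_{k-1}d^t_{k-1})_{\max}$ — which must carry real weight, since hypothesis (i) is weak — are invoked only by assertion; no argument is given that they force $d_{k,\min}P_j\phi\in\dom(d^t_{k,\max})$ or $d^t_{k-1,\max}P_j\phi\in\dom(d_{k-1,\min})$.

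For comparison: the paper records no proof at all here, stating that the proposition "follows straightforwardly by the definition of $\triangle_k^{rel}$". A route that fits the paper's practice elsewhere (see the maximal-symmetry arguments in Lemmas \ref{relative0} and \ref{relative2}) is to avoid projecting individual elements altogether: use hypothesis (ii) to check that the restrictions of $d_{k,\min}$, $d^t_{k-1,\max}$ and their compositions to the subspaces $H_j$ assemble into an operator $T_1\oplus T_2$ with $T_1\oplus T_2\subset\triangle_k^{rel}$ which is itself self-adjoint (each $T_j$ being the relative Laplacian of the restricted complex); since a self-adjoint operator admits no proper self-adjoint extension, $T_1\oplus T_2=\triangle_k^{rel}$, and the equality of domains is exactly the missing half of compatibility. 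If you want to salvage your projection-based argument, you must prove — not assume — that $P_j$ preserves the relevant minimal and maximal domains, and that is where hypothesis (ii) has to be brought in explicitly.
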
\ \\
\\[-7mm] Now we can present, following [L3], a decomposition of $H^*$, compatible with $\triangle_*^{rel}$. In fact this decomposition is one of the essential aspects in the computation of analytic torsion of a bounded generalized cone, in [BV2]. To describe the decomposition in convenient terms, we denote by $\triangle_{k,ccl,N}$ the Laplace operator on coclosed $k-$forms on $N$ and introduce some notation
\begin{align*}
&V_k:=\{\la \in \textup{Spec}\triangle_{k,ccl,N}\}\backslash \{0\}, \\
&E_{\lambda}^k:=\{\w \in \Omega^k(N)|\triangle_{k,N}\w=\lambda \w , d^t_N\w=0\}, \\
&\widetilde{E}_{\la}^k:=E_{\la}^k\oplus d_N E_{\lambda}^k, \quad \mathcal{H}^k(N):=E_0^k.
\end{align*}
Here $k=0,..,\dim N=n$ and the eigenvalues of $\triangle_{k,ccl,N}$ in $V_k$ are counted with their multiplicities, so that each single $E_{\la}^k$ is a one-dimensional subspace. The eigenvectors for a $\la \in V_k, k=0,..,n$ with multiplicity bigger than $1$ are chosen to be mutually orthogonal with respect to the $L^2$-inner product on $N$. Further let for each $\mathcal{H}^k(N)$ choose an orthonormal basis of harmonic forms $\{u^k_i\}$ with $i=1,..,\dim \mathcal{H}^k(N)$. 
\\[3mm] Then by the Hodge decomposition on $N$ we obtain for any fixed degree $k=0,..,n+1$ (put $\Omega^{n+1}(N)=\Omega^{-1}(N)=\{0\}$) 
\begin{align}\label{decomposition1}
\Omega^{k-1}(N)\oplus \Omega^k(N)=\left[\bigoplus\limits_{i=1}^{\dim \mathcal{H}^{k-1}(N)} \langle u^{k-1}_i\rangle \right]\oplus \left[\bigoplus\limits_{i=1}^{\dim \mathcal{H}^{k}(N)} \langle u^{k}_i \rangle \right]\\ \oplus \left[\bigoplus\limits_{\la \in V_{k-1}}\widetilde{E}_{\la}^{k-1}\right] \oplus \left[\bigoplus\limits_{\la \in V_{k-2}}d_NE_{\la}^{k-2}\right]\oplus \left[\bigoplus\limits_{\la \in V_{k}}E_{\la}^{k}\right].\nonumber
\end{align}
\begin{thm}\label{decomposition2}
The decomposition \eqref{decomposition1} induces an orthogonal decomposition of $H^k$, compatible with the relative extension $\triangle_k^{rel}$.
\end{thm}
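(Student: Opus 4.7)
The plan is to verify component-wise the two hypotheses of Proposition~\ref{rel-compatibility}. The arguments reduce to explicit bookkeeping of how the matrix formulas \eqref{derivative} and \eqref{coderivative} act on each summand of \eqref{decomposition1}.

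First, I would observe that the decomposition \eqref{decomposition1} is orthogonal in $\Omega^{k-1}(N)\oplus\Omega^k(N)$: the classical Hodge decomposition separates $\mathcal{H}^\ast(N)$, the coclosed eigenforms $E_\la^\ast$, and the exact forms $d_N E_\la^\ast$, while eigenspaces of $\triangle_N$ for distinct eigenvalues are $L^2(N)$-orthogonal. Tensoring with $L^2([0,R],dx)$ then yields a closed orthogonal decomposition $H^k=\bigoplus_\alpha H_\alpha^k$.

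Next, I would compute the action of $\widetilde{d}_{k,\max}$ and $\widetilde{d}_{k-1,\max}^t$ on each summand via \eqref{derivative} and \eqref{coderivative}, using that $d_N$ annihilates $\mathcal{H}^\ast(N)$ and each $d_N E_\la^\ast$, and identifies $E_\la^\ast$ with $d_N E_\la^\ast$ up to the scalar $\la$ via $d_N^t d_N = \la\cdot\mathrm{id}$ there. For example, on $(\phi_1\omega,\phi_2\,d_N\omega)$ with $\omega\in E_\la^{k-1}$ one gets
\[
\widetilde{d}_k(\phi_1\omega,\phi_2\, d_N\omega)=\bigl(((-1)^k\phi_2'+\tfrac{\phi_1+c_k\phi_2}{x})\,d_N\omega,\,0\bigr),
\]
which lies entirely in the $d_N E_\la^{k-1}$-summand of $H^{k+1}$. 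A case-by-case check establishes that $\widetilde{d}_{k,\max}$ sends each $H_\alpha^k$ (intersected with its domain) either to zero or into a single distinct summand of $H^{k+1}$, and analogously $\widetilde{d}_{k-1,\max}^t$ sends it into a single distinct summand of $H^{k-1}$. In particular, $D_{\max}$-images of distinct summands are mutually orthogonal for $D\in\{d_k,d_{k-1}^t\}$; composing the two sets of maps yields the same orthogonality for $D\in\{d_k^t d_k,d_{k-1}d_{k-1}^t\}$, and further shows that $\widetilde{\triangle}_k^{rel}$ preserves each $H_\alpha^k$.

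With both hypotheses of Proposition~\ref{rel-compatibility} in place, I would apply it to each pair $(H_\alpha^k,(H_\alpha^k)^\perp)$ to conclude compatibility of the decomposition. The only obstacle is the case-by-case bookkeeping itself: once one keeps straight which factor of $H^k$ each piece of \eqref{decomposition1} sits in, and tracks how $d_N$ and $d_N^t$ shuffle $E_\la^\ast$ and $d_N E_\la^\ast$ across degrees, each individual verification reduces to a one-line computation from \eqref{derivative} and \eqref{coderivative}.
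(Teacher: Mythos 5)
Your proposal is correct and follows essentially the same route as the paper: the paper's proof likewise notes that \eqref{decomposition1} is $L^2(N)$-orthogonal, checks that applying the exterior derivative, its adjoint and their compositions to the components keeps the images mutually orthogonal, and then invokes Proposition~\ref{rel-compatibility}. You merely make explicit, via \eqref{derivative} and \eqref{coderivative}, the component-by-component bookkeeping that the paper leaves implicit.
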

\begin{proof}
The decomposition of $H^k$ induced by \eqref{decomposition1} is orthogonal, since the decomposition \eqref{decomposition1} is orthogonal with respect to the $L^2$-inner product on $N$. Applying now $d_k,d^t_kd_k$ and $d_{k-1}, d_{k-1}d_{k-1}^t$ to each of the orthogonal components we find that the images remain mutually orthogonal, so we obtain with Proposition \ref{rel-compatibility} the desired statement.
\end{proof}

\subsection{The Relative Boundary Conditions}\label{relative-singularII}
By Proposition \ref{lemma-compatibility} the orthogonal decomposition of $H^k$ in Theorem \ref{decomposition2} corresponds to a decomposition of $\triangle_k^{rel}$ into an orthogonal sum of self-adjoint operators. This decomposition is discussed by M. Lesch in [L3]. Using the decomposition we can now determine explicitly the boundary conditions for each of the self-adjoint components, up to the self-adjoint extension induced by $\triangle_k^{rel}$ over $$L^2((0,R),\widetilde{E}^{k-1}_{\lambda}), \lambda \in V_{k-1}.$$ We do not determine the boundary conditions for these particular self-adjoint components. However even for these components we can reduce the zeta-determinant calculations, which we perform in the next section, to other well-understood problems.
\\[3mm] Let $\psi \in E_{\la}^k, \la \in V_k, k=1,..,n$ be a fixed non-zero generator of $E_{\la}^k$. Put
\begin{align*}
&\xi_1:=(0,\psi)\in \Omega^{k-1}(N)\oplus \Omega^{k}(N), \\ &\xi_2:=(\psi,0)\in \Omega^{k}(N)\oplus \Omega^{k+1}(N), \\
&\xi_3:=(0,\frac{1}{\sqrt{\lambda}}d_N\psi)\in \Omega^{k}(N)\oplus \Omega^{k+1}(N), \\ &\xi_4:=(\frac{1}{\sqrt{\lambda}}d_N\psi,0)\in \Omega^{k+1}(N)\oplus \Omega^{k+2}(N).
\end{align*}
Then $C^{\infty}_0((0,R),\langle \xi_1,\xi_2,\xi_3,\xi_4\rangle)$ is invariant under $d,d^t$ and we obtain a subcomplex of the de Rham complex:
\begin{align}\label{subcomplex}
0 \rightarrow C_0^{\infty}((0,R),\left< \xi_1\right>) \xrightarrow{d_0} C_0^{\infty}((0,R),\left<\xi_2,\xi_3\right>) \xrightarrow{d_1}C_0^{\infty}((0,R),\left<\xi_4\right>)\rightarrow 0,
\end{align}
where $d_0,d_1$ take the following form with respect to the chosen basis:
\begin{align*}
d_0^{\psi}=\left(\begin{array}{c}(-1)^k\partial_x+\frac{c_k}{x}\\ x^{-1}\sqrt{\la}\end{array}\right), \quad d_1^{\psi}=\left(x^{-1}\sqrt{\la}, \ (-1)^{k+1}\partial_x+\frac{c_{k+1}}{x}\right).
\end{align*}
By Proposition \ref{lemma-compatibility} and Theorem \ref{decomposition2} we obtain for the induced (in the sense of Definition \ref{induced}) self-adjoint extensions:
\begin{align}
\dom (\triangle_k^{rel})\cap L^2((0,R),E_{\la}^k)=\dom (d_{0,\max}^td_{0,\min})=:\dom(\triangle_{0,\lambda}^k), \\
\dom (\triangle_{k+2}^{rel})\cap L^2((0,R),d_N E_{\la}^k)=\dom (d_{1,\min}d^t_{1,\max})=:\dom(\triangle_{2,\lambda}^k), \\
\dom (\triangle_{k+1}^{rel})\cap L^2((0,R),\widetilde{E}_{\la}^k)=\dom (d_{0,\min}d^t_{0,\max}+d_{1,\max}^td_{1,\min})=:\dom(\triangle^k_{\lambda}).
\end{align}
Note further that $d_0^td_0$ and $d_1d_1^t$ both act as the following regular-singular model Laplacian 
\begin{align}\label{lapl1}
\triangle:=-\frac{d^2}{dx^2}+\frac{1}{x^2}\left(\la +\left[k+\frac{1}{2}-\frac{n}{2}\right]^2-\frac{1}{4}\right),
\end{align}
under the identification of any $\phi =f\cdot \xi_{i}\in C^{\infty}_0((0,R),\langle \xi_{i}\rangle),i=1,4$ with its scalar part $f \in C^{\infty}_0(0,R)$. We continue under this identification from here on, as asserted in the next remark.
\begin{remark}\label{scalar-identification}
Let $V=\langle v \rangle$ denote any one-dimensional Hilbert space. Consider a particular type of a differential operator 
\begin{align*}
\mathcal{P}:C^{\infty}_0((0,R),V)&\rightarrow C^{\infty}_0((0,R),V), \\
f\cdot v &\mapsto (Pf)\cdot v,
\end{align*} 
where $f\in C^{\infty}_0(0,R)$ and $P$ is a scalar differential operator on $C^{\infty}_0(0,R)$. We call $f$ and $P$ the "\textup{scalar parts}" of $f\cdot v$ and $\mathcal{P}$, respectively. 
\\[3mm] We can reduce without loss of generality the spectral analysis of self-adjoint extensions of $\mathcal{P}$ to the spectral analysis of self-adjoint extensions of $P$ by identifying the $V-$valued functions $f\cdot v$ with their scalar parts. We fix this identification henceforth.
\end{remark}\ \\
\\[-7mm] In view of Corollary \ref{ess.s.a} we have to distinguish two cases. The first case is 
\begin{align}\label{lcc}
\la +\left[k+\frac{1}{2}-\frac{n}{2}\right]^2-\frac{1}{4} < \frac{3}{4},
\end{align}
so that $\triangle$ is in the limit circle case at $x=0$. Hence (note $\lambda \in V_k$ and so $\lambda \gneqq 0$) $$p:=\sqrt{\la +\left[k+\frac{1}{2}-\frac{n}{2}\right]^2}-\frac{1}{2}\in \left(-\frac{1}{2},\frac{1}{2}\right).$$ Then we get $\triangle=\triangle_p$ in the notation of Subsection \ref{model-operator}. Since $p \in (-1/2,1/2)$ we obtain from Proposition \ref{laplacian-maximal} for the asymptotics of elements $f \in \dom (\triangle_{\max})$:
\begin{align}\label{lcc2}
f(x)=c_1(f)\cdot x^{p+1} + c_2(f) \cdot x^{-p} + O(x^{3/2}).
\end{align}
In the second case 
\begin{align}\label{lpc!}
\la +\left[k+\frac{1}{2}-\frac{n}{2}\right]^2-\frac{1}{4} \geq \frac{3}{4}
\end{align}
the Laplacian $\triangle$ is by Corollary \ref{ess.s.a} in the limit point case at $x=0$ and hence in this case boundary conditions at zero are redundant. We can now compute the domains $\dom (\triangle^k_{0,\lambda})$ and $\dom (\triangle^k_{2,\lambda})$ explicitly.
\begin{lemma}\label{relative0}
Identify any $\phi \in \dom (\triangle_{0,\lambda}^k)$ with its scalar part $f \in \dom (\triangle_{p,\max})$. Under this identification we obtain for $\triangle_p$ in the limit circle case \eqref{lcc} at $x=0$, in the notation of \eqref{lcc2}
$$\dom(\triangle_{0,\lambda}^k)=\{f \in \dom (\triangle_{p,\max}) | c_2(f)=0, f(R)=0\}.$$
In the limit point case \eqref{lpc!} at $x=0$, we obtain
$$\dom(\triangle_{0,\lambda}^k)=\{f \in \dom (\triangle_{p,\max}) | f(R)=0\}.$$
\end{lemma}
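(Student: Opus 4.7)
The plan is to exploit the composition structure $\triangle_{0,\lambda}^k = d_{0,\max}^t d_{0,\min}$ and reduce to the scalar analysis of the model Laplacian $\triangle_p$ carried out in Section \ref{model-operator}. Under the scalar identification of Remark \ref{scalar-identification}, a direct computation using $c_k = (-1)^k(k-n/2)$ yields
\[
d_0^t d_0 = -\frac{d^2}{dx^2} + \frac{\lambda + (k-n/2)(k+1-n/2)}{x^2},
\]
which agrees with \eqref{lapl1}, i.e.\ $d_0^t d_0 = \triangle_p$ for $p = \sqrt{\lambda + (k+1/2 - n/2)^2} - 1/2$. Consequently $f \in \dom(\triangle_{0,\lambda}^k)$ if and only if $f \in \dom(d_{0,\min})$ and $d_0 f \in \dom(d_{0,\max}^t)$, the latter amounting to $\triangle_p f \in L^2(0,R)$, that is $f \in \dom(\triangle_{p,\max})$.

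The asymptotic constraint at $x=0$ in the limit circle case falls out almost for free from the second component of $d_0$. Writing
\[
d_0 f = \bigl((-1)^k f' + c_k f/x,\ \sqrt{\lambda}\,f/x\bigr)^T,
\]
the requirement $d_0 f \in L^2((0,R), \C^2)$ forces $f/x \in L^2(0,R)$. Substituting the expansion $f(x) = c_1(f)\, x^{p+1} + c_2(f)\, x^{-p} + O(x^{3/2})$ from Proposition \ref{laplacian-maximal} gives $f(x)/x \sim c_2(f)\, x^{-p-1}$, which fails to be square integrable near $x=0$ for any $p > -1/2$; hence $c_2(f) = 0$ is automatic. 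In the limit point case $p \geq 1/2$, Proposition \ref{laplacian-maximal}(iii) already gives $f = O(x^{3/2})$ and Corollary \ref{ess.s.a} asserts that the minimal and maximal realizations of $\triangle_p$ coincide at $x=0$, so no further condition at the singular end is needed.

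To derive the boundary condition at $x=R$, I would establish the Lagrange identity for the two-component operator $d_0$: integration by parts on $\langle d_0 f, g\rangle - \langle f, d_0^t g\rangle$ for $f \in \dom(d_{0,\max})$ and $g = (g_1,g_2) \in \dom(d_{0,\max}^t)$ produces the single boundary contribution $(-1)^k\bigl[f(R)\overline{g_1(R)} - \lim_{x\to 0^+}f(x)\overline{g_1(x)}\bigr]$. Using $d_{0,\min} = (d_{0,\max}^t)^*$ from Theorem \ref{max-min-theorem}, the membership $f \in \dom(d_{0,\min})$ is characterized by the vanishing of this pairing for all admissible $g$; the limit at $x=0$ already vanishes once $c_2(f)=0$ or $f = O(x^{3/2})$ by a brief check of the allowable asymptotics of $g_1$, while $g_1(R)$ ranges freely over $\C$, yielding $f(R) = 0$ but no constraint on $f'(R)$, as expected for a first-order operator. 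The converse inclusions follow by a standard density argument in the graph norm of $d_0$ paralleling the second half of the proof of Proposition \ref{lagrange}. The main obstacle I anticipate is the bookkeeping in the Lagrange identity at $x=0$ — specifically classifying the admissible asymptotics of $g_1 \in \dom(d_{0,\max}^t)$ precisely enough to certify that the singular boundary term always vanishes — once that is pinned down the stated characterization drops out immediately.
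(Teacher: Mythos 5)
Your forward inclusion is essentially the paper's argument: the reduction $\triangle_{0,\lambda}^k=d_{0,\max}^t d_{0,\min}$, the identification with $\triangle_p$, and the observation that the second component $\sqrt{\lambda}\,f/x$ of $d_0f$ must be square integrable, killing $c_2(f)$ since $x^{-p-1}\notin L^2$ near $0$, are exactly what the paper does (it phrases this as $f\in\dom(1/x)_{\max}$ while $\textup{id}^{-p}\notin\dom(1/x)_{\max}$). Your route to $f(R)=0$ via a Lagrange identity for the two-component operator $d_0$ is workable (and for this direction you may even take $g$ supported away from $x=0$ with $g_1(R)\neq 0$, so no asymptotics at the singular end are needed); the paper instead quotes Proposition \ref{relative-regular}, i.e.\ the Paquet trace theorem, for the boundary condition at the regular end.

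The genuine gap is the converse inclusion. To show that every $f\in\dom(\triangle_{p,\max})$ with $c_2(f)=0$ and $f(R)=0$ lies in $\dom(d_{0,\min})=(d_{0,\max}^t)^*$, your plan requires the boundary pairing $(-1)^k\bigl[f(R)\overline{g_1(R)}-\lim_{x\to 0^+}f(x)\overline{g_1(x)}\bigr]$ to vanish for \emph{all} $g=(g_1,g_2)\in\dom(d_{0,\max}^t)$, and hence a classification of the admissible asymptotics of $g_1$ near $x=0$ for the coupled first-order system $d_0^t$ (or, alternatively, a genuine graph-norm approximation of $f$ by compactly supported functions). This is precisely the step you defer as the ``main obstacle,'' and it is not ``a standard density argument paralleling Proposition \ref{lagrange}'': that proposition's converse rests on the adjoint characterization, which for the system $d_0$ needs exactly the asymptotic analysis you have not supplied. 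The paper avoids this entirely by a maximality argument: the right-hand side is a self-adjoint operator (it is $\dom(\triangle_p^D)$ from Corollary \ref{ND-extension} in the limit circle case, respectively the unique Dirichlet-at-$R$ extension in the limit point case), the left-hand side is self-adjoint by Proposition \ref{lemma-compatibility}, and a self-adjoint operator admits no proper symmetric extension, so the inclusion $\subseteq$ already forces equality. Either supply the asymptotic analysis of $\dom(d_{0,\max}^t)$ at $x=0$, or replace your converse by this self-adjointness argument; as written, the proposal is incomplete at that point.
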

\begin{proof}
Consider $\phi \in \dom (\triangle_{0,\lambda}^k)$ with its scalar part $f \in \dom (\triangle_{p,\max})$. We begin with the limit circle case at $x=0$. Since $\phi \in \dom (d_{0,\min})\subset \dom(d_{0,\max})$ we deduce from the explicit form of $d_0$ that $f \in \dom (1/x)_{\max}$, where $1/x$ is the obvious multiplication operator. Since with $p \in (-1/2,1/2)$ $$\textup{id}^{-p}\notin \dom (1/x)_{\max}$$ we deduce $c_2(f)=0$. On the other hand we infer from Proposition \ref{relative-regular} $$f(R)=0.$$ This proves the inclusion $\subset$ in the first statement. Since both sides of the inclusion define self-adjoint extensions and these are maximally symmetric, the inclusion must be an equality.
\\[3mm] For the limit point case at $x=0$ the argumentation is similar up to the fact that the boundary conditions at $x=0$ are redundant by Corollary \ref{ess.s.a}.
\end{proof}
\begin{lemma}\label{relative2}
Identify any $\phi \in \dom (\triangle_{2,\lambda}^k)$ with its scalar part $f \in \dom (\triangle_{p,\max})$. Under this identification we obtain for $\triangle_p$ in the limit circle case \eqref{lcc} at $x=0$, in the notation of \eqref{lcc2}
$$\dom(\triangle_{2,\lambda}^k)=\{f \in \dom (\triangle_{p,\max}) | c_2(f)=0, f'(R)-\frac{(k+1-n/2)}{R}f(R)=0\}.$$
In the limit point case \eqref{lpc!} at $x=0$, we obtain
$$\dom(\triangle_{2,\lambda}^k)=\{f \in \dom (\triangle_{p,\max}) | f'(R)-\frac{(k+1-n/2)}{R}f(R)=0\}.$$
\end{lemma}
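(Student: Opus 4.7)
The proof should parallel the structure of Lemma \ref{relative0}, differing in two main points: which component of $d_1^t$ controls the singular behavior at $x=0$, and which of the two boundary conditions at $x=R$ from Proposition \ref{relative-regular} becomes nontrivial in this component. My plan is as follows.

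First, I fix $\phi \in \dom(\triangle_{2,\lambda}^k) = \dom(d_{1,\min}d_{1,\max}^t)$ and let $f$ denote its scalar part. Since $\phi$ must lie in $\dom(d_{1,\max}^t)$, both components of
\[
d_1^t\phi = \bigl( x^{-1}\sqrt{\lambda}\,\phi,\; \bigl((-1)^k\partial_x + c_{k+1}/x\bigr)\phi \bigr)
\]
belong to $L^2(0,R)$. The first component already yields the key constraint at $x=0$: since $\lambda \in V_k$ implies $\lambda > 0$, we get $f/x \in L^2$ near zero. In the limit circle case \eqref{lcc}, the asymptotics \eqref{lcc2} give $f(x)/x = c_1(f)x^{p} + c_2(f)x^{-p-1} + O(x^{1/2})$, and because $\lambda>0$ forces $p > -1/2$, we have $-p-1 \in (-3/2,-1/2)$, so $x^{-p-1} \notin L^2$ near zero; hence $c_2(f)=0$.

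Next, I extract the boundary condition at $x=R$ from Proposition \ref{relative-regular} applied to $\triangle_{k+2}^{rel}$ on the component $L^2((0,R), d_N E_\lambda^k)$. Under $\Psi_{k+2}$ this component corresponds to $(\phi_{k+1},\phi_{k+2})$ with $\phi_{k+2}\equiv 0$ and $\phi_{k+1} = f\cdot d_N\psi/\sqrt{\lambda}$. The condition $\phi_{k+2}(R)=0$ is automatic, while the second boundary condition of Proposition \ref{relative-regular}, with $k$ replaced by $k+2$, reads $\phi'_{k+1}(R) - \tfrac{k+1-n/2}{R}\phi_{k+1}(R)=0$, which passes directly to the scalar part as $f'(R) - \tfrac{k+1-n/2}{R}f(R)=0$. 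Together with step one, this proves the inclusion $\subseteq$.

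For the reverse inclusion I will mimic the closing argument of Lemma \ref{relative0}: the left-hand side $\triangle_{2,\lambda}^k$ is self-adjoint by construction (as a $T^*T$-type operator on a reducing subspace, via Proposition \ref{lemma-compatibility} and Theorem \ref{decomposition2}); the right-hand side defines a symmetric extension of $\triangle_p$, which follows directly from Lemma \ref{lapl-lagr} since the conditions $c_2(f)=c_2(g)=0$ kill the singular boundary terms at zero and the Robin condition at $R$ makes $f'(R)\overline{g(R)} - f(R)\overline{g'(R)}$ vanish. A self-adjoint operator contained in a symmetric one must equal it. In the limit point case \eqref{lpc!}, the condition $c_2(f)=0$ drops out automatically by Corollary \ref{ess.s.a}, while the $x=R$ analysis is unchanged.

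The main obstacle is the self-adjointness side of the maximality argument: one must verify that the singular Dirichlet-type condition $c_2(f)=0$ paired with the Robin condition at $R$ genuinely kills every boundary pairing in the Lagrange identity of Lemma \ref{lapl-lagr}, rather than only those between $f$ and its complex conjugate. Once this symmetry check is made — which is a short bookkeeping computation using that $c_1(g)$ can be arbitrary but multiplies $c_2(f)=0$ in the pairing — the equality of the two domains is forced by the maximality of self-adjoint extensions.
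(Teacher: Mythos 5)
Your proposal is correct and follows essentially the same route as the paper: you deduce $c_2(f)=0$ from the $x^{-1}\sqrt{\lambda}$-component of $d_1^t$ (i.e.\ $f\in\dom(1/x)_{\max}$), read off the Robin condition at $x=R$ from Proposition \ref{relative-regular} in degree $k+2$, and close the reverse inclusion by a maximality argument, with the limit point case handled via Corollary \ref{ess.s.a}. Your version merely spells out the last step slightly more carefully (self-adjoint contained in symmetric, with the symmetry check via Lemma \ref{lapl-lagr}), which is a fair sharpening of the paper's ``both sides are self-adjoint, hence equal'' remark.
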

\begin{proof}
Consider $\phi \in \dom (\triangle_{2,\lambda}^k)$ with its scalar part $f \in \dom (\triangle_{p,\max})$. We begin with the limit circle case at $x=0$. Since $\phi \in \dom (d_{1,\max}^t)$ we deduce from the explicit form of $d_1^t$ that $f \in \dom (1/x)_{\max}$, where $1/x$ is the obvious multiplication operator. Since with $p \in (-1/2,1/2)$ $$\textup{id}^{-p}\notin \dom (1/x)_{\max}$$ we deduce $c_2(f)=0$ as in the previous lemma. On the other hand we infer from Proposition \ref{relative-regular} in the degree $k+2$ $$f'(R)-\frac{(k+1-n/2)}{R}f(R)=0.$$ This proves the inclusion $\subset$ in the first statement. Since both sides of the inclusion define self-adjoint extensions and these are maximally symmetric, the inclusion must be an equality.
\\[3mm] For the limit point case at $x=0$ the argumentation is similar up to the fact that the boundary conditions at $x=0$ are redundant by Corollary \ref{ess.s.a}. 
\end{proof}\ \\
\\[-7mm] In contrary to $\dom (\triangle^k_{0,\lambda})$ and $\dom (\triangle^k_{2,\lambda})$, it is not straightforward to determine $\dom (\triangle^k_{\lambda})$ explicitly. However for the purpose of later calculations of zeta determinants it is sufficient to observe that $\triangle^k_{0,\lambda},\triangle^k_{2,\lambda},\triangle^k_{\lambda}$ are Laplacians of the complex \eqref{subcomplex} with relative boundary conditions and hence satisfy the following spectral relation:
$$\textup{Spec}(\triangle^k_{\lambda})\backslash \{0\}=\textup{Spec}(\triangle^k_{0,\lambda})\backslash \{0\}\sqcup \textup{Spec}(\triangle^k_{2,\lambda})\backslash \{0\},$$ where the eigenvalues are counted with their multiplicities.
\\[3mm] Next consider $\mathcal{H}^k(N)$ with the fixed orthonormal basis $\{u^k_i\},i=1,..,\dim \mathcal{H}^k(N)$. Observe that for any $i$ the subspace $C^{\infty}_0((0,R),\langle0\oplus u^k_i,u^k_i\oplus 0\rangle)$ is invariant under $d,d^t$ and we obtain a subcomplex of the de Rham complex
\begin{align*}
0\to C^{\infty}_0((0,R),\langle 0\, \oplus \, &u^k_i,\rangle)\xrightarrow{d} C^{\infty}_0((0,R),\langle u^k_i\oplus 0\rangle) \to 0, \\
&d=(-1)^k\partial_x+\frac{c_k}{x},
\end{align*}
where the action of $d$ is of scalar type under the identification fixed in Remark \ref{scalar-identification}. We continue under this identification. By Proposition \ref{lemma-compatibility} and Theorem \ref{decomposition2} we obtain for the induced self-adjoint extensions
\begin{align}\nonumber
\dom (\triangle_k^{rel})\cap L^2((0,R),&\langle 0\oplus u^k_i\rangle)=\dom (d^t_{\max}d_{\min})=\\ \label{H1-rel}&=\dom\left((-1)^{k+1}\partial_x+\frac{c_k}{x}\right)_{\max}\left((-1)^{k}\partial_x+\frac{c_k}{x}\right)_{\min}, \\ \nonumber
\dom (\triangle_{k+1}^{rel})\cap L^2((0,R),&\langle u^k_i \oplus 0 \rangle)=\dom (d_{\min}d^t_{\max})=\\ \label{H2-rel}&=\dom\left((-1)^{k}\partial_x+\frac{c_k}{x}\right)_{\min}\left((-1)^{k+1}\partial_x+\frac{c_k}{x}\right)_{\max}.
\end{align}
Depending on the explicit value of $c_k=(-1)^k(k-n/2)$ these domains are self-adjoint extensions of regular-singular model Laplacians in limit point or limit circle case at $x=0$. For model Laplacians in the limit circle case at $x=0$ the domains are determined in Subsection \ref{model-operator}. In the limit point case at $x=0$ the boundary conditions at $x=0$ are redundant by Corollary \ref{ess.s.a} and the boundary conditions at $x=R$ are determined in Proposition \ref{relative-regular}.

\section{Functional determinant of the Laplacian with relative Boundary Conditions}\label{even} \
\\[-3mm] We continue in the setup and notation of Section \ref{section-laplace}, where we put $R=1$ for simplicity, and consider the de Rham Laplacian $\triangle_k$ on differential forms of degree $k$. Its self-adjoint extension $\triangle_k^{rel}$ is defined in \eqref{definition} and can be viewed as a self-adjoint operator in $$H^k=L^2([0,1], L^2(\wedge^{k-1}T^*N\oplus \wedge^kT^*N, \textup{vol}(g^N)),dx).$$
We want to identify in each fixed degree $k$ a decomposition 
$$\triangle_k=L_k \oplus \widetilde{\triangle}_k, \quad H^k=H^k_1\oplus H^k_2,$$ compatible with the relative extension $\triangle_k^{rel}$, such that $\widetilde{\triangle}_k$ is the maximal direct sum component, subject to compatibility condition, which is essentially self-adjoint at the cone-singularity in the sense that all its self-adjoint extensions in $H^k_2$ coincide at the cone-tip, in analogy to Definition \ref{coincide}.
\\[3mm] The component $\widetilde{\triangle}_k$ is discussed in [DK]. In this subsection our aim is to understand the structure of $L_k$ and its self-adjoint extension $\mathcal{L}_k^{rel}$ induced in the sense of Definition \ref{induced} by the relative extension $\triangle_k^{rel}$. In particular we want to compute the zeta-regularized determinant of $\mathcal{L}_k^{rel}$ in degrees where it is present.
\\[3mm] Consider the decomposition of $$L^2((0,1),L^2(\wedge^{k-1}T^*N\oplus \wedge^kT^*N)),$$ induced by \eqref{decomposition1}. By Theorem \ref{decomposition2} it is compatible with $\triangle_k^{rel}$. Thus by Lemma \ref{lemma-compatibility} the relative extension $\triangle_k^{rel}$ induces self-adjoint extensions of $\triangle_k$ restricted to each of the orthogonal components of the decomposition. We consider each of the components distinctly.
\begin{prop}\label{prop1}
The relative extension $\triangle_k^{rel}$ induces a self-adjoint extension of $\triangle_k$ restricted to $C^{\infty}_0((0,1),\mathcal{H}^k(N))$. This component contributes to $L_k$ only for $$k \in \left(\frac{m}{2}-2,\frac{m}{2}\right).$$ In these degrees the contribution of the component to the zeta-determinant of $\mathcal{L}_k^{rel}$ is given with $\nu:=|k+1-m/2|$ by $$\left[\frac{\sqrt{2\pi}}{\Gamma(1+\nu)}2^{-\nu}\right]^{\dim \mathcal{H}^k(N)}.$$
\end{prop}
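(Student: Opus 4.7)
The plan is to apply the orthogonal decomposition of Theorem \ref{decomposition2} to reduce $\triangle_k^{rel}$ on $C^{\infty}_0((0,1),\mathcal{H}^k(N))$ to $\dim\mathcal{H}^k(N)$ independent one-dimensional problems, each directly covered by Corollary \ref{4}.

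First I would fix an orthonormal basis $\{u^k_i\}$ of $\mathcal{H}^k(N)$ and, for each $i$, look at the one-dimensional subspace $\langle 0\oplus u_i^k\rangle\subset\Omega^{k-1}(N)\oplus\Omega^k(N)$. By Theorem \ref{decomposition2} the corresponding subspace of $H^k$ is compatible with $\triangle_k^{rel}$, so Proposition \ref{lemma-compatibility} yields an induced self-adjoint operator, and relation \eqref{H1-rel} combined with Remark \ref{scalar-identification} identifies it with $d^t_{\max}d_{\min}$ for $d=(-1)^k\partial_x+c_k/x$. The second-order square is the model Laplacian
\[
\triangle=-\frac{d^2}{dx^2}+\frac{\lambda}{x^2},\qquad \lambda=(k-n/2)(k+1-n/2),
\]
and substituting $n=m-1$ gives $\lambda+\tfrac14=(k+1-m/2)^2$. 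Thus with $\nu:=|k+1-m/2|$ the induced operator is exactly the model operator $\triangle_{\nu-1/2}$ of Subsection \ref{model-operator}.

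Next I would pin down the admissible range and the corresponding self-adjoint realization. Corollary \ref{ess.s.a} tells us that $\triangle_{\nu-1/2}$ is in the limit circle case at $x=0$ exactly when $\nu<1$, which translates to $k\in(m/2-2,m/2)$; outside this range the operator is essentially self-adjoint at the cone tip and is therefore absorbed into $\widetilde{\triangle}_k$, contributing nothing to $L_k$. Within the limit-circle range, Corollary \ref{ND-extension}, Corollary \ref{DN-extension-1/2}, or Corollary \ref{DN-extension} (depending on whether $p=k-n/2$ lies in $(-1/2,1/2)$, equals $-1/2$, or lies in $(-3/2,-1/2)$) identifies the induced extension as $\{f\in\dom(\triangle_{\max}):c_2(f)=0,\ f(1)=0\}$; a direct inspection of the asymptotics in Proposition \ref{laplacian-maximal} shows that $c_2(f)=0$ is equivalent to $f(x)=O(\sqrt{x})$ as $x\to 0$, so this domain coincides with the one appearing in Corollary \ref{4}.

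Finally, Corollary \ref{4} evaluates the zeta-regularized determinant of this D-extension as $\sqrt{2\pi}/(\Gamma(1+\nu)2^\nu)$. Since
\[
L^2((0,1),\mathcal{H}^k(N))=\bigoplus_{i=1}^{\dim\mathcal{H}^k(N)}L^2((0,1),\langle u_i^k\rangle)
\]
is an orthogonal decomposition on which $\triangle_k^{rel}$ acts as a direct sum of $\dim\mathcal{H}^k(N)$ mutually independent copies of the same D-extension, the individual zeta-functions add and hence the zeta-determinants multiply, yielding the claimed power. The only mildly delicate point, rather than a genuine obstacle, is the verification that the induced extension really is the D-extension of Corollary \ref{4} uniformly across the entire interval $k\in(m/2-2,m/2)$, which comes down to tracking the reparametrization $p\leftrightarrow r=-p-1$ as $p$ passes through $-1/2$.
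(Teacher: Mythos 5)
Your proposal is correct and follows essentially the same route as the paper: reduce via Theorem \ref{decomposition2}, Proposition \ref{lemma-compatibility}, \eqref{H1-rel} and Remark \ref{scalar-identification} to the scalar D-extension of $\triangle_{\nu-1/2}$ with $\nu=|k+1-m/2|$, use Corollary \ref{ess.s.a} to restrict to $k\in(m/2-2,m/2)$, identify the domain as $\{c_2(f)=0,\ f(1)=0\}$ via Corollaries \ref{ND-extension}, \ref{DN-extension-1/2}, \ref{DN-extension}, and apply Corollary \ref{4} with multiplicity $\dim\mathcal{H}^k(N)$. The case tracking across $p=k-n/2\in(-3/2,1/2)$ that you flag as the delicate point is exactly how the paper handles it as well.
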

\begin{proof}
Recall from \eqref{H1-rel} in the convention of Remark \ref{scalar-identification}
\begin{align*}
\dom (\triangle_k^{rel})&\cap L^2((0,1),\langle 0\oplus u^k_i\rangle)=\\ &=\dom\left(\left[\partial_x+\frac{(-1)^kc_k}{x}\right]^t_{\max}\left[\partial_x+\frac{(-1)^kc_k}{x}\right]_{\min}\right)=\dom (\triangle^D_{(-1)^kc_k}), 
\end{align*}
where $\{u^k_i\}$ is an orthonormal basis of $\mathcal{H}^k(N)$, $c_k=(-1)^k(k-n/2)$ and $\triangle^D_{(-1)^kc_k}$ denotes the self-adjoint D-extension of $\triangle_{(-1)^kc_k}$, as introduced in Subsection \ref{model-operator}. Note
$$\triangle_{(-1)^kc_k}\equiv \triangle_{k-n/2}=-\frac{d^2}{dx^2}+\frac{1}{x^2}\left[(k+1-m/2)^2-\frac{1}{4}\right]=\triangle_{\nu-1/2},$$
where we put $\nu:=|k+1-m/2|$. We know from Corollary \ref{ess.s.a} that $\triangle_{\nu-1/2}$ is in the limit circle case at $x=0$ and hence not essentially self-adjoint at $x=0$ iff  $$\nu^2-\frac{1}{4}=\left[k+1-\frac{m}{2}\right]^2-\frac{1}{4}<\frac{3}{4}, \ \textup{i.e.} \ k \in \left(\frac{m}{2}-2,\frac{m}{2}\right).$$
Thus we get a contribution to $L_k$ in these degrees only, which is the first part of the statement. 
\\[3mm] Fix such a degree $k\in (m/2-2,m/2)$. Then the contribution to $\mathcal{L}_k^{rel}$ is given by $$\left[\det\nolimits_{\zeta}\triangle^D_{k-n/2}\right]^{\dim \mathcal{H}^k(N)}.$$ Note that for $k\in (m/2-2,m/2)$ we have $(-1)^kc_k=k-n/2\in (-3/2,1/2)$. Depending on the explicit value of $(-1)^kc_k$ we apply Corollaries \ref{ND-extension}, \ref{DN-extension-1/2} and \ref{DN-extension}. We deduce in any case:
$$\dom (\triangle^D_{k-n/2})=\{f \in \dom (\triangle_{\nu-1/2, \max})|c_2(f)=0, f(1)=0\},$$
where $c_2(f)$ refers to the asymptotics in Proposition \ref{laplacian-maximal} or equivalently in \eqref{laplacian-maxB}. 
\\[3mm] We deduce the explicit value of $\det_{\zeta}\triangle^D_{k-n/2}$ from Corollary \ref{4}. 
\end{proof}

\begin{prop}\label{prop2}
The relative extension $\triangle_k^{rel}$ induces a self-adjoint extension of $\triangle_k$ restricted to $C^{\infty}_0((0,1),\mathcal{H}^{k-1}(N))$. This component contributes to $L_k$ only for $$k \in \left(\frac{m}{2},\frac{m}{2}+2\right).$$ In these degrees the contribution of the component to the zeta-determinant of $\mathcal{L}_k^{rel}$ is given by 
\begin{align*}\begin{array}{rl}
\left[\sqrt{\pi /2}\right]^{\dim \mathcal{H}^{k-1}(N)}, \ & \textup{if} \ \dim M=m \ \textup{is even}, k=m/2+1, \\
\left[2\right]^{\dim \mathcal{H}^{k-1}(N)}, \ & \textup{if} \ \dim M=m \ \textup{is odd}, \ k=n/2+1, \\
\left[2/3\right]^{\dim \mathcal{H}^{k-1}(N)}, \ & \textup{if} \ \dim M=m \ \textup{is odd}, \ k=n/2+2.\end{array}
\end{align*}
\end{prop}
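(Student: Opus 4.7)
The plan mirrors the proof of Proposition \ref{prop1}, but uses \eqref{H2-rel} in place of \eqref{H1-rel}. Expanding $\mathcal{H}^{k-1}(N)$ in the orthonormal basis $\{u_i^{k-1}\}$ and invoking the scalar identification of Remark \ref{scalar-identification}, the relative extension induces on each copy $L^2((0,1),\langle u_i^{k-1}\oplus 0\rangle)$ the self-adjoint operator
$$\dom(d_{\min} d^t_{\max}), \qquad d := (-1)^{k-1}\partial_x+\frac{c_{k-1}}{x},$$
obtained from \eqref{H2-rel} by the substitution $k \mapsto k-1$. Setting $p := k-1-n/2 = k-(m+1)/2$, so that $d = \pm d_p$, a direct calculation gives $d d^t = -\partial_x^2 + p(p-1)/x^2 = \triangle_{\nu - 1/2}$ with $\nu := |k-1-m/2| = |p - 1/2|$.

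By Corollary \ref{ess.s.a} the scalar model Laplacian $\triangle_{\nu-1/2}$ is in the limit circle case at $x=0$, and hence contributes to $L_k$ rather than to $\widetilde{\triangle}_k$, if and only if $\nu < 1$, i.e., $k \in (m/2, m/2+2)$; this proves the first assertion. Integrality of $k$ then restricts the analysis to exactly three sub-cases: (A) $m$ even, $k = m/2+1$, with $p = 1/2$ and $\nu = 0$; (B) $m$ odd, $k = n/2+1$, with $p = 0$ and $\nu = 1/2$; (C) $m$ odd, $k = n/2+2$, with $p = 1$ and $\nu = 1/2$.

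In each sub-case the plan is to convert the domain description $\dom(d_{p,\min} d^t_{p,\max})$ into explicit conditions on the coefficients $c_1(f), c_2(f)$ from Proposition \ref{laplacian-maximal} and on $f(1), f'(1)$ at the regular end, using Propositions \ref{d-p-max} and \ref{lagrange} for the relevant $p$. In Case A the calculation $d^t_{1/2}(\sqrt{x}\log x) = -1/\sqrt{x}\notin L^2(0,1)$ forces $c_2(f)=0$; combined with $d^t_{1/2}f \in \dom(d_{1/2,\min})$ this yields $f'(1)-\tfrac{1}{2}f(1)=0$, matching the extension of Corollary \ref{2} with scalar determinant $\sqrt{\pi/2}$. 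In Case B the operator $d = \pm\partial_x$ is regular, so $\dom(d^t_{\max}) = H^1(0,1)$ imposes no asymptotic constraint, and $d^t f \in \dom(d_{\min}) = H^1_0(0,1)$ produces $c_1(f)=0$ together with $f'(1)=0$, matching Corollary \ref{2'} with value $2$. In Case C the asymptotic constraint $\dom(d_{-1,\max})\subset\{f=O(\sqrt{x})\}$ forces $c_2(f)=0$, while $d^t_1 f \in \dom(d_{1,\min})$ gives $f'(1)-f(1)=0$, matching Corollary \ref{1'} with value $2/3$. In all three cases the regular-end condition obtained this way agrees with the one predicted by Proposition \ref{relative-regular}, namely $f'(1)-(k-1-n/2)f(1)=0$, providing a consistency check. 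Since the $\dim\mathcal{H}^{k-1}(N)$ basis elements yield mutually orthogonal, isometric copies of the same scalar problem, the determinant of the direct sum is the stated power.

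The main technical nuisance is that $\dom(d_{p,\min} d^t_{p,\max})$ is not literally one of the D- or N-extensions of $\triangle_p = d_p^t d_p$ classified in Subsection \ref{model-operator}; it is an extension of the shifted operator $d_p d^t_p \ne d_p^t d_p$. Consequently the conditions at $x=0$ cannot be read off directly from Corollary \ref{ND-extension} or Corollary \ref{DN-extension}, and must instead be extracted by hand from the $L^2$-integrability of $d^t_p f$ combined with the asymptotics of $\dom(d_{p,\max})$ furnished by Proposition \ref{d-p-max}.
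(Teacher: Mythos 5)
Your proposal is correct and, in substance, follows the same route as the paper: reduce via \eqref{H2-rel} and Remark \ref{scalar-identification} to the scalar operator $d_{\min}d^t_{\max}$ on each copy $\langle u^{k-1}_i\oplus 0\rangle$, use Corollary \ref{ess.s.a} to see that it contributes to $L_k$ exactly for $\nu=|k-1-m/2|<1$, i.e.\ $k\in(m/2,m/2+2)$, identify the explicit domain in the three integer sub-cases, and quote Corollaries \ref{2}, \ref{2'}, \ref{1'}. The one place you genuinely deviate is the identification of the induced extension: the paper observes that $d_{p,\min}d^t_{p,\max}=d^t_{-p,\min}d_{-p,\max}=\triangle^N_{-p}$ with $-p=n/2+1-k$, so the boundary conditions are read off directly from Corollaries \ref{ND-extension}, \ref{DN-extension-1/2} and \ref{DN-extension} (the last of which is set up precisely for the shifted parameter range $(-3/2,-1/2)$, i.e.\ for extensions of $d_rd_r^t$ with $r=-p-1$). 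Your closing remark that these corollaries "cannot be read off directly" is therefore mistaken -- the operator is literally one of the classified N-extensions after the trivial reparametrization $p\mapsto -p$ -- but your by-hand derivation from Propositions \ref{d-p-max} and \ref{lagrange} ($c_2(f)=0$ from $d^t_{1/2}f\in L^2$ in case A, $c_1(f)=0$ from $-f'\in\dom(d_{0,\min})$ in case B, $c_2(f)=0$ from $f=O(\sqrt x)$ in case C, together with the correct conditions at $x=1$) reproduces exactly the same domains, so there is no gap; it simply re-proves special cases of Corollaries \ref{DN-extension-1/2}, \ref{ND-extension}, \ref{DN-extension}. One sentence worth adding for completeness: your derivation a priori gives only the inclusion of $\dom(d_{\min}d^t_{\max})$ into the explicitly described domain, and equality follows because both sides are self-adjoint (maximally symmetric), the same argument the paper uses in Lemmas \ref{relative0} and \ref{relative2}.
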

\begin{proof}
Recall from \eqref{H2-rel} in the convention of Remark \ref{scalar-identification}
\begin{align*}
\dom (\triangle_k^{rel})&\cap L^2((0,1),\langle u^{k-1}_i \oplus 0\rangle)=\\ &=\dom\left(\left[\partial_x+\frac{(-1)^kc_{k-1}}{x}\right]^t_{\min}\left[\partial_x+\frac{(-1)^kc_{k-1}}{x}\right]_{\max}\right)=\dom (\triangle^N_{(-1)^kc_{k-1}}), 
\end{align*}
where $\{u^{k-1}_i\}$ is an orthonormal basis of $\mathcal{H}^{k-1}(N)$, $c_{k-1}=(-1)^{k-1}(k-1-n/2)$ and $\triangle^N_{(-1)^kc_{k-1}}$ denotes the self-adjoint N-extension of $\triangle_{(-1)^kc_{k-1}}$, as introduced in Subsection \ref{model-operator}. Note
$$\triangle_{(-1)^kc_{k-1}}\equiv \triangle_{n/2+1-k}=-\frac{d^2}{dx^2}+\frac{1}{x^2}\left[(k-1-m/2)^2-\frac{1}{4}\right]=\triangle_{\nu-1/2},$$
where we put $\nu:=|k-1-m/2|$. We know from Corollary \ref{ess.s.a} that $\triangle_{\nu-1/2}$ is in the limit circle case at $x=0$ and hence not essentially self-adjoint at $x=0$ iff  $$\nu^2-\frac{1}{4}=\left[k-1-\frac{m}{2}\right]^2-\frac{1}{4}<\frac{3}{4}, \ \textup{i.e.} \ k \in \left(\frac{m}{2},\frac{m}{2}+2\right).$$
Thus we get a contribution to $L_k$ in these degrees only, which is the first part of the statement. 
\\[3mm] Fix such a degree $k\in (m/2,m/2+1)$. Then the contribution to $\mathcal{L}_k^{rel}$ is given by $$\left[\det\nolimits_{\zeta}\triangle^N_{n/2+1-k}\right]^{\dim \mathcal{H}^{k-1}(N)}.$$
Unfortunately the explicit form of the domain $\dom (\triangle^N_{n/2+1-k})$ cannot be presented in such a homogeneous way as in the previous proposition. So we need to discuss different cases separately.
\\[3mm] If $\dim M=m$ is even, then the only degree $k \in (m/2, m/2+2)$ is $k=m/2+1$. Then $n/2+1-k=-1/2, \nu=0$ and we obtain with Corollary \ref{DN-extension-1/2} $$\dom (\triangle_{n/2+1-k}^N)=\{f \in \dom (\triangle_{\nu-1/2,\max})|c_2(f)=0, f'(1)-\frac{1}{2}f(1)=0\},$$
where $c_2(f)$ refers to the asymptotics \eqref{laplacian-maxA}. The contribution to the zeta-determinant of $\mathcal{L}_k^{rel}$ follows now from Corollary \ref{2}. 
\\[3mm] If $\dim M=m$ is odd, then the only degrees $k \in (m/2, m/2+2)$ are $k=n/2+1, n/2+2$. 
\\[3mm] For $k=n/2+1$ we have $n/2+1-k=0, \nu=1/2$ and we obtain from Corollary \ref{ND-extension} $$\dom (\triangle_{n/2+1-k}^N)=\{f \in \dom (\triangle_{\nu-1/2,\max})|c_1(f)=0, f'(1)=0\},$$
where $c_1(f)$ refers to the asymptotics \eqref{laplacian-maxB}. The contribution to the zeta-determinant of $\mathcal{L}_k^{rel}$ in this case follows from Corollary \ref{2'}.
\\[3mm] For the second case $k=n/2+2$ we have $n/2+1-k=-1, \nu=1/2$. So we obtain from Corollary \ref{DN-extension} $$\dom (\triangle_{n/2+1-k}^N)=\{f \in \dom (\triangle_{\nu-1/2,\max})|c_2(f)=0, f'(1)-f(1)=0\},$$
where $c_2(f)$ refers to the asymptotics \eqref{laplacian-maxB}. The contribution to the zeta-determinant of $\mathcal{L}_k^{rel}$ in this case follows from Corollary \ref{1'}. \\[3mm] Now all the possible cases are discussed and the proof is complete.
\end{proof}

\begin{prop}\label{prop3}
The relative extension $\triangle_k^{rel}$ induces a self-adjoint extension of $\triangle_k$ restricted to $C^{\infty}_0((0,1),\{0\}\oplus E_{\lambda}^k)$ for $\lambda \in V_k=\textup{Spec}\triangle_{k,ccl,N}\backslash \{0\}$. This component contributes to $L_k$ only for $$k \in \left(\frac{m}{2}-2,\frac{m}{2}\right), \ \la < 1-\left[k+\frac{1}{2}-\frac{n}{2}\right]^2.$$ In this case the contribution of the component to the zeta-determinant of $\mathcal{L}_k^{rel}$ is given by $$\frac{\sqrt{2\pi}}{\Gamma(1+\nu)2^{\nu}}, \ \textup{where} \ \nu:=\sqrt{\la +\left[k+\frac{1}{2}-\frac{n}{2}\right]^2}.$$
\end{prop}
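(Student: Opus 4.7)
My strategy is to identify the induced self-adjoint operator on $C^{\infty}_0((0,1),\{0\}\oplus E_{\lambda}^k)$ via the decomposition analysis already carried out, determine the range of $(k,\lambda)$ for which it fails to be essentially self-adjoint at $x=0$, and then invoke Corollary \ref{4} to read off the zeta-determinant.

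First I would apply Lemma \ref{relative0} to identify the induced extension. Under the scalar identification of Remark \ref{scalar-identification}, the restriction of $\triangle_k^{rel}$ to $L^2((0,1),\{0\}\oplus E_{\lambda}^k)$ acts as the model Laplacian $\triangle_p$ with
$$p=\sqrt{\lambda+\left[k+\tfrac{1}{2}-\tfrac{n}{2}\right]^2}-\tfrac{1}{2}=\nu-\tfrac{1}{2},$$
on the domain $\triangle_{0,\lambda}^k$ identified by Lemma \ref{relative0}. By Corollary \ref{ess.s.a}, this operator is in the limit circle case at $x=0$ precisely when $p\in[-1/2,1/2)$, equivalently $\nu\in[0,1)$, which is the condition
$$\lambda<1-\left[k+\tfrac{1}{2}-\tfrac{n}{2}\right]^2.$$
Since $n=m-1$, non-negativity of $\lambda$ forces $|k+1-m/2|<1$, i.e.\ $k\in(m/2-2,m/2)$. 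Outside this range the component is essentially self-adjoint at the cone tip and hence belongs to $\widetilde{\triangle}_k$, not to $L_k$. This establishes the first assertion.

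Second, in the limit circle regime just described, Lemma \ref{relative0} gives explicitly
$$\dom(\triangle_{0,\lambda}^k)=\{f\in\dom(\triangle_{p,\max})\mid c_2(f)=0,\ f(1)=0\}.$$
For $\nu\in[0,1)$, the asymptotic expansion in Proposition \ref{laplacian-maximal}(ii) shows that $c_2(f)=0$ is equivalent to $f(x)=O(\sqrt{x})$ as $x\to 0$, since the remaining leading term $c_1(f)\,x^{\nu+1/2}$ is $O(\sqrt{x})$. Therefore the induced extension coincides with the D-type self-adjoint realization of $\triangle_{\nu-1/2}$ analyzed in Corollary \ref{4} (with $R=1$).

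Finally, invoking Corollary \ref{4} directly yields
$$\det\nolimits_{\zeta}\triangle_{0,\lambda}^k=\frac{\sqrt{2\pi}}{\Gamma(1+\nu)\,2^{\nu}},\qquad \nu=\sqrt{\lambda+\left[k+\tfrac{1}{2}-\tfrac{n}{2}\right]^2},$$
which is the asserted contribution. The only nontrivial point is the matching of the boundary-condition description $c_2(f)=0$ with the asymptotic condition $f(x)=O(\sqrt{x})$ assumed in Corollary \ref{4}; beyond that, all ingredients have been prepared in the previous sections and the proof is essentially a bookkeeping exercise.
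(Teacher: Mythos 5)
Your proposal is correct and follows essentially the same route as the paper: identify the component as the scalar model Laplacian $\triangle_{\nu-1/2}$ via \eqref{lapl1} and Remark \ref{scalar-identification}, read off the limit circle range from Corollary \ref{ess.s.a} (using $\lambda>0$ to get $k\in(m/2-2,m/2)$), obtain the domain from Lemma \ref{relative0}, and conclude with Corollary \ref{4}. Your explicit remark that $c_2(f)=0$ is equivalent to $f(x)=O(\sqrt{x})$ for $\nu\in(0,1)$ is a correct and welcome clarification of a point the paper leaves implicit inside the proof of Corollary \ref{4}.
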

\begin{proof}
We infer from \eqref{lapl1} that $\triangle_k$ acts on $C^{\infty}_0((0,1),\{0\}\oplus E_{\lambda}^k)$ with $\lambda \in V_k$ as a rank-one model Laplacian
$$-\frac{d^2}{dx^2}+\frac{1}{x^2}\left(\la + \left[k+\frac{1}{2}-\frac{n}{2}\right]^2-\frac{1}{4}\right),$$ under the identification of elements in $C^{\infty}_0((0,1),\{0\}\oplus E_{\lambda}^k)$ with their scalar parts in $C^{\infty}_0(0,1)$ in the convention of Remark \ref{scalar-identification}. This operator is in the limit circle case at $x=0$ and hence not essentially self-adjoint at $x=0$ iff 
\begin{align*}
\la +\left[k+\frac{1}{2}-\frac{n}{2}\right]^2-\frac{1}{4}<\frac{3}{4},
\ \textup{i.e. } \ k \in \left(\frac{m}{2}-2,\frac{m}{2}\right), \ \la < 1-\left[k+\frac{1}{2}-\frac{n}{2}\right]^2.
\end{align*}
This proves the first part of the statement. Fix such $k$ and $\lambda$. Observe now by Lemma \ref{relative0} $$\dom(\triangle_k^{rel})\cap L^2((0,1),\{0\}\oplus E_{\la}^k)=\{f \in \dom (\triangle_{p,\max}) | c_2(f)=0, f(1)=0\}.$$
The result now follows from Corollary \ref{4}.
\end{proof}
\begin{prop}\label{prop4}
The relative extension $\triangle_k^{rel}$ induces a self-adjoint extension of $\triangle_k$ restricted to $C^{\infty}_0((0,1),d_N E_{\lambda}^{k-2})\oplus \{0\}$ for $\lambda \in V_{k-2}=\textup{Spec}\triangle_{k-2,ccl,N}\backslash \{0\}$. This component contributes to $L_k$ only for $$k \in \left(\frac{m}{2},\frac{m}{2}+2\right), \ \la < 1-\left[k-\frac{3}{2}-\frac{n}{2}\right]^2.$$ In this case the contribution of the component to the zeta-determinant of $\mathcal{L}_k^{rel}$ is given by $$\sqrt{2\pi}\frac{\nu +m/2+1-k}{\Gamma(1+\nu)\cdot 2^{\nu}}, \ \textup{where} \ \nu:=\sqrt{\la +\left[k-\frac{3}{2}-\frac{n}{2}\right]^2}.$$
\end{prop}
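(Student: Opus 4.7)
The plan is to reduce the problem to the framework of Corollary \ref{1}. First, I would observe that the subspace $L^2((0,1), d_N E_\lambda^{k-2} \oplus \{0\})$ fits into the subcomplex \eqref{subcomplex}, applied with $k$ replaced by $k-2$ and $\psi \in E_\lambda^{k-2}$. In this shifted subcomplex, the generator $\xi_4 = (d_N\psi/\sqrt{\lambda},\, 0)$ lies in $\Omega^{k-1}(N) \oplus \Omega^k(N)$, so after the isometry $\Psi_k$ it parametrizes $k$-forms on $M$. Under the scalar identification of Remark \ref{scalar-identification}, the action of $\triangle_k$ reduces by \eqref{lapl1} to the rank-one regular-singular model
$$-\frac{d^2}{dx^2} + \frac{1}{x^2}\left(\lambda + \left[k - \frac{3}{2} - \frac{n}{2}\right]^2 - \frac{1}{4}\right) = \triangle_{\nu - 1/2},$$
with $\nu := \sqrt{\lambda + [k - 3/2 - n/2]^2}$. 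By Corollary \ref{ess.s.a} the limit circle case at $x=0$ holds precisely when $\nu < 1$, which (since $\lambda > 0$) is equivalent to $\lambda < 1 - [k - 3/2 - n/2]^2$ together with $k \in (m/2, m/2+2)$. This establishes the first part of the proposition.

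Next, I would invoke Lemma \ref{relative2} with the index shift $k \to k-2$. It identifies the self-adjoint extension induced by $\triangle_k^{rel}$ on $L^2((0,1), d_N E_\lambda^{k-2} \oplus \{0\})$ with the operator $\triangle_{2,\lambda}^{k-2}$, whose domain in scalar form reads
$$\{f \in \dom(\triangle_{\nu - 1/2,\max}) \mid c_2(f) = 0,\ f'(1) - (k - 1 - n/2)\, f(1) = 0\}.$$
Setting $\alpha := -(k - 1 - n/2) = m/2 + 1/2 - k$ (using $n = m-1$), this is exactly the D-type self-adjoint extension appearing in Corollary \ref{1}.

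Then I would verify the hypothesis $\alpha \neq -\nu - 1/2$ required in Corollary \ref{1}. This inequality is equivalent to $\nu \neq k - m/2 - 1$, and it holds automatically because $\lambda > 0$ strictly forces $\nu > |k - m/2 - 1|$. Applying Corollary \ref{1} then yields
$$\det\nolimits_{\zeta}(\triangle_{2,\lambda}^{k-2}) = \sqrt{2\pi}\, \frac{\alpha + \nu + 1/2}{\Gamma(1 + \nu)\, 2^\nu} = \sqrt{2\pi}\, \frac{\nu + m/2 + 1 - k}{\Gamma(1+\nu)\, 2^\nu},$$
which is the stated formula.

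The primary obstacle is bookkeeping rather than analysis: one must consistently apply the index shift $k \to k-2$ in the subcomplex \eqref{subcomplex}, extract the boundary condition from the $\xi_4$-endpoint covered by Lemma \ref{relative2} (rather than the $\xi_1$-endpoint, which feeds into Proposition \ref{prop3}), and keep track of the conversion $n = m-1$ in simplifying the numerator $\alpha + \nu + 1/2$. Once these substitutions are handled, the result reduces to a direct quotation of Corollary \ref{1}.
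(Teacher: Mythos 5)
Your proposal is correct and follows essentially the same route as the paper: identify the scalar model operator via \eqref{lapl1} with the index shift $k\to k-2$, read off the limit-circle range from Corollary \ref{ess.s.a}, extract the boundary conditions from Lemma \ref{relative2}, and conclude with Corollary \ref{1}. Your explicit check of the injectivity hypothesis $\alpha\neq-\nu-1/2$ (via $\nu>|k-1-m/2|$ since $\lambda>0$) is a detail the paper leaves implicit, and it is handled correctly.
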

\begin{proof}
We infer from \eqref{lapl1} that $\triangle_k$ acts on $C^{\infty}_0((0,1),d_N E_{\lambda}^{k-2})$ with $\lambda \in V_{k-2}$ as a rank-one model Laplacian
$$-\frac{d^2}{dx^2}+\frac{1}{x^2}\left(\la + \left[k-\frac{3}{2}-\frac{n}{2}\right]^2-\frac{1}{4}\right),$$ under the identification of elements with their scalar parts as before. This operator is in the limit circle case at $x=0$ and hence not essentially self-adjoint at $x=0$ iff 
\begin{align*}
\la +\left[k-\frac{3}{2}-\frac{n}{2}\right]^2-\frac{1}{4}<\frac{3}{4},
\ \textup{i.e. } \ k \in \left(\frac{m}{2},\frac{m}{2}+2\right), \ \la < 1-\left[k-\frac{3}{2}-\frac{n}{2}\right]^2.
\end{align*}
This proves the first part of the statement. Fix such $k$ and $\lambda$. Observe now by Lemma \ref{relative2} 
\begin{align*}
\dom(\triangle_k^{rel})\cap L^2((0,1),\{0\}\oplus E_{\la}^k)=\{f \in \dom (\triangle_{p,\max}) | \\c_2(f)=0, f'(1)-(k-1-n/2)f(1)=0\}.
\end{align*}
The result now follows from Corollary \ref{1}.
\end{proof}

\begin{prop}\label{prop5}
The relative extension $\triangle_k^{rel}$ induces a self-adjoint extension of $\triangle_k$ restricted to $C^{\infty}_0((0,1),\widetilde{E}_{\la}^{k-1})$ with $\la \in V_{k-1}=\textup{Spec}\triangle_{k-1,ccl,N}\backslash \{0\}$. This component contributes to $L_k$ only for $$k \in \left(\frac{m}{2}-2,\frac{m}{2}+2\right), \ \la < 4-\left[k-\frac{m}{2}\right]^2.$$ The contribution of the component to the zeta-determinant of $\mathcal{L}_k^{rel}$ is given by $$2\pi \frac{(\nu -k+m/2)}{\Gamma(1+\nu)^2}2^{-2\nu}, \textup{where} \ \nu:=\sqrt{\la +\left[k-\frac{m}{2}\right]^2}.$$
\end{prop}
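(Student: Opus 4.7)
The plan is to exploit the spectral factorization established at the end of Section~\ref{relative-singularII}, namely
$$\textup{Spec}(\triangle^{k-1}_\lambda) \setminus \{0\} = \textup{Spec}(\triangle^{k-1}_{0,\lambda}) \setminus \{0\} \sqcup \textup{Spec}(\triangle^{k-1}_{2,\lambda}) \setminus \{0\}$$
with multiplicities. Since zeta functions add under such disjoint-union decompositions, the regularized determinants multiply, and the computation of $\det\nolimits_\zeta \triangle^{k-1}_\lambda$ reduces to the two scalar problems already handled in Propositions~\ref{prop3} and~\ref{prop4}.

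The first concrete step is to identify $\triangle^{k-1}_{0,\lambda}$ and $\triangle^{k-1}_{2,\lambda}$ within the subcomplex \eqref{subcomplex} built from $\psi \in E_\lambda^{k-1}$: the first is exactly the self-adjoint realization appearing in Proposition~\ref{prop3} applied at degree $k-1$, and the second is the one of Proposition~\ref{prop4} applied at degree $k+1$. A direct substitution into \eqref{lapl1} shows that both reduce to the scalar model Laplacian with the same parameter $\nu = \sqrt{\lambda + (k-m/2)^2}$, since $(k-1)+1/2-n/2 = (k+1)-3/2-n/2 = k-m/2$. Multiplying the shifted formulas of Propositions~\ref{prop3} and~\ref{prop4} then yields
$$\frac{\sqrt{2\pi}}{\Gamma(1+\nu)2^\nu} \cdot \sqrt{2\pi}\,\frac{\nu+m/2-k}{\Gamma(1+\nu)2^\nu} = 2\pi\,\frac{\nu - k + m/2}{\Gamma(1+\nu)^2}\,2^{-2\nu},$$
which matches the claimed formula. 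The underlying Corollaries~\ref{1} and~\ref{4} are valid for all $\nu \geq 0$ via [L, Theorem 1.2], so the product formula persists into the regime $\nu \in [1,2)$, where the scalar factors are individually essentially self-adjoint.

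The remaining task is to determine when $\triangle^{k-1}_\lambda$ itself fails to be essentially self-adjoint at $x=0$, so that it contributes to $L_k$ rather than to $\widetilde{\triangle}_k$. Writing $d_0 d_0^t + d_1^t d_1$ in the basis $\{\xi_2, \xi_3\}$ from \eqref{subcomplex}, one verifies that all first-order cross terms cancel and the operator takes the form $-\partial_x^2 \cdot I + M_0/x^2$ with a constant symmetric $2\times 2$ matrix $M_0$. A short trace-determinant computation gives $\textup{Tr}(M_0) = 2\nu^2 + 3/2$ and $\textup{Tr}(M_0)^2 - 4\,\textup{Det}(M_0) = 16\nu^2$, hence the eigenvalues of $M_0$ are $(\nu+1)^2 - 1/4$ and $(\nu-1)^2 - 1/4$. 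After pointwise diagonalization the middle Laplacian decouples into two scalar model Laplacians of Subsection~\ref{model-operator} with effective indices $\nu+1$ and $|\nu-1|$. By Corollary~\ref{ess.s.a} the first is always in limit-point case at $x=0$, while the second is in limit-circle case precisely when $|\nu-1| < 1$, i.e.\ $\nu < 2$, yielding the range condition $k \in (m/2-2,\,m/2+2)$ and $\lambda < 4 - (k-m/2)^2$.

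The main obstacle is the apparent tension between two natural descriptions of the middle Laplacian: the Hodge-theoretic splitting of its non-zero spectrum into two copies of a scalar problem with parameter $\nu$ (which drives the determinant factorization) versus the $M_0$-diagonalization, whose effective parameters $\nu \pm 1$ drive the essential self-adjointness analysis. Reconciling these two viewpoints carefully, and confirming that [L, Theorem 1.2] genuinely extends Corollaries~\ref{1} and~\ref{4} into the range $\nu \in [1,2)$ where they are invoked but no longer sit in the limit-circle setting of Section~\ref{funct-Det}, is the subtle technical heart of the argument.
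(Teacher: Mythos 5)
Your proposal is correct and takes essentially the same route as the paper: the determinant is obtained from the Hilbert-complex splitting of the nonzero spectrum of the middle Laplacian of the subcomplex into the two end Laplacians, which are exactly the realizations of Corollaries \ref{4} and \ref{1} (equivalently Propositions \ref{prop3} and \ref{prop4} shifted in degree, valid for all $\nu\geq 0$ via [L, Theorem 1.2]), while the contribution range follows from diagonalizing the $x^{-2}$-potential, whose eigenvalues $(\nu\pm 1)^2-\tfrac{1}{4}$ coincide with the paper's $a_{\pm}^k(\lambda)$ computed from the $S_0$-eigenvalues quoted from [BL2]. The only cosmetic differences are that you compute the $2\times 2$ matrix $M_0$ directly rather than citing [BL2, Section 2], and that the extension of Corollaries \ref{1} and \ref{4} to $\nu\in[1,2)$, which you flag as delicate, is already built into their statements and proofs, so it is not an outstanding gap.
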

\begin{proof}
The space $\widetilde{E}_{\la}^{k-1}, \la \in V_{k-1}$ is an orthogonal sum of $S_0$-eigenspaces to eigenvalues (see [BL2, Section 2]) $$p_{\pm}^k(\la):=\frac{(-1)^k}{2}\pm \sqrt{\left(k-\frac{m}{2}\right)^2+\lambda}.$$ Put $$a_{\pm}^k(\la):=p_{\pm}^k(\la) \cdot (p_{\pm}^k(\la)+(-1)^k).$$ The restriction of $\triangle_k$ to $C^{\infty}_0((0,1),\widetilde{E}_{\la}^{k-1})$ decomposes into 
$$\left(-\frac{d^2}{dx^2}+\frac{a_+^k(\la)}{x^2}\right)\oplus \left(-\frac{d^2}{dx^2}+\frac{a_-^k(\la)}{x^2}\right)$$ in correspondence to the decomposition of $\widetilde{E}_{\la}^{k-1}$ into the $S_0$-eigenspaces. This decomposition is not compatible with the relative boundary conditions, which is clear from the relative boundary conditions at the cone base. Nevertheless we infer from the decomposition, that the restriction of $\triangle_k$ to $C^{\infty}_0((0,1),\widetilde{E}_{\la}^{k-1})$ contributes to $L_k$ only if $$\left|k-\frac{m}{2}\right|<2, \quad \la \leq 4-\left(k-\frac{m}{2}\right)^2,$$ since for the complementary case both $a_{+}^k(\la)$ and $a_{-}^k(\la)$ are $\geq 3/4$. This proves the first part of the statement. In order to compute the contribution of the component to the determinant of $\mathcal{L}_k^{rel}$, we study as in \eqref{subcomplex} the associated de Rham complex:
\begin{align*}
0\!\to\! C^{\infty}_0((0,1), \{0\}\!\oplus\! E_{\la}^{k-1})\!\xrightarrow{d_0}\!C^{\infty}_0((0,1), \widetilde{E}_{\la}^{k-1})\!\xrightarrow{d_1}\!C^{\infty}_0((0,1), d_N E_{\la}^{k-1}\!\oplus\! \{0\})\!\to\!  0.
\end{align*}
Note that $d_0^td_0$ and $d_1d_1^t$ both act as rank-one model Laplacians
$$-\frac{d^2}{dx^2}+\frac{1}{x^2}\left(\la +\left[k-\frac{1}{2}-\frac{n}{2}\right]^2-\frac{1}{4}\right),$$
under the identification of elements with their scalar parts, as before. The relative boundary conditions turn the complex into a Hilbert complex with the corresponding self-adjoint extensions of $d_0^td_0$ and $d_1d_1^t$ denoted by $\triangle_0$, $\triangle_1$ respectively. The contribution to the determinant of $\mathcal{L}_k^{rel}$ is then given by $$\det\nolimits_{\zeta}\triangle_0 \cdot \det\nolimits_{\zeta}\triangle_1.$$
We obtain with $\nu:=\sqrt{\la +(k-m/2)^2}$
\begin{align*}
\dom (\triangle_0)=\{f \in \dom (\triangle_{\nu-1/2,\max})| f(x)=O(\sqrt{x}),x\to 0; f(1)=0\}, \\
\dom (\triangle_1)=\{f \in \dom (\triangle_{\nu-1/2,\max})| f(x)=O(\sqrt{x}),x\to 0, \\ f'(1)-(k-n/2)f(1)=0\}.
\end{align*}
For $\nu\in (0,1)$ these domains were determined in Lemma \ref{relative0} and Lemma \ref{relative2}, with the asymptotics $f(x)=O(\sqrt{x}),x\to 0$ being expressed by $c_2(f)=0$. The coefficient $c_2(f)$ refers to the relation \eqref{laplacian-maxB}.
\\[3mm] For $\nu\geq 1$ the operator $\triangle_{\nu -1/2}$ is in the limit point case, see Corollary \ref{ess.s.a}. So the condition on the asymptotic behaviour at $x=0$ is redundant and the domains are determined only by the relative boundary conditions at $x=1$, which are computed in Proposition \ref{relative-regular}.
\\[3mm] Applying Corollaries \ref{1} and \ref{4} we obtain 
\begin{align*}
\det\nolimits_{\zeta}\triangle_{0}=\frac{\sqrt{2\pi}}{\Gamma (1+\nu)2^{\nu}},\\
\det\nolimits_{\zeta}\triangle_{1}=\sqrt{2\pi}\frac{\nu-k+m/2}{\Gamma (1+\nu)2^{\nu}}.
\end{align*}
Multiplication of both expressions gives the result.
\end{proof} \ \\
\\[-7mm] Before we write down an explicit expression for $\det_{\zeta}\mathcal{L}_k^{rel}$, let us introduce some simplifying notation. Put:
\begin{align}
&A_k:=\{\nu=\sqrt{\lambda +[k+1-m/2]^2}|\lambda \in \textup{Spec}\triangle_{k,ccl,N}, \nonumber \\ &\hspace{30mm}0\leq \lambda < 1-[k+1-m/2]^2\}, \label{A_k}\\
&\widetilde{A}_k:=\{\nu=\sqrt{\lambda +[k+1-m/2]^2}|\lambda \in \textup{Spec}\triangle_{k,ccl,N}\backslash \{0\}, \nonumber \\ &\hspace{30mm}0< \lambda < 1-[k+1-m/2]^2\}, \label{A-tilde_k}\\
&B_k:=\{\nu=\sqrt{\lambda +[k-m/2]^2}|\lambda \in \textup{Spec}\triangle_{k-1,ccl,N}\backslash \{0\}, \nonumber \\ &\hspace{30mm}0< \lambda < 4-[k-m/2]^2\}. \label{B_k}
\end{align}
Moreover we write 
\begin{align}
P_k:=\left\{\begin{array}{rl} (\sqrt{\pi /2})^{\dim \mathcal{H}^{k-1}(N)}, & \textup{for $k=m/2+1$ if $\dim M=m$ even}, \nonumber \\
2^{\dim \mathcal{H}^{k-1}(N)}, & \textup{for $k=n/2+1$ if $\dim M=n+1$ odd}, \nonumber \\
(2/3)^{\dim \mathcal{H}^{k-1}(N)}, & \textup{for $k=n/2+2$ if $\dim M=n+1$ odd}.
\end{array}\right. \nonumber \\
\label{P_k-loya}
\end{align}
The preceeding computations imply that $L_k$ is a finite direct sum of model Laplace operators, a regular-singular Sturm-Liouville operator with matrix coefficients, and in fact does not occur for $|k-m/2|\geq 2$. This corresponds to the general fact, see [BL2, Theorem 3.7, Theorem 3.8] that the Laplace operator on $k-$forms over a compact manifold with an isolated singularity is "essentially self-adjoint" at the cone tip outside of the middle degrees, i.e. for $|k-m/2|\geq 2$. 
\\[3mm] Therefore the complete determinant of $\mathcal{L}_k^{rel}$ is given simply by a product of finitely many contributions, determined in Propositions \ref{prop1} $-$ \ref{prop5}, depending on the choice of a degree. This proves the central result of this subsection.
\begin{thm}\label{total-finite-determinant}
The self-adjoint operator $\mathcal{L}_k^{rel}$ is non-trivial only for degrees $$k\in \left(\frac{m}{2}-2,\frac{m}{2}+2\right).$$
In these degrees the zeta-determinant of $\mathcal{L}_k^{rel}$ is given as follows, where we use the notation established in \eqref{A_k} $-$ \eqref{P_k-loya}:
\begin{enumerate}
\item For $k\in (m/2-2,m/2)$ we have
\begin{align*}
\det\nolimits_{\zeta}\mathcal{L}_k^{rel}=\prod\limits_{\nu \in A_k}\frac{\sqrt{2\pi}}{2^{\nu}\Gamma(1+\nu)}\prod\limits_{\nu \in B_k}2\pi \frac{\nu-k+m/2}{2^{2\nu}\Gamma(1+\nu)^2}.
\end{align*}
\item For $k\in (m/2, m/2+2)$ we have
\begin{align*}
\det\nolimits_{\zeta}\mathcal{L}_k^{rel}=\prod\limits_{\nu \in \widetilde{ A}_{k-2}}\frac{\sqrt{2\pi}(\nu+m/2+1-k)}{2^{\nu}\Gamma(1+\nu)}\prod\limits_{\nu \in B_k}2\pi \frac{\nu-k+m/2}{2^{2\nu}\Gamma(1+\nu)^2}\cdot P_k.
\end{align*}
\item For $\dim M=m$ even and $k=m/2$ we have
\begin{align*}
\det\nolimits_{\zeta}\mathcal{L}_k^{rel}=\prod\limits_{\nu \in B_k}2\pi \frac{\nu}{2^{2\nu}\Gamma(1+\nu)^2}.
\end{align*} 
\end{enumerate}
\end{thm}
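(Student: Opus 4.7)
My plan is to assemble $\det\nolimits_\zeta \mathcal{L}_k^{rel}$ as a product of the contributions already established in Propositions \ref{prop1}--\ref{prop5}. The conceptual framework is in place: by Theorem \ref{decomposition2}, the orthogonal decomposition \eqref{decomposition1} is compatible with $\triangle_k^{rel}$, so Proposition \ref{lemma-compatibility} splits $\triangle_k^{rel}$ into an orthogonal direct sum of self-adjoint pieces, one per summand. By definition, $\mathcal{L}_k^{rel}$ collects exactly those pieces that are not essentially self-adjoint at $x=0$, and these are precisely the pieces analyzed in Propositions \ref{prop1}--\ref{prop5}. Multiplicativity of the zeta-determinant over this finite orthogonal sum (finiteness follows from the range restrictions in each proposition) then reduces the theorem to a matching of indexing sets.

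Next I would carry out the case analysis on $k$. For $|k-m/2| \geq 2$, a direct check against the five propositions shows that each of their admissible $k$-ranges is violated, so every component is essentially self-adjoint at $x=0$ and $\mathcal{L}_k^{rel}$ is trivial; this is consistent with the classical fact that the Laplacian is essentially self-adjoint at the tip outside the middle degrees. For $k \in (m/2-2, m/2)$ only Propositions \ref{prop1}, \ref{prop3}, and \ref{prop5} can contribute, while for $k \in (m/2, m/2+2)$ only Propositions \ref{prop2}, \ref{prop4}, \ref{prop5} can. In the boundary case $k = m/2$ with $m$ even, one has $|k+1-m/2| = |k-1-m/2| = 1$, placing the rank-one model Laplacians of Propositions \ref{prop1}--\ref{prop4} exactly at the limit-point threshold $\nu^2-1/4 = 3/4$, so none of them contribute and only Proposition \ref{prop5} remains.

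In each nonempty case I would read off the formula by matching indexing sets. For $k \in (m/2-2, m/2)$ the harmonic contribution of Proposition \ref{prop1} corresponds to the $\lambda=0$ element of $A_k$, while the $E_\lambda^k$ contribution of Proposition \ref{prop3} supplies the $\lambda > 0$ elements of $A_k$; since both propositions deliver the same formal factor $\sqrt{2\pi}/(2^\nu \Gamma(1+\nu))$, they combine into a single product over $A_k$, with the $B_k$ factor coming from Proposition \ref{prop5}. For $k \in (m/2, m/2+2)$ the $\lambda > 0$ contribution of Proposition \ref{prop4} yields the product over $\widetilde{A}_{k-2}$, and the $\lambda = 0$ piece from harmonic $(k-1)$-forms is handled separately by Proposition \ref{prop2}, producing the discrete factor $P_k$. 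This is why $\widetilde{A}_{k-2}$ (which excludes $\lambda=0$) rather than the full $A_{k-2}$ appears in statement (ii). For $k = m/2$ with $m$ even, the $B_k$ product from Proposition \ref{prop5} specializes via $k-m/2 = 0$ to the stated formula.

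The principal difficulty I anticipate is not any single calculation but verifying the matching between the sets $A_k$, $\widetilde{A}_{k-2}$, $B_k$ and the five propositions: checking that no summand is dropped or counted twice, and that the $\lambda=0$ versus $\lambda>0$ split is correctly allocated between the harmonic-form propositions (\ref{prop1}, \ref{prop2}) and the coclosed-eigenspace propositions (\ref{prop3}, \ref{prop4}). A careful tabulation, running through each of the finitely many admissible degrees and, for case (ii), through the possible values of $k$ entering $P_k$ according to the parity of $m$, then completes the argument.
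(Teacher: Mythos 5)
Your proposal is correct and follows essentially the same route as the paper: the paper's proof is precisely the observation that, by the compatible decomposition, $\det_\zeta\mathcal{L}_k^{rel}$ is the finite product of the contributions computed in Propositions \ref{prop1}--\ref{prop5}, assembled degree by degree into the sets $A_k$, $\widetilde{A}_{k-2}$, $B_k$ and the factor $P_k$. Your bookkeeping of the $\lambda=0$ versus $\lambda>0$ split (harmonic pieces via Propositions \ref{prop1}, \ref{prop2}, coclosed eigenvalue pieces via Propositions \ref{prop3}--\ref{prop5}) and of the boundary case $k=m/2$ matches the paper's argument.
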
 

\section{References}\
\\[-1mm] [AS] Editors M. Abramowitz, I.A. Stegun \emph{"Handbook of math. Functions"} AMS. 55.
\\[3mm] [BGKE] M. Bordag, B. Geyer, K. Kirsten, E. Elizalde \emph{"Zeta function determinant of the Laplace Operator on the D-dimensional ball"}, Comm. Math. Phys. 179, no. 1, 215-234, (1996)
\\[3mm] [BKD] M. Bordag, K. Kirsten, J.S. Dowker \emph{"Heat-kernels and functional determinants on the generalized cone"}, Comm. Math. Phys. 182, 371-394, (1996)
\\[3mm] [BL1] J. Br\"{u}ning, M. Lesch \emph{"Hilbert complexes"}, J. Funct. Anal. 108, 88-132 (1992)
\\[3mm] [BL2] J. Br\"{u}ning, M. Lesch \emph{"K\"{a}hler-Hodge Theory for conformal complex cones"}, Geom. Funct. Anal. 3, 439-473 (1993)
\\[3mm] [Br] J. Br\"{u}ning \emph{$L^2$-index theorems for certain complete manifolds}, J. Diff. Geom. 32 491-532 (1990)
\\[3mm] [BS] J. Br\"{u}ning, R. Seeley \emph{"An index theorem for first order regular singular operators"}, Amer. J. Math 110, 659-714, (1988)
\\[3mm] [BS2] J. Br\"{u}ning, R. Seeley \emph{"Regular Singular Asymptotics"}, Adv. Math. 58, 133-148 (1985)
\\[3mm] [BS3] J. Br\"{u}ning, R. Seeley \emph{"The resolvent expansion for second order Regular Singular Operators"}, J. Funct. Anal. 73, 369-429 (1987)
\\[3mm] [BV2] B. Vertman \emph{"Analytic Torsion of a bounded generalized cone"}, preprint, arXiv:0808.0449 (2008)
\\[3mm] [C] C. Callias \emph{"The resolvent and the heat kernel for some singular boundary problems"}, Comm. Part. Diff. Eq. 13, no. 9, 1113-1155 (1988)
\\[3mm] [Ch1] J. Cheeger \emph{On the spectral geometry of spaces with conical singularities}, Proc. Nat. Acad. Sci. 76, 2103-2106 (1979)
\\[3mm] [Ch2] J. Cheeger \emph{"Spectral Geometry of singular Riemannian spaces"}, J. Diff. Geom. 18, 575-657 (1983)
\\[3mm] [DK] J.S. Dowker and K. Kirsten \emph{"Spinors and forms on the ball and the generalized cone"}, Comm. Anal. Geom. Volume 7, Number 3, 641-679, (1999)
\\[3mm] [GRA] I.S. Gradsteyn, I.M Ryzhik, Alan Jeffrey \emph{"Table of integrals, Series and Products"}, 5th edition, Academic Press, Inc. (1994)
\\[3mm] [KLP1] K. Kirsten, P. Loya, J. Park \emph{"Functional determinants for general self-adjoint extensions of Laplace-type operators resulting from the generalized cone"}, Manuscripta Mathematica 125, 95-126, (2008)
\\[3mm] [KLP2] K. Kirsten, P. Loya, J. Park, with an Appendix by B. Vertman \emph{"Exotic expansions and pathological properties of zeta-functions on conic manifolds"}, Journal of Geometric Analysis 18, 835-888, (2008) 
\\[3mm] [L] M. Lesch \emph{"Determinants of regular singular Sturm-Liouville operators"}, Math. Nachr. (1995)
\\[3mm] [L1] M. Lesch \emph{"Operators of Fuchs Type, Conical singularities, and Asymptotic Methods"}, Teubner, Band 136 (1997)
\\[3mm] [L3] M. Lesch \emph{"The analytic torsion of the model cone"}, Columbus University (1994), unpublished notes.
\\[3mm] [LMP] P. Loya, P. McDonald, J. Park \emph{Zeta regularized determinants for conic manifolds}, J. Func. Anal. (2006)
\\[3mm] [M] E. Mooers \emph{"Heat kernel asymptotics on manifolds with conic singularities"} J. Anal. Math. 78, 1-36 (1999) 
\\[3mm] [O] F.W. Olver \emph{"Asymptotics and special functions"} AKP Classics (1987)
\\[3mm] [P] L. Paquet \emph{"Probl'emes mixtes pour le syst'eme de Maxwell"}, Annales Facultè des Sciences Toulouse, Volume IV, 103-141 (1982)
\\[3mm] [ReS] M. Reed, B. Simon \emph{"Methods of Mathematical Physics"}, Vol. II, Acad.N.J. (1979)
\\[3mm] [Ru] W. Rudin \emph{"Functional Analysis"}, Second Edition, Mc. Graw-Hill, Inc. Intern. Series in pure and appl. math. (1991)
\\[3mm] [W] J. Weidmann \emph{"Spectral theory of ordinary differential equations"}, Springer-Verlag Berlin-Heidelberg, Lecture Notes in Math. 1258, (1987)
\\[3mm] [W2] J. Weidmann \emph{"Linear Operators in Hilbert spaces"}, Springer-Verlag, New York, (1980)
\\[3mm] [WT] G.N. Watson \emph{"A treatise on the theory of Bessel functions"}, Camb. Univ. Press (1922)
\\[3mm] [WW] E. T. Whittaker, G.N. Watson \emph{"A course on modern analysis"}, Camb. Univ. Press (1946)

\end{document}